%% file: ICML_main.tex
\documentclass{article}

\usepackage{microtype}
\usepackage{graphicx}
\usepackage{subcaption}
\usepackage{booktabs} %

\usepackage{hyperref}

\usepackage[accepted]{icml2026}

\usepackage{amsmath}
\usepackage{amssymb}
\usepackage{mathtools}
\usepackage{amsthm}

\usepackage[capitalize,noabbrev]{cleveref}

\theoremstyle{plain}
\newtheorem{theorem}{Theorem}[section]
\newtheorem{proposition}[theorem]{Proposition}
\newtheorem{lemma}[theorem]{Lemma}
\newtheorem{corollary}[theorem]{Corollary}

\theoremstyle{definition}
\newtheorem{definition}[theorem]{Definition}

\newtheorem{example}[theorem]{Example}
\theoremstyle{remark}
\newtheorem{remark}[theorem]{Remark}

\usepackage[textsize=tiny]{todonotes}

\input{preamble/preamble}

\input{preamble/math_commands}

\usepackage{amssymb}
\usepackage{amsmath}
\usepackage{amsthm}
\usepackage{dsfont}
\usepackage{tikz}
\usetikzlibrary{positioning}
\usetikzlibrary {arrows.meta}

\icmltitlerunning{Adaptive Contracts for Cost-Effective AI Delegation}

\begin{document}

\twocolumn[
    \icmltitle{
Adaptive Contracts for Cost-Effective AI Delegation
    }

  \icmlsetsymbol{equal}{*}
  \icmlsetsymbol{alphabetical}{*}

  \begin{icmlauthorlist}
    \icmlauthor{Eden Saig}{technion,caltech}
    \icmlauthor{Tamar Garbuz}{alphabetical,tau}
    \icmlauthor{Ariel D. Procaccia}{alphabetical,harvard}
    \icmlauthor{Inbal Talgam-Cohen}{alphabetical,tau,technion}
    \icmlauthor{Jamie Tucker-Foltz}{alphabetical,harvard,yale}
  \end{icmlauthorlist}

\icmlaffiliation{tau}{Tel Aviv University}
\icmlaffiliation{harvard}{Harvard University}
\icmlaffiliation{technion}{Technion -- Israel Institute of Technology}
\icmlaffiliation{caltech}{California Institute of Technology}
\icmlaffiliation{yale}{Yale School of Management}

\icmlcorrespondingauthor{Eden Saig}{edens@caltech.edu}
\icmlcorrespondingauthor{Tamar Garbuz}{garbuz@mail.tau.ac.il}
\icmlcorrespondingauthor{Ariel Procaccia}{arielpro@gmail.com}
\icmlcorrespondingauthor{Inbal Talgam-Cohen}{inbaltalgam@gmail.com}
\icmlcorrespondingauthor{Jamie Tucker-Foltz}{j.tuckerfoltz@yale.edu}

  \icmlkeywords{Contract Design, Adaptive Inspection, Delegation, Moral Hazard, LLM Evaluation}

  \vskip 0.3in
]

\printAffiliationsAndNotice{\textsuperscript{*}After Eden Saig, authors are listed in alphabetical order.\\}

\begin{abstract}

\input{body/abstract}

\end{abstract}

\input{body/body}

\section*{Impact Statement}
This paper presents work whose goal is to advance the field of machine learning by optimizing the allocation of evaluation resources in delegation settings. Our adaptive contracts framework increases the economic efficiency of AI task delegation, potentially reducing the computational and human costs associated with quality assurance. Beyond these contributions to the efficiency and reliability of AI service markets, we are not aware of any societal consequences that require specific discussion.

\section*{Acknowledgments}
\input{body/acknowledgements}

\bibliography{citations}
\bibliographystyle{icml2026}

\newpage
\appendix
\onecolumn
\input{appendix/appendix_main}

\end{document}

%% file: preamble/preamble.tex
\usepackage{apptools}
\usepackage{parskip}
\usepackage{graphicx}
\usepackage{xcolor}
\usepackage{tikz}
\usepackage{xurl}

\usepackage{hyperref}
\hypersetup{
    colorlinks,
    linkcolor={red!50!black},
    citecolor={blue!50!black},
    urlcolor={blue!80!black},
    pdftitle={Adaptive Contracts for Cost-Effective AI Delegation}
}
\usepackage{cleveref}

\newcommand{\Size}[1]{{\left|{#1}\right|}}
\newcommand{\abs}[1]{\left|{#1}\right|}

\newcommand{\UnitInterval}{{\left[0,1\right]}}

\newcommand{\BigO}[1]{O\left({#1}\right)}
\newcommand{\Prompt}{w_0}
\newcommand{\Response}{w_R}

\newcommand{\suchthat}{\ | \ }
\newcommand{\tth}{^\text{th}}

\newcommand{\stext}[1]{\ \ \ \ \ \text{(#1)}}
\newcommand{\stextn}[1]{\\&\ \ \ \ \ \ \stext{#1}}

\newcommand{\bq}{\mathbf q}
\newcommand{\bc}{\mathbf c}
\newcommand{\bd}{\mathbf d}
\newcommand{\bp}{\mathbf p}
\newcommand{\bs}{\mathbf s}
\newcommand{\bt}{\mathbf t}
\newcommand{\br}{\mathbf r}
\newcommand{\bv}{\mathbf v}
\newcommand{\Set}[1]{\{#1\}}

\newcommand{\expect}[2]{\mathbb{E}_{#1}\left[#2\right]}

%% file: preamble/math_commands.tex
\usepackage{amsmath,amsfonts,bm}

\def\eqref#1{equation~\ref{#1}}
\def\Eqref#1{Equation~\ref{#1}}

\def\1{\bm{1}}

\DeclareMathAlphabet{\mathsfit}{\encodingdefault}{\sfdefault}{m}{sl}
\SetMathAlphabet{\mathsfit}{bold}{\encodingdefault}{\sfdefault}{bx}{n}

\def\gG{{\mathcal{G}}}

\newcommand{\E}{\mathbb{E}}

\DeclareMathOperator*{\argmax}{arg\,max}

%% file: body/abstract.tex
\medskip
When organizations delegate text generation tasks to AI providers via pay-for-performance contracts, expected payments rise when evaluation is noisy. 
As evaluation methods become more elaborate, the economic benefits of decreased noise are often overshadowed by increased evaluation costs. 
In this work, we introduce \emph{adaptive contracts} for AI delegation, which allow detailed evaluation to be performed selectively after observing an initial coarse signal in order to conserve resources.
We make three sets of contributions:
First, we provide efficient algorithms for computing optimal adaptive contracts under natural assumptions or when core problem dimensions are small, and prove hardness of approximation in the general unstructured case. We then formulate alternative models of randomized adaptive contracts and discuss their benefits and limitations. Finally, we empirically demonstrate the benefits of adaptivity over non-adaptive baselines using question-answering and code-generation datasets.

%% file: body/body.tex
\section{Introduction}

The delegation of tasks to LLMs is becoming increasingly prevalent, including high-stake tasks such as summarizing medical records or modifying critical software systems. For such tasks, organizations typically desire the best (and most costly) model available. However, a salient issue is lack of transparency---models are typically offered as black-boxes operated by LLM service providers, who interally operate the computational infrastructure required for inference. This creates a market failure known as \emph{moral hazard}.

Moral hazard incentivizes providers to cut costs by covertly using cheaper, lower-quality models behind the scenes.
This phenomenon is also known as \emph{model substitution}~\citep{CaiSZS25} or \emph{quality downgrade}~\citep{SunWZ+25}. %
It is tightly linked to current LLM monetization---as Sun et al.~put it, while bills are visible to the user, the tokens being billed are not.
Cai et al.~investigate a host of methods for auditing model substitution, concluding that detection based solely on LLM responses is hard. They suggest more radical solutions, requiring the release of statistical data or the adoption of hardware-assisted verification. 

Against this backdrop, enter \emph{contract design}, a field of economics dedicated to mitigating moral hazard through better, \emph{performance-based} pricing~\cite{holmstrom1987aggregation}. Our starting point is work showing that this tool is well-worth exploring in the context of LLMs~\citep{SaigET24,VelascoTOG25}. Whether for organization with market power who can negotiate their own payment scheme (e.g., a large healthcare vendor), or for LLM service providers looking to improve their competitiveness by adopting more user-aligned pricing, the lens of contract design highlights new ways of monetizing LLMs while benefitting consumers.

Pay-for-performance pricing requires performance evaluation, which in our context is quality evaluation of the LLM-generated response. The rapidly-growing literature on LLM evaluation and LLM benchmarks indicates that evaluation is not easy and is certainly not without cost~\cite{stureborg2024large,shi2024judging,Chenetal24,FishSL+25}. Apart from direct costs, there is an indirect loss due to the inevitable noisiness of evaluation, which puts the LLM provider at risk of investing in a high-cost model, but receiving low performance evaluation. Due to this risk, the provider might under-invest, causing inefficiency.

Our goal in this work is to study the cost-benefit trade-off between \emph{coarse} performance evaluation and its risks, and \emph{refined but costly} evaluation on the other hand. To achieve this, 
we follow classic economic literature on monitoring, and extend the classic contract design framework to allow for adaptive acquisition of performance information. We formulate an \emph{adaptive contract} setting: a simple setup in which there are two levels of evaluation, the second one refining the first but at additional cost. The decision whether to acquire the second evaluation is adaptive, i.e., may depend on the result of the first evaluation.
The following two examples illustrate scenarios captured by our setting:

\smallskip
\begin{example}[Costly secondary evaluation]
\label{ex:two-tier}
Consider the task of question answering by an LLM, which can use increasingly sophisticated and costly models, say GPT-5 nano, GPT-5 mini, and GPT-5.2.
The user cannot directly observe which model is being used.
Some evaluation methods are costless but noisy, like checking the length of the output, or the existence of different keywords. Other methods such as LLM-as-a-judge or human annotation are more accurate, but expensive \citep[e.g.,][]{zheng2023judging, alpaca_eval}. 
In two-tiered evaluation, we may first check e.g.~the response length, then decide whether to pay for a costly annotation.
\end{example}
\smallskip
\begin{example}[Adaptive randomized evaluation]
\label{ex:indep-evals}
Consider a large organization seeking to deploy a language model to fix open issues (`bugs') in a large codebase. Attempting to resolve all existing issues is typically infeasible at the time of transaction, and therefore quality is typically evaluated on a set of issues chosen at random. In two-tiered evaluation, we may conserve resources by running a small set of tests first, then optionally run a larger suite based on initial results.

\end{example}

Note that in
Examples~\ref{ex:two-tier} and \ref{ex:indep-evals}, the user needs to decide (i)~whether to further inspect the initial signal to get a refined one, and (ii)~how much to pay the provider based on performance evaluation(s). 
We call the combination of inspection and payment schemes an \emph{adaptive contract}, and study how to optimize its design for the user.

\textbf{Contributions.}
We characterize the computational complexity of finding optimal adaptive contracts. Such contracts exhibit unique economic properties that differentiate them from previously-studied contract classes: in particular, there exist settings where no such contract is able to extract \emph{any} utility for the user, despite strictly-positive utility of the provider from the interaction.%
\footnote{In standard (non-adaptive) contract design, a simple linear contract always guarantees positive utility; but in an adaptive setting, inspection may be necessary for non-zero utility, and its cost might cancel out the otherwise-guaranteed positive utility.} Due to these properties, optimizing over this class of contracts poses new challenges.
\emph{As our positive result}, we show in \Cref{sec:computing_optimal_contracts} that the problem of adaptive contract optimization admits a polynomial-time solution in two key practical settings: (i) when either the number of available generative models or the number of evaluation results is constant (see Theorem~\ref{thm:constant_actions_polytime}), and (ii) when the evaluations are independent, as in Example~\ref{ex:indep-evals} (see Theorem~\ref{thm:mlrp_isop_tractability}). 
\emph{On the negative side}, in Section~\ref{sec:hardness} we show that outside of the above two settings, the design of adaptive contracts becomes NP-hard, and the hardness extends even to approximation by any constant factor. 
In this sense, our positive results are tight: the problem is intractable beyond constant dimensions and independent evaluations (see Theorems~\ref{thmHardnessOfApproximation}~and~\ref{thmSecondOpinionHardness}).

\newpage
In \Cref{sec:randomized}, we extend our class of adaptive contracts to allow for \emph{randomized inspection}, where a coin-flip determines whether or not the user will acquire a refined evaluation. Intuitively, randomized inspection can lower costs for the user. This turns out to work  unrealistically well, effectively driving inspection costs to zero: in theory, the user can continue to reduce expected inspection costs ad infinitum, by promising increasingly high payments contingent on the inspection, but inspecting with vanishing probabilities. In game theoretic terms, the issue is that the resulting game 
does not admit a Stackelberg equilibrium (see Theorem~\ref{thmCoMInoStackelberg}).

Motivated by this negative property, we propose two alternative restrictions on the user, which rule out the unrealistic capability to inspect at no cost, and restore equilibrium guarantees: %
(1)~The user cannot contractually commit to inspection probabilities; indeed, such a commitment would be difficult to enforce in some scenarios.
(2)~The user cannot promise unbounded payments, since payments absent inspection upper-bound payments following inspection; indeed, in scenarios like Example~\ref{ex:two-tier}, a closer inspection of the code is likely to reveal bugs, leading to payment reductions. 
We prove that each of these restrictions is sufficient to restore equilibrium existence (see Theorem~\ref{thmPaymentGaps}), and that subject to either restriction, randomized inspection can only help the user---sometimes strictly so. 

Our experiments in Section~\ref{sec:empirical} show how adaptive contracts better help LLM users while still compensating LLM providers, as compared to current LLM pricing.

\paragraph{Related work.}
Our work contributes to the growing literature on algorithmic contract design---see \citet{dutting2024algorithmic} for a recent survey.
Closest to our work is that of \citet{Dye}, which considers a similar model of costly inspection, showing that under a list of assumptions in a stylized, one-dimensional model, the optimal contract admits a very simple inspection policy. Two of these assumptions are quite natural and also considered in this paper, namely MLRP and ISOP (see Section~\ref{sec:setting}). Other assumptions make less sense for our setting, such as strict risk-aversion and a monotonicity property between quality and inspection costs.

Also closely related are two papers which introduce contract design with \emph{action} inspection~\citep{fallah2024contract, ezra2026contract}.
Using the language of our work for easy comparison, action inspection means that the user can pay to learn the generative model chosen by the provider. In contrast, our setting allows \emph{outcome} inspection. I.e., for additional payment, the user can obtain a better quality indicator. More broadly, incorporating adaptive information acquisition places our work within a recent line of research studying information and contract design \citet{castiglioni2025hiring,babichenko2022information,garrett2023optimal}. In another direction, \cite{NeymanNW21} study how scoring rules incentivize adaptive information acquisition---but by the forecaster for which they are designed, rather than by the designer of the
contract as in our setting.

\section{Adaptive Contract Setting}
\label{sec:setting}

In this section, we present an economic setting that generalizes classic contract design to allow for refined performance evaluation, and formalize the connection to LLMs . Our setting is illustrated in Figure~\ref{fig:model}.

\textbf{Notation.} For an $n\times m$ matrix $\bq$ and indices $i\in[n],j\in[m]$, let $\bq_i$ be the $i$th row and let $q_{i,j}$ be the $j$th entry of $\bq_i$.

\textbf{Setting.} 
The setting consists of two players, a \emph{principal} and an \emph{agent}---the user and provider in the LLM scenario.
In alignment with contract theory conventions,
the agent has a set of possible \emph{actions} $[n]$ with increasing costs $0\le c_1\le\dots\le c_n$ (borne by the agent). 
In the LLM scenario, each action represents a generative model, and its associated cost reflects the computational resources required for inference.
After the agent chooses an action $i\in[n]$, the principal receives an initial
\emph{signal} $k\in[\ell]$, drawn from a distribution $\bq^0_i$. In the LLM scenario, the signal is the coarse evaluation.

Now diverging from standard contract settings, the principal adaptively decides whether to further inspect signal $k$ 
at cost $d_k \geq 0$ (borne by the principal).
Inspection reveals an \emph{outcome} $j\in[m_k]$ drawn from a distribution $\bq^k_i$ 
(note that $\bq^k_i$ depends on signal $k$; in Section~\ref{sec:computing_optimal_contracts} we consider the case in which the outcome $j$ and signal $k$ are independent).
In the LLM scenario, the outcome is the costly evaluation. The signal-outcome pair $(k,j)$ determines the \emph{reward} $r_{k,j}$ of the principal from the agent's action. In the LLM scenario, the reward is the value eventually generated for the user by the LLM. 
An adaptive contract setting is summarized by a tuple $(\bq^0, \bq^1,\dots, \bq^\ell, \bc, \bd, \br)$, where $\bq^0$ is an $n\times \ell$ matrix whose $i$th row is the signal distribution $\bq^0_i$, and $\bq^k$ for $k\in[\ell]$ is an $n\times m_k$ matrix whose $i$th row is the outcome distribution $\bq^k_i$. Vector $\bc$ contains the action costs, and vector $\bd$ the signal inspection costs.
An $\ell\times \overline{m}$ matrix $\br$ contains the rewards for every (signal, outcome) pair $(k,j)$, where $\overline{m}=\max_{k\in[\ell]}{m_k}$ (and $\bot$ if $j>m_k$). %

\textbf{MLRP.} 
We adopt an assumption standard to contract settings (e.g.,~\cite{dutting2024algorithmic}) called \emph{Monotone Likelihood Ratio Property (MLRP)}. 
An $n\times m$ matrix $\bq$ whose rows are distributions is said to satisfy MLRP if for every pair of rows $i< i'$, the likelihood ratio $q_{i',j'} / q_{i,j'}$ is increasing in $j'\in[m]$. 
In an adaptive contract setting, matrices $\bq^0, \bq^1,\dots, \bq^\ell$ are assumed to satisfy MLRP.
In the LLM scenario, this means that higher evaluations are more likely given a high-cost generative model than a low-cost one. Theorem~\ref{thmSecondOpinionHardness} shows how our results change in absence of MLRP.\looseness=-1

\textbf{Adaptive contract.} 
Before the interaction begins, the principal commits to an adaptive contract $(\bp,\bs,\bt)$ consisting of inspection and payment policies. %
Binary vector $\bp\in\Set{0,1}^\ell$ indicates whether to inspect each signal $k\in[\ell]$ (in Section~\ref{sec:randomized} we consider probabilistic inspections, i.e., $\bp \in[0,1]^\ell$). Vector $\bs\ge0$ contains the payments to the agent for signals that are not inspected. 
Finally, $t_{k,j}\ge0$ is the payment to the agent for signal-outcome pair $(k,j)$.
The assumption that all payments are non-negative is known as \emph{limited liability} and is fundamental in contract settings \citep{dutting2024algorithmic}.

\textbf{Agent's strategic response.} 
Given an adaptive contract $(\bp,\bs,\bt)$, the expected payment from principal to agent for taking action $i$ is  
$T_i=T_i(\bp,\bs,\bt) := \mathbb{E}_{k \sim \bq^0_i} [ (1 - p_k) s_k + p_k \, \mathbb{E}_{j \sim \bq^k_i} [t_{k,j}] ]$ (we omit $(\bp,\bs,\bt)$ from the notation where clear from the context). 
The agent's expected utility for taking action $i$ is $U_A(i) := T_i - c_i$, and thus the agent's \emph{best response} is $i^*\in\argmax_i U_A(i)$
(where following the standard convention, ties are broken in favor of the principal).\looseness=-1

\textbf{Principal's goal.}
The principal's expected reward when the agent takes action $i$ is $R_i := \E_{k \sim \bq^0_i} [\E_{j \sim \bq^k_i} [{r_{k, j}}]]$,
and the expected inspection cost is $D_i := \E_{k \sim \bq^0_i} [p_k d_k]$.
For a given adaptive contract $(\bp,\bs,\bt)$, the principal’s expected utility is $U_P(\bp, \bs, \bt) = R_{i^*} - T_{i^*} - D_{i^*}$, where $i^*$ is the agent's best-response action.
A \emph{MIN-PAY} contract for action~$i$ maximizes the principal's expected utility while incentivizing the agent to choose action $i$, by minimizing the total expected payment $T_i+D_i$ of the principal subject to $i^*=i$.
Our aim is to design an \emph{optimal} adaptive contract maximizing $U_P(\bp, \bs, \bt)$. Alternatively, we aim for an optimal contract targeting the highest-cost action (the MIN-PAY contract for action $n$). 

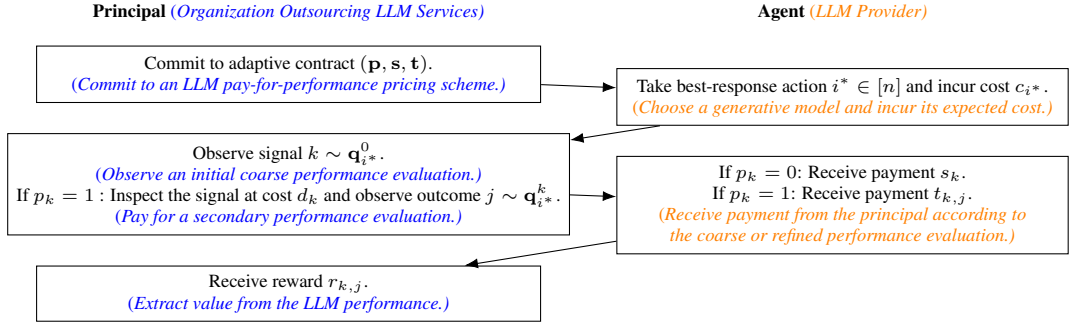
\begin{figure*}
    \centering
    \input{figures/interaction_model_tikz}
    \caption{Principal-agent interaction in an abstract adaptive contract setting. Colored text describes the application to AI task delegation.
    }
    \label{fig:model}
\end{figure*}

\section{Computing Optimal Adaptive Contracts}
\label{sec:computing_optimal_contracts}

In this section we seek algorithms for computing an optimal adaptive contract. While adaptivity poses new challenges, we identify two natural families of settings for which optimization is tractable.
Missing proofs appear in Appendix~\ref{app:tractability}.

\textbf{Challenge 1: Non-linear programming.} 
A common approach to computing optimal \emph{classic} contracts uses mathematical programming: 
For every action $i\in[n]$, find a MIN\nobreakdash-PAY contract for $i$ by solving a linear program; %
then pick the overall best action to incentivize. 
Computing an optimal \emph{adaptive} contract can also be reduced to finding $n$ MIN-PAY contracts, but this time finding such a contract requires solving a mixed-integer QCQP (quadratically-constrained quadratic program). %
These are generally intractable and require full enumeration over policies $\bp\in\Set{0,1}^\ell$ to find the best one, thus 
we defer their formulation %
to Appendix~\ref{appOptimizationProgram}.
In what follows, a useful fact is that if we fix the inspection policy $\bp$, then optimizing the payments $\bs,\bt$ once again reduces to a linear program solvable in polynomial time.

\textbf{Challenge 2: Inspection cost offsets value.} 
Consider the following example:

\smallskip
\begin{example}%
    Let $n=2$ (two available actions), $\ell=1$ (one coarse signal), $m_1=2$ (two inspection outcomes). 
    Given these dimensions we define an adaptive contract setting $(\bq^0, \bq^1,\bc, \bd, \br)$:
    For matrix $\bq^0$, necessarily $\bq^0_1=\bq^0_2=(1)$, since the single signal is observed with probability $1$, regardless of the action. For matrix $\bq^1$, let $\bq^1_1 = (1, 0)$ and $\bq^1_2=(0,1)$; this means that inspecting the signal reveals the action, by deterministically leading to outcome $1$ given action $1$, and to outcome $2$ given action $2$. 
    Let the action costs be $\bc = (0,1)$, and the signal cost $d_1=1$. Finally, only the second outcome is rewarding: $\br = (0, 2)$.
\end{example}

To incentivize the first action, the MIN-PAY contract need not inspect the signal, and no payment is necessary. %
To incentivize the second action, the MIN-PAY contract must inspect the signal at cost $d_1=1$, and pay $t_{1,2} = 1$ for outcome~$2$ and $0$ otherwise. %
In both cases, the principal's utility is $0$, so either contract is optimal. Interestingly, the principal-agent interaction yields no utility, despite the interaction's economic value if action $2$ is incentivized. Action $2$ leads to reward $2$ and costs only $1$; in economic terms, the \emph{first best} (achievable through perfect cooperation) is $2-1=1$. The underlying reason for zero utility despite positive first best is that the initial signal is uninformative. Thus, action $2$ cannot be incentivized without further inspection, and the cost of inspection exactly offsets the first best. In comparison, in classic settings without inspection, a simple \emph{linear} contract already guarantees the principal a fraction of the first best. This demonstrates that adaptive contract settings exhibit distinct economic properties, raising new challenges.

\textbf{Family 1: Settings with constant-many actions.}
We show that the optimal adaptive contract problem is tractable with constant-many actions. (It is not hard to show a similar result for constant-many evaluation outcomes.)  

\smallskip
\begin{theorem} \label{thm:constant_actions_polytime}
Consider adaptive contract settings with constant-many actions. 
Then an optimal adaptive contract can be computed in polynomial time.
\end{theorem}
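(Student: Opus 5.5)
The plan is to reduce to $n$ MIN-PAY problems, one for each target action $i\in[n]$, and keep the best of the $n$ resulting contracts. For a fixed target $i$, the difficulty is the binary inspection vector $\bp\in\Set{0,1}^\ell$: once $\bp$ is fixed, the payments $(\bs,\bt)$ solve a linear program. I will show that with $n$ constant, a MIN-PAY contract for $i$ can be found by solving one LP whose number of variables and constraints is polynomial in $\ell$ and $\overline{m}$.

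\textbf{Step 1: decouple the signals.} The agent's expected payment is additive across signals,
\[
T_{i'}=\sum_{k} q^0_{i',k}\big[(1-p_k)s_k+p_k\,\E_{j\sim\bq^k_{i'}}[t_{k,j}]\big],
\]
and the same holds for $D_{i'}$. The incentive constraints $T_i-c_i\ge T_{i'}-c_{i'}$ for $i'\neq i$ number $n-1$, a constant. Introduce for each signal $k$ a per-signal contribution vector $x_k\in\mathbb{R}^n$ with $x_{k,i'}$ the expected payment arising from signal $k$ under action $i'$. The IC constraints become $n-1$ linear constraints on $\sum_k x_k$, and the objective $\sum_k(x_{k,i}+q^0_{i,k}p_kd_k)$ is additive. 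For each signal, the feasible set of pairs $(x_k,\text{cost}_k)$ is the union of two polyhedra:
\begin{itemize}
\item the no-inspection polyhedron $\{s_k q^0_{\cdot,k}:s_k\ge0\}$ with cost $0$;
\item the inspection polyhedron $\{q^0_{\cdot,k}\odot \bq^k t_k : t_k\ge 0\}$ with cost $q^0_{i,k}d_k$.
\end{itemize}

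\textbf{Step 2: convexify.} I would relax each signal's feasible set to the convex hull of the union of its two polyhedra, which is expressible via a standard disjunctive LP (Balas) with a fractional indicator $\lambda_k\in[0,1]$. This produces a polynomial-size LP. The main obstacle is rounding the fractional solution back to a binary $\bp$ without loss. The relaxed problem is a sum over $\ell$ independent convex sets subject to only $n-1$ coupling constraints plus the objective. A basic optimal solution (Shapley–Folkman/Carathéodory style counting on extreme points) therefore has at most $n-1$ (or $n$) signals with fractional $\lambda_k$; all others are integral.

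For the remaining constantly many fractional signals, I enumerate the $2^{O(n)}$ binary choices. For each choice I re-solve the LP with those $p_k$ fixed and the others fixed to their integral values. The difficulty is that rounding could in principle cost more than the relaxation. To argue optimality, I would instead invert the order of operations: for each guess of which at most $n$ signals are "special," the fixed-$\bp$ LP is exact. The real claim I need is structural: there is an optimal integral contract whose remaining $p_k$ are determined by a simple threshold rule. Concretely, inspect $k$ iff the LP dual prices $y\in\mathbb{R}^{n-1}$ of the IC constraints make the inspection option cheaper for signal $k$.

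\textbf{Step 3: enumerate via dual prices.} The argument then runs as follows. Given duals $y\ge0$ on the IC constraints, the Lagrangian separates over signals, and each signal independently picks its better option. The comparison for signal $k$ is a piecewise-linear function of $y\in\mathbb{R}^{n-1}$, so the space of $y$ is partitioned by polynomially many hyperplanes, one or a few per signal. This yields $O(\ell^{\,n-1})$ cells, polynomial since $n$ is constant. Each cell induces a single inspection vector $\bp$. I would enumerate all cells (vertices of the arrangement), solve the fixed-$\bp$ LP for each, and take the best.

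Correctness follows from strong duality of the fixed-$\bp$ LP at the optimal $\bp^*$. The optimal duals lie in some cell, and that cell's rule reproduces $\bp^*$ up to ties; ties are handled by also enumerating the at most $n-1$ signals lying on the boundary hyperplanes. This arrangement-plus-duality step is where I expect the care to be needed. The analogous argument with the roles of actions and outcomes swapped gives the constant-outcomes remark.
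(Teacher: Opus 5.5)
There is a genuine gap at the heart of your argument. Nothing in the proposal establishes that only $O(n)$ signals ever need to be inspected. Both devices you use to get there break for the same structural reason: each signal's two option sets are cones, because payments scale freely.

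In Step 2, the Balas hull of the union lets $\lambda_k\to 0$ while keeping the whole inspection cone. Your relaxation is therefore exactly the free-inspection problem (the phenomenon behind Theorem~\ref{thmCoMInoStackelberg}). Its ``integral'' points need not be deterministic contracts, so the Shapley--Folkman count gives nothing.

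In Step 3, fix multipliers $y\ge 0$ on the IC constraints. Signal $k$'s Lagrangian value is $0$ or $-\infty$ without inspection, and $q^0_{i,k}d_k$ or $-\infty$ with inspection. The dual function is finite only when $y$ is dual-feasible for full inspection, and at such $y$ non-inspection always weakly wins. So the Lagrangian dual equals the free-inspection value, there is a duality gap, and no per-signal rule ``inspect iff cheaper at $y$'' can recover $\bp^*$.

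Suppose you instead take $y^*$ to be the optimal dual of the fixed-$\bp^*$ LP. Three problems remain:
\begin{itemize}
\item Complementary slackness places $y^*$ on the zero-reduced-cost hyperplanes of \emph{every} inspected signal carrying a positive payment. All relevant signals are then ties, and the cell carries no information.
\item Your claim that at most $n-1$ signals lie on boundary hyperplanes fails in degenerate instances, e.g.\ repeated signals.
\item Dual feasibility of $y^*$ for a candidate $\hat{\bp}$ only shows that $\hat{\bp}$'s MIN-PAY value is \emph{at least} that of $\bp^*$, which is the wrong direction.
\end{itemize}

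The missing ingredient is a primal sparsity lemma, which is what the paper proves. For fixed $\bp$, the payments solve a MIN-PAY LP over the combined outcome space with only $n-1$ IC constraints. So there is a basic optimal solution with at most $n-1$ positive payments. Any inspected signal whose outcome payments are then all zero can be switched to non-inspection with $s_k=0$. Every action's expected transfer is unchanged, so incentives are preserved, and the principal saves $q^0_{i,k}d_k$. Hence some optimal contract inspects at most $n-1$ signals. Enumerating the $O(\ell^{n-1})$ candidate inspection sets and solving one LP for each (and for each target action) runs in polynomial time. With this lemma the arrangement-and-duality machinery is unnecessary; without it, Step 3 does not go through.
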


Our proof relies on a combination of two insights: First, a sparsity argument by which if the number of actions is bounded by a constant, then there exists an optimal contract where the number of inspected signals is bounded by the same constant, significantly reducing the search space. Furthermore, we show that if a contract inspects a signal but assigns zero payments to all its outcomes, eliminating this inspection while appropriately adjusting the remaining payments preserves incentive compatibility and weakly improves the principal’s utility. These insights lead to an efficient algorithm.

\textbf{Family 2: Settings with independent evaluations.}
For settings as in Example~\ref{ex:indep-evals},
we establish a key property of optimal contracts---they exhibit a simple structure of payment for at most one signal and one outcome. This structure enables a polynomial-time algorithm that enumerates over single-signal inspection policies. We begin with definitions:

\begin{definition}[ISOP]
\label{def:isop}
An adaptive contract setting satisfies the \emph{Independent Signal-Outcome Property (ISOP)} if the initial signal $k\sim \bq^0_i$ and the inspected outcome $j\sim \bq_i^k$ are independent random variables for any action $i\in[n]$.
Equivalently, ISOP holds if $\bq^k=\bq^{k'}$ for all $k,k'\in[\ell]$.
\end{definition}

An ISOP setting can be described by a tuple $(\bq^0,\bq,\bc,\bd,\br)$.
The second opinion $j$ is drawn independently from the initial signal $k$, but its distribution matrix $\bq$ can be distinct from that of the initial signal $\bq^0$, and $j$ has a distinct role from $k$ in determining %
the reward $r_{k,j}$. %
However, in some applications the first and second opinions are completely symmetric: %

\begin{definition}[Symmetric-ISOP]
An adaptive contract setting satisfies \emph{Symmetric-ISOP} if (1) it satisfies ISOP; (2) $\bq^0=\bq$; and (3) the rewards are symmetric, i.e., $r_{k, j} = r_{j, k}$ for all $j, k \in [\ell]$.
\end{definition}

A Symmetric-ISOP setting is a tuple $(\bq, \bc, \bd, \br)$, where $\br$ is a symmetric $\ell\times\ell$ matrix. 
Symmetry strengthens our results, as the following positive result holds for ISOP, while one of our hardness results holds even for Symmetric-ISOP.

\smallskip
\begin{theorem}
\label{thm:mlrp_isop_tractability}
Consider adaptive contract settings
satisfying ISOP, and target the
highest-cost action $n$. Then an optimal adaptive contract for $n$ can be computed in polynomial time.
Moreover, without loss of generality, it pays for at most one signal and one outcome.
\end{theorem}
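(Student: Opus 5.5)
Fix an inspection policy $\bp$. The payments $(\bs,\bt)$ then enter linearly, so the MIN-PAY problem for action $n$ is a linear program. Its objective is $T_n + D_n$, where $D_n=\sum_k q^0_{n,k}p_k d_k$ is a constant. Its constraints are $T_n - c_n \ge T_i - c_i$ for all $i<n$, together with nonnegativity. Under ISOP we can write $T_i=\sum_k q^0_{i,k}\big[(1-p_k)s_k + p_k \sum_j q_{i,j} t_{k,j}\big]$, where the second factor $q_{i,j}$ does not depend on $k$.

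The plan is to reduce to very structured contracts in three steps.

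\emph{Step 1: non-inspected payments.} Restricted to signals with $p_k=0$, the LP is a classic MIN-PAY problem with outcome matrix $\bq^0$. By MLRP for $\bq^0$, targeting the highest-cost action, the standard argument applies: an optimal contract loads all payment on the single signal with the largest likelihood ratio $q^0_{n,k}/q^0_{i,k}$. Among non-inspected signals that is the largest such index $k$. For inspected signals $k$, MLRP for $\bq$ similarly lets us shift the mass of $t_{k,\cdot}$ onto the top outcome $m$. Concretely, moving payment from outcome $j$ to outcome $m$, scaled so that the payment under action $n$ is preserved, weakly lowers $\sum_j q_{i,j}t_{k,j}$ for every $i<n$. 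This holds because $q_{i,m}/q_{n,m}\le q_{i,j}/q_{n,j}$, so all IC constraints stay satisfied. Hence w.l.o.g.\ each inspected signal $k$ pays only $t_{k,m}$.

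\emph{Step 2: collapsing to at most one paid signal.} Now each paid signal $k$ contributes a "column" with per-action likelihood $q^0_{i,k}$ (if not inspected) or $q^0_{i,k}q_{i,m}$ (if inspected). The reduced LP pays on a family of such columns, and LP theory alone gives only a sparse vertex. The key claim is stronger: one column suffices. The tool is the fact that, for a target of the highest action with all constraints of the form $T_n-T_i\ge c_n-c_i$, MIN-PAY is determined by the worst ratio $\min_i$ over candidate columns. More precisely, I will argue through the dual. The dual of MIN-PAY is $\max_\lambda \sum_i\lambda_i(c_n-c_i)$ subject to, for each column $v$, $v_n(1+\sum_i\lambda_i)-\sum_i \lambda_i v_i \le 0$ per unit payment. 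A single-column primal is optimal iff that column is the one minimizing the cost of making action $n$ best.

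The delicate point is that mixing two columns could beat each alone, as happens in general non-MLRP contracts. So I will exploit MLRP across columns. Under ISOP, products $q^0_{i,k}q_{i,m}$ inherit MLRP-type ordering in $k$. This makes the columns (non-inspected $k$, and inspected $k$ with factor $q_{i,m}$) totally ordered by likelihood ratio against every $i<n$. With a common ordering of likelihood ratios across all lower actions, the single best column dominates any mixture, exactly as in the classic MLRP argument. I expect this step to be the main obstacle: inspected and non-inspected columns need not be comparable in a single order. The fallback is to argue via the inspection cost: inspecting a signal that receives no payment is wasteful (the insight from Theorem~\ref{thm:constant_actions_polytime}), and I will show that any two paid columns can be merged by a pivot that weakly reduces $T_n+D_n$.

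\emph{Step 3: algorithm.} Given the structure, enumerate each signal $k\in[\ell]$ and whether to inspect it. Set $p_k$ accordingly and $p_{k'}=0$ elsewhere, since other inspections only add cost. Pay only $s_k$ or $t_{k,m}$, computed in closed form as $\max_{i<n}(c_n-c_i)/(v_n-v_i)$, or by solving the LP. Keep the cheapest total $T_n+D_n$. This takes $O(\ell n)$ candidates, each polynomial, which yields both claims of Theorem~\ref{thm:mlrp_isop_tractability}.
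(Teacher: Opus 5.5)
\textbf{Gap: Step 2 is false.} Your claim that a single paid column suffices does not hold, and the fallback ``pivot'' cannot rescue it. Under ISOP and MLRP, two dominance facts do hold for every $i<n$ simultaneously:
\begin{itemize}
\item the uninspected columns are commonly ordered, with the highest uninspected signal $k_0^*$ dominating (as in your Step 1);
\item the inspected columns are commonly ordered, since $\frac{q^0_{i,k}}{q^0_{n,k}}\cdot\frac{q_{i,j}}{q_{n,j}}$ is minimized at the highest inspected signal $k_1^*$ and the top outcome $m$.
\end{itemize}
But, as you suspected, the column for $s_{k_0^*}$ and the column for $t_{k_1^*,m}$ are not comparable, and the optimum can strictly need both.

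\textbf{Counterexample.} Take $n=3$ and $\ell=m=2$. Let $\bq^0$ have rows $(0.9,0.1),(0.6,0.4),(0.5,0.5)$ and $\bq$ have rows $(0.8,0.2),(0.75,0.25),(0.5,0.5)$; both satisfy MLRP. Set $\bc=(0,0,1)$ and $\bd=(0,10)$. Inspect only signal $1$ and pay $s_2=x$, $t_{1,2}=y$. The IC constraints are $0.4x+0.07y\ge 1$ and $0.1x+0.1y\ge 1$, and the objective is $0.5x+0.25y$. The optimum is $x=10/11$, $y=100/11$, with expected payment $30/11$.

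Every contract with a single nonzero payment costs at least $25/7>30/11$:
\begin{itemize}
\item $s_2$ alone costs $5$, and $t_{1,2}$ alone costs $25/7$;
\item $s_1$ and $t_{1,1}$ are less likely under action $3$ than under actions $1,2$, so they cannot help;
\item any contract inspecting signal $2$ already pays $0.5\cdot 10=5$ in inspection.
\end{itemize}
Since $d_1=0$, there is no inspection cost to save, so no merge that ``weakly reduces $T_n+D_n$'' exists. Your Step 3 algorithm, which evaluates only single-column contracts in closed form, would return $25/7$ instead of the optimum.

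\textbf{Repair.} The statement's ``at most one signal and one outcome'' means at most one payment $s_{k_0^*}$ on an uninspected signal \emph{plus} at most one payment $t_{k_1^*,m}$ on an inspected outcome; the paper's remark after the proposition says exactly this.
\begin{itemize}
\item The paper fixes $\bp$ and works with the dual of the MIN-PAY LP. All dual constraints for uninspected signals are implied by the one for $k_0^*$, and all constraints for inspected pairs by the one for $(k_1^*,m)$. Complementary slackness then leaves at most two positive payments. Your primal mass-shifting from Step 1, extended across inspected signals using the product ordering, gives the same conclusion.
\item Proposition~\ref{claim:optimal_contract_only_inspects_when_required} then removes every inspected signal other than $k_1^*$.
\item The algorithm enumerates the $\ell+1$ policies that inspect at most one signal and solves the LP (effectively two-variable) for each, rather than a single-column closed form.
\end{itemize}
With this correction your plan goes through.
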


\section{Computational Hardness}
\label{sec:hardness}

In this section we show the necessity of the structural assumptions underlying our algorithms: The design problem becomes intractable with an unbounded number of actions and no ISOP. %
Full proofs are deferred to the Appendix.

\smallskip
\begin{theorem}\label{thmHardnessOfApproximation}
    It is NP-hard to approximate the optimal principal's  adaptive contracts utility to 
    within
    any constant.
\end{theorem}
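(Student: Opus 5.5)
We prove \Cref{thmHardnessOfApproximation} by a gap-preserving reduction from a problem that is NP-hard to approximate within any constant. The natural candidate is \textsc{Set Cover}: its objective is a minimum number of selected items, which mirrors the principal's choice of which signals to inspect, each at a cost. Given a universe $[N]$ and sets $S_1,\dots,S_\ell$, we construct an adaptive contract setting as follows.
\begin{itemize}
\item Signals correspond to sets, and all inspection costs are equal, $d_k=d$.
\item The highest-cost action $n$ is the only one yielding substantial reward.
\item There is one ``deviation'' action $i_e$ per element $e\in[N]$. These actions induce the same signal distribution $\bq^0$ as action $n$, so the coarse signal alone cannot separate them from $n$.
\item Upon inspection of signal $k$, the outcome distribution $\bq^k_{i_e}$ differs from $\bq^k_n$ only if $e\in S_k$. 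Concretely, action $i_e$ is deterministically revealed by a low outcome on inspected signals whose sets contain $e$.
\end{itemize}
MLRP will be enforced by ordering outcomes so that action $n$ puts its mass on the top outcome. We also add slack to $\bq^0$, e.g.\ a small action-dependent tilt, and ensure that the cost ordering places each $i_e$ below $n$.

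The key step is to show that incentivizing $n$ requires an inspection set $\{k : p_k=1\}$ that covers the universe. If some element $e$ is uncovered, then on every signal the payment distribution under $i_e$ coincides with that under $n$. Action $i_e$, which is cheaper, then weakly dominates $n$, so $n$ cannot be incentivized. Conversely, given a cover $C$, we incentivize $n$ by paying only on the top outcome of the inspected signals. The payments needed are bounded by a quantity independent of $|C|$, up to a factor we control. The principal's inspection cost is then $d\sum_{k\in C}q^0_{n,k}$. Taking $\bq^0_n$ uniform over signals, this cost is proportional to $|C|$.

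We then tune the reward so that
\[
U_P \approx R_n - \text{pay} - \frac{d}{\ell}\,|C|.
\]
Setting $R_n$ just above the cost of an optimal cover of size $\mathrm{OPT}$ makes the principal's utility positive exactly when a near-optimal cover exists. This produces a gap: if $\mathrm{OPT}\le k^*$, the utility is at least some $\epsilon>0$; if every cover has size more than $\alpha k^*$, then incentivizing $n$ yields negative utility. Lower actions must also be made worthless, for instance by giving them zero reward, so that the best achievable utility is $0$ in that case.

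Because any constant-factor approximation must distinguish positive utility from zero utility, this gap establishes the claim. The gap amplification exploits the phenomenon of Challenge 2, where inspection cost can exactly offset value. It also means we may use hardness of distinguishing small from large covers, for any constant gap, which holds for Set Cover (indeed, even $\ln N$-hardness).

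The main obstacle is the payment term. We must ensure that the minimal payment to incentivize $n$, given a cover, does not itself vary with the cover in a way that destroys the gap, and that fractional or mixed deviations cannot be deterred more cheaply. Two issues need care. Under limited liability, paying on many inspected signals might reduce the total payment. There is also the risk of ``intermediate'' actions needing deterrence. My first attempt will be to make the detection deterministic: $i_e$ produces zero probability of the top outcome on covering signals. Then a payment on any one covering signal suffices, and the required payment is at most $(c_n-c_{i_e})/(\text{prob.\ of covering signal})$. This ensures the payment is dominated by a fixed term, for example by choosing $c_n$ tiny relative to $d$. The remaining work is to verify MLRP under this construction, which may require adding a small epsilon-perturbation and tracking the resulting error in the gap.
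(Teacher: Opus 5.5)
Your reduction uses the same mechanism as the paper's. Inspecting a signal exposes a family of cheap deviation actions, so incentivizing the target forces the inspected signals to form a cover. A tiny target cost then makes the transfer negligible: paying $\ell c_n$ on the good outcome of every inspected signal costs at most $|C|c_n$, which is exactly the paper's $t_v=(0,\varepsilon)$ with $c_{\mathrm{target}}=\varepsilon/(|V|+1)$.

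The paper uses the Vertex Cover special case (signals are vertices, deviations are edges). It reduces from Independent Set with reward $|V|+\varepsilon$, so that utility tracks $|V|-|S|$, and the reduction is approximation-preserving. Your positive-versus-zero threshold is a valid alternative ending. It needs only exact NP-hardness of Set Cover, because the transfer is below one inspection increment, so covers of size $k^*$ and $k^*+1$ are already separated. It also rules out every multiplicative factor. The paper's version, in exchange, shows hardness away from the zero-utility regime.

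There is a genuine gap in ``make lower actions worthless by giving them zero reward.'' Rewards live on pairs $(k,j)$, not on actions. In your structure, the only pairs reached under $n$ are $(k,\text{top})$, and each is also reached under $i_e$ whenever $e\notin S_k$. So with $r\ge 0$, requiring $R_{i_e}=0$ for all $e$ forces $r_{k,\text{top}}=0$ unless $S_k$ is the whole universe. Hence $R_n=0$ on every nontrivial instance. If instead some free deviation has $R_{i_e}>0$, the zero contract already earns positive utility in the NO case, and your dichotomy collapses.

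The missing ingredient is the paper's dummy signal. It has the same probability under every action, its inspected outcome is $B$ under the target and $A$ under every other action, and the only reward sits on $(\mathrm{dummy},B)$. This gives $R_{\mathrm{target}}=|V|+\varepsilon$ and $R_e=0$; rewards accrue whether or not the dummy is inspected. Because inspecting the dummy alone separates the target from everything, it must be priced out. The paper sets $d_{\mathrm{dummy}}=(|V|+1)^2$, so its expected inspection cost $|V|+1$ exceeds that of any vertex cover. Your uniform inspection costs therefore have to go, or the dummy must be given a much larger signal probability.

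Finally, drop the action-dependent tilt in $\bq^0$. Any difference between $\bq^0_n$ and $\bq^0_{i_e}$ lets the principal deter $i_e$ by paying on an uninspected signal. With $c_n$ tiny, this can be far cheaper than inspecting, which breaks your ``must cover'' step. The paper keeps $\bq^0$ uniform and identical across actions, so the signal part of MLRP is trivial. It only checks that the target makes outcome $B$ weakly more likely on each inspected signal, and no perturbation is needed.
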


\begin{proof}[Proof sketch]
    We reduce from Independent Set, which is hard to approximate to within any constant factor \cite{IndSetHard}. The principal wishes to incentivize the agent to take a specific ``good'' action, but there is also one ``bad'' action for each edge of the graph. No matter which action the agent chooses, a uniformly random vertex is drawn as the signal. Inspecting a vertex reveals whether the agent took any of the bad actions from the incident edges. The cost of the good action is very low, so assuming there is some nonzero chance of the agent being detected, the principal only needs to pay the agent a tiny transfer to incentivize taking the good action. The main source of cost comes from inspecting vertices. Thus, the optimal contract inspects a minimum vertex cover. The payoff and inspection costs are set up so that the principal's utility is roughly proportional to the number of vertices that are \emph{not} inspected, which is the complement of the minimum vertex cover, i.e., a maximum independent set. See Appendix~\ref{appHardnessOfApproximation} for the reduction details.
\end{proof}

We also show that the assumption that our settings satisfy MLRP (see Section~\ref{sec:setting}) is necessary. 

\smallskip
\begin{theorem}\label{thmSecondOpinionHardness}
    If MLRP is violated, then even in settings satisfying Symmetric-ISOP and uniform inspection costs, it is NP-hard to compute the optimal principal's utility. %
\end{theorem}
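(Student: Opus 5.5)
We reduce from a classical NP-hard problem in which the difficulty is a combinatorial choice of which signals to inspect. Under Symmetric-ISOP with uniform inspection costs, every inspected signal costs the same. So the hardness has to come from \emph{which} signals must be inspected, and this is exactly what MLRP would otherwise pin down (an optimal contract would pay only on the top signal/outcome). We reduce from Vertex Cover, or more conveniently Set Cover or Exact Cover by 3-sets. The signals $[\ell]$ correspond to vertices (plus one ``safe'' signal), and the actions correspond to one ``good'' target action $n$ together with one ``bad'' deviation action per edge. Since $\bq^0=\bq$ under Symmetric-ISOP, each action induces a single distribution over $[\ell]$, used both for the signal and independently for the outcome.

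\textbf{Construction.} We make the target action $n$ put most of its mass on the safe signal and spread the rest uniformly over the vertex signals. The bad action for edge $e=\{u,v\}$ should differ from $n$ only in its behaviour on the vertex signals $u,v$. Concretely, its mass on $u$ and $v$ moves elsewhere, so that a (signal, outcome) pair with $k\in\{u,v\}$ and outcome $j$ separates $e$ from $n$ only through the outcome. Action costs are chosen so that the bad actions are much cheaper than $n$; the principal must therefore pay on evidence that rules out every edge action. We arrange that the cheap, uninspected payments $s_k$ cannot deter edge deviations, because $\bq_e$ and $\bq_n$ agree on the signal marginal restricted to the uncovered region. Deterrence then needs a payment $t_{k,j}$ with $k$ inspected and adjacent to $e$. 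This deliberately non-monotone likelihood pattern is where the violation of MLRP is used. The rewards $r_{k,j}$ are made symmetric and large enough that incentivizing $n$ is optimal, and are otherwise irrelevant since we target $n$.

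\textbf{Correctness.} We show that the principal's minimum cost is $d\cdot\Pr_{n}[\text{inspected signal}]$ plus a negligible transfer $\varepsilon$, where $d$ is the uniform inspection cost. Because action $n$ puts equal mass on each vertex signal, the inspection term is proportional to the number of inspected vertices. Hence the optimum is attained exactly when the inspected set is a minimum vertex cover, and comparing the optimal utility against a threshold decides Vertex Cover. Two directions are needed. For completeness, given a vertex cover $C$, we inspect $C$ and pay a small bonus on outcome patterns characteristic of $n$; this solves the LP obtained by fixing $\bp$ and makes all edge deviations unprofitable. For soundness, if some edge $e$ has neither endpoint inspected, then for every payment vector the expected payment under $e$ is at least that under $n$. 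Deviating to $e$ is then strictly profitable, so $n$ cannot be incentivized at any inspection policy with that uncovered edge.

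\textbf{Main obstacle.} The hard step is soundness: uninspected payments $s_k$ and payments on inspected signals not incident to $e$ must not help against $e$. This forces $\bq_e$ to dominate $\bq_n$ pointwise on every pair $(k,j)$ that can carry payment without inspecting $u$ or $v$, with all the extra mass placed on pairs incident to $u$ or $v$. Since the distributions must be symmetric across the two draws, the construction is delicate. I would try making $e$ move mass from the safe signal onto $u$ and $v$ only, and paying on the outcome ``safe'' observed after inspecting signal $u$. A second source of error is polynomial bit-size of probabilities and the $\varepsilon$ slack, which must stay below the gap $d/\ell$ between cover sizes.
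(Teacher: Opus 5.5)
There is a genuine gap, and it sits exactly at the soundness step you flag. You need a single cheap edge action $e$ with $T_e\ge T_n$ for \emph{every} nonnegative payment vector whenever neither endpoint of $e$ is inspected. Under ISOP this is impossible unless $\mathbf q_e=\mathbf q_n$.

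To see this, fix the inspected set $S$ and pass to the combined outcome space. For every action $i$, the vector $f_i(S)$ with entries $q_{i,k}$ ($k\notin S$) and $q_{i,k}q_{i,j}$ ($k\in S$, $j\in[\ell]$) is a probability distribution. Domination for all payments means $f_e(S)\ge f_n(S)$ coordinatewise. Since both vectors sum to $1$, this gives $f_e(S)=f_n(S)$. Summing the inspected cells over $j$ gives $q_{e,k}=q_{n,k}$ for $k\in S$, and the uninspected cells give it for $k\notin S$. So $\mathbf q_e=\mathbf q_n$, and $e$ is indistinguishable from $n$ under every policy, including those that cover $e$.

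The structural reason is twofold. Under ISOP, the outcome after inspecting $k$ is a fresh draw from $\mathbf q_i$, whatever $k$ is. Under Symmetric-ISOP, the free first draw has that same law. Hence anything that separates one deviation from $n$ is already visible without inspection, and inspecting $u$ reveals nothing that inspecting any other signal would not.

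Your concrete suggestions confirm this.
\begin{itemize}
\item If $e$ shifts mass from the safe signal onto $u,v$, an uninspected payment on the safe signal deters every edge action at once. Moreover, a payment on the pair $(u,\text{safe})$ actually favours $e$, while the pair $(k,\text{safe})$ for any inspected non-endpoint $k$ favours $n$.
\item If $e$ shifts mass off $u,v$ onto the safe signal, uninspected payments on the vertex signals do the same job.
\end{itemize}
Either way $n$ is implementable with no inspection at all. So soundness fails outright, and the cost formula $d\cdot\Pr_n[\text{inspected}]+\varepsilon$ has no basis.

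What is missing is a mechanism by which inspection has value under Symmetric-ISOP. It can only come through the product cells $q_{i,k}q_{i,j}$: either sharper likelihood ratios that lower the required transfer, or several deviations whose mixture matches $\mathbf q_n$ on one draw but not on two.

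The paper therefore uses a quantitative gap reduction rather than a feasibility one, going from Set Cover through the promise problem RSM. It views the payments on vertex pairs as a matrix $M$ whose uninspected rows are constant.
\begin{itemize}
\item Zero-cost actions $e_{i,j}$ bound every entry of the $2\times 2$ block on $\{i,j\}$ by $5(T_g-1)\le 5/11$. Since every vertex lies on an edge, each uninspected row is then uniformly small.
\item Equal-cost actions $a_k$ have $\varepsilon$ calibrated so that $a_k$ is more likely than $g$ on every pair avoiding $k$. They force row $k$ plus column $k$ of $M$ to carry mass at least $n$.
\item Hence the large entries lie in inspected rows at non-edges of $G$, which reads off a set cover.
\item The payment budget $1+y/(11n)$ caps the number of inspected rows at $y$.
\end{itemize}
An exact Vertex Cover reduction would need a gadget of comparable quantitative nature, and your proposal does not provide one.
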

\begin{proof}[Proof sketch]
    We reduce from a variant of Set Cover through an intermediate problem that we call \emph{Row-Sparsest Matrix (RSM)}. The objective of RSM is to fill in an $n \times n$ matrix with nonnegative entries in a way that satisfies a series of linear constraints, while having as few rows with nonzero entries as possible. These matrix entries ultimately become the payments in the optimal contract. The types of linear constraints in RSM take a very restricted form. Most notably, they are symmetric with respect to transposing the matrix, which is why we are able to relate this problem to Symmetric-ISOP. See Appendix~\ref{appSecondOpinionHardnessProof} for the details of the proof, which turns out to be quite involved.
\end{proof}

\section{Beyond Deterministic Inspection}
\label{sec:randomized}

\begin{figure}
    \centering
    \resizebox{\columnwidth}{!}{%
    \begin{tikzpicture}[xscale=4.5, yscale=0.3, every node/.style={font=\small}]
    \node[align=center] at (-0.95,2.88) {Free\\Inspection};
    \node[align=center] at (-0.6,3) {CoMI};
    \node[align=center] at (-0.15,3) {CoNI, UMI,\\UNI};
    \node[align=center] at (0.35,3) {Determininistic};
    \node[align=center] at (0.9,2.75) {$\text{Free Inspection}$\\$\text{} + \expect{k\sim q^0_{i^*}}{d_k}$};
    \draw[->,thick] (-1,0) -- (1.1,0);
    \node[align=center] at (1.1,-2) {Optimal\\Exp. pay};
    \draw[thick] (-0.95,1) -- (-0.95,-1);
    \draw[thick] (-0.6,1) -- (-0.6,-1);
    \draw[thick] (-0.15,1) -- (-0.15,-1);
    \draw[thick] (0.35,1) -- (0.35,-1);
    \draw[thick] (0.9,1) -- (0.9,-1);
    \draw[<->,thick] (-0.93,-2) -- node[above] {$\to 0$} node[below] {Thm.~\ref{thmCoMInoStackelberg}} (-0.62,-2);
    \draw[<->,thick] (-0.58,-2) -- node[above] {$> 0$} node[below] {Thm.~\ref{thmPaymentGaps} (\ref{itmPaymentGapsCoNIUMICoMI})} (-0.17,-2);
    \draw[<->,thick] (-0.13,-2) -- node[above] {$\ge 0$} node[below] {Thm.~\ref{thmPaymentGaps} (\ref{itmPaymentGapsDetCoNIUMI})} (0.33,-2);
    \draw[<->,thick] (0.37,-2) -- node[above] {$\ge 0$} node[below] {Thm.~\ref{thmPaymentGaps} (\ref{itmPaymentGapsFreeDet})} (0.88,-2);
    \end{tikzpicture}
    }
    \caption{Relationships between expected payments in optimal deterministic versus nondeterministic contracts. Note that the strict inequality holds only when the optimal contract inspects.}
    \label{fig:results_min_pay_axis}
\end{figure}
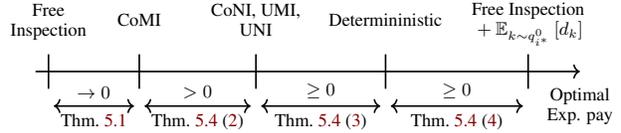

The power of randomization is a ubiquitous phenomenon in algorithmic game theory. In this section we explore it in the context of adaptive contracts. %
We first show that permitting probabilistic inspection may preclude the existence of a Stackelberg equilibrium in the principal-agent interaction, since the principal can always deviate to lower-probability inspection with scaled-up payments. 
Moreover, the resulting contracts which achieve close-to-optimal principal utilities are impractical, as they involve astronomical payments with vanishing probabilities---such contracts are not encountered in practice.\footnote{We are not the first authors to observe this phenomenon. In particular, \citet{Mirrlees} derives a similar result in non-adaptive contract setting. \citet{Dye} proposes to resolve this issue by assuming risk aversion and bounded payments; we propose two alternative, less-elaborate routes around this paradox.}

In response to this challenge, we study two practical assumptions that can be imposed separately or together to eliminate this potential deviation by the principal. This results in a total of four problem variants (see Figure~\ref{figProblemVariants}). In Theorems \ref{thmCoMInoStackelberg} and \ref{thmEquilibriumExistence}, we show that assuming at least one of these is both necessary and sufficient for equilibrium existence. %
We consider the complexity of the optimal adaptive contract in these variants upper bound the cost savings achieved by randomized inspection. Surprisingly, we show in Theorem~\ref{thmNondeterminismOptimal} that there are settings in which randomized inspection can yield higher principal utility even when the principal is unable to commit to inspection probabilities, and thus must be indifferent about whether or not to inspect.
In discussing the solutions to the various randomized inspection problem variants, we refer the reader to our QCQP formulation in Appendix~\ref{appOptimizationProgram}, Equation~(\ref{eq:contract_qcqp}).

\begin{figure}
    \includegraphics[scale=0.45, trim={1.15cm 0 1cm 0},clip]{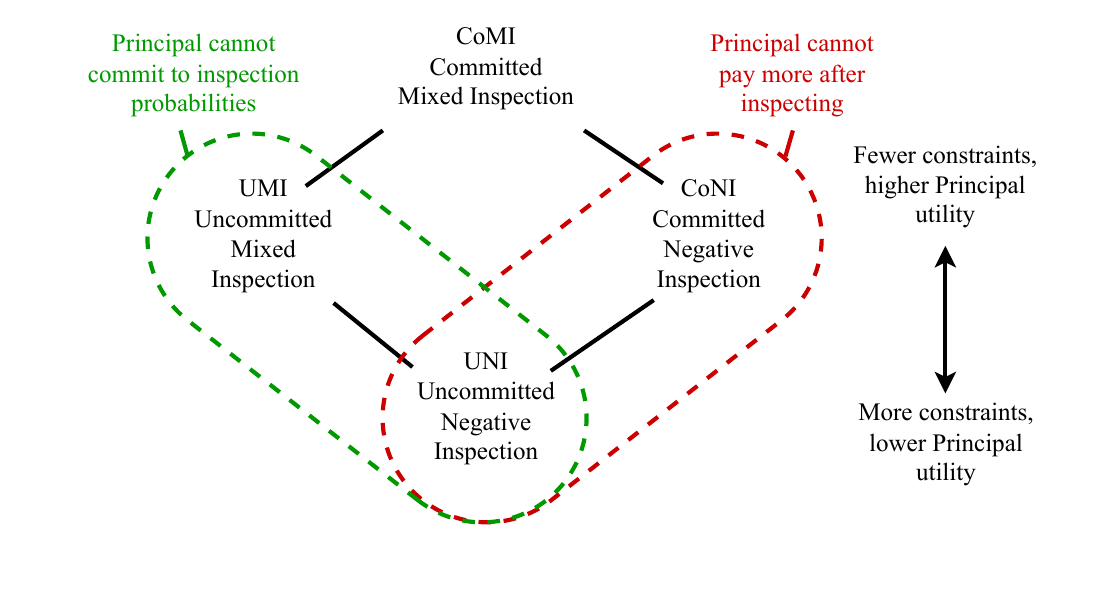}
        \vspace{-1em}
    \caption{\label{figProblemVariants}Our four nondeterministic problem variants, partially ordered by optimal principal's utility.}
\end{figure}

\subsection{Committed Mixed Inspection (CoMI)}\label{secCoMI}

In the \emph{Committed Mixed Inspection (CoMI)} setting, the principal commits in advance to a random inspection strategy, by assigning each signal a fixed probability of inspection, and then deciding whether to inspect based on a biased coin toss. 
Formally, optimal CoMI contracts are optimal solutions to Equation~(\ref{eq:contract_qcqp}), under the constraint $p_k \in [0,1]$ for all $k \in [\ell]$, with $p_k$ interpreted as the probability of inspecting signal $k$ upon observing it. 
For instance, consider an organization that delegates code generation to an LLM, but remains concerned about potential bugs. To mitigate risk without incurring prohibitive inspection costs, the organization might commit to further inspect at random a fixed percentage (e.g., 10\%) of the code modules that pass initial evaluation. Despite the intuitive appeal of this randomized, pre-committed approach as a natural extension of the deterministic model, we show that the CoMI setting does not admit a Stackelberg equilibrium (see Appendix~\ref{appEquilibria} for the proof). As an immediate corollary, we can resolve the complexity of CoMI.

\smallskip
\begin{theorem}\label{thmCoMInoStackelberg}
Given any CoMI instance $X$, let $\widehat{X}$ be the same except with $0$ inspection costs. Then $\widehat{X}$ always has an optimal solution with expected principal utility $y$, and $X$ has an optimal solution if and only if one of the optimal solutions to $\widehat{X}$ incentivizes an action $i$ while never inspecting any signal $k$ such that $q^0_{i, k} d_k > 0$; otherwise, there is a sequence of solutions to $X$ with values converging to $y$.
\end{theorem}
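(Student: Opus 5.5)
The plan is to compare $X$ with $\widehat{X}$ directly. First I establish that $\widehat{X}$ always attains its optimum. Then I show that $X$ can approach this value arbitrarily closely by a scaling argument, and finally that $X$ attains it exactly if and only if the stated condition holds.

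\textbf{Step 1: $\widehat{X}$ has an optimum.} With zero inspection costs, inspecting a signal never hurts the principal. Any payment $s_k$ can be replicated by inspecting and paying $t_{k,j}=s_k$ for every outcome $j$. So without loss of generality $p_k=1$ for all $k$. For each fixed target action $i$, the MIN-PAY problem is then a linear program in $\bt$, obtained by fixing $\bp$ as in the useful fact of \Cref{sec:computing_optimal_contracts}. Its feasible region is closed and its objective is bounded below by $0$, so whenever it is feasible the minimum is attained. Taking the best of the finitely many actions gives an optimal solution to $\widehat{X}$ with value $y$. Because $X$ and $\widehat{X}$ have the same feasible contracts and the principal's utility in $X$ is at most that in $\widehat{X}$, every solution to $X$ has value at most $y$.

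\textbf{Step 2: $X$ approaches $y$ by scaling.} Fix an optimal solution $(\bp=\mathbf{1},\bs,\bt)$ of $\widehat{X}$ that incentivizes action $i$. For $\varepsilon\in(0,1]$, define the contract $p'_k=\varepsilon$, $t'_{k,j}=t_{k,j}/\varepsilon$, $s'_k=0$. For every action $i'$,
\[
(1-p'_k)s'_k+p'_k\,\mathbb{E}_{j\sim\bq^k_{i'}}[t'_{k,j}]=\mathbb{E}_{j\sim\bq^k_{i'}}[t_{k,j}],
\]
so every $T_{i'}$ is unchanged. Hence incentives and tie-breaking are preserved, and action $i$ is still the best response. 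The expected inspection cost becomes $\varepsilon\sum_k q^0_{i,k}d_k\to 0$. So the value of $X$ under this contract converges to $y$, which gives the final clause of the theorem.

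\textbf{Step 3: the if direction.} Suppose some optimal solution of $\widehat{X}$ incentivizes action $i$ and never inspects any signal $k$ with $q^0_{i,k}d_k>0$. Then its inspection cost in $X$ is $D_i=\sum_k q^0_{i,k}p_kd_k=0$. So the same contract achieves value $y$ in $X$, and by Step 1 it is optimal.

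\textbf{Step 4: the only-if direction.} Suppose $X$ has an optimal solution. By Step 2 its value must be $y$, since values arbitrarily close to $y$ are achievable and no value exceeds $y$. Let $i$ be the action it incentivizes. Its value in $\widehat{X}$ equals its value in $X$ plus $D_i$. This is at most $y$, so $D_i=0$, and the solution is optimal for $\widehat{X}$. Since $D_i=0$ with all terms nonnegative, $p_k>0$ forces $q^0_{i,k}d_k=0$. Thus the solution never inspects any signal $k$ with $q^0_{i,k}d_k>0$, which is exactly the required witness. If instead the condition fails, no optimum exists, and Step 2 supplies the converging sequence.

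The main obstacle is Step 1. We must ensure that the optimum of $\widehat{X}$ exists even though the $p_k$ are continuous and the program is a QCQP rather than an LP. The replication argument removes $\bp$, and tie-breaking in favor of the principal makes the feasible set of each action closed, so the minimum is attained. A second point to check is that scaling preserves the principal-favorable tie-breaking; this holds because all $T_{i'}$ are identical before and after scaling.
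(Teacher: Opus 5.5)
Your proof is correct and follows the paper's argument: set $p_k=1$ so $\widehat{X}$ becomes an LP; scale inspection probabilities down while scaling inspected payments up, which approaches $y$ in $X$; and use $D_i=0$ for the characterization. The only differences are cosmetic. You scale from the all-inspect optimum (so $s'_k=0$) rather than applying Lemma~\ref{lemNoCoMIEquilibrium} to an arbitrary optimum, and the replication in your Step 1 should use $t'_{k,j}=(1-p_k)s_k+p_k t_{k,j}$ so that it also covers fractional $p_k$.
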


\smallskip
\begin{corollary}\label{corComiPoly}
    There is a polynomial-time algorithm to solve CoMI (in the sense of computing the supremum of possible expected principal utilities).
\end{corollary}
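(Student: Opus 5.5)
The plan is to reduce CoMI to the zero-cost instance $\widehat{X}$ by using Theorem~\ref{thmCoMInoStackelberg}, and then to show that both $\widehat{X}$ and the attainment test in that theorem can be solved by linear programs.

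\textbf{Step 1: $\widehat{X}$ is a polynomial-size LP.} When inspection is free, inspecting every signal with probability $1$ is without loss of generality. The principal can reproduce any non-inspected payment $s_k$ by setting $t_{k,j}=s_k$ for every $j$. Inspection adds no cost and leaves every action's expected transfer unchanged, so every $T_i$ stays the same. Hence $\widehat{X}$ is a classic contract problem whose outcome space is the set of pairs $(k,j)$, with joint distribution $q^0_{i,k}q^k_{i,j}$. The outcome space has size $\sum_k m_k$, which is polynomial in the input. For each target action $i$ we solve the MIN-PAY linear program: minimize $T_i$ over $\bt\ge 0$ subject to $T_i-c_i\ge T_{i'}-c_{i'}$ for all $i'$. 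Over the feasible actions $i$ we take the maximum of $R_i-T_i$. This yields $y$ in polynomial time. The supremum for $X$ is then $y$, provided we also know the supremum is not attained strictly below $y$ for some other reason. In fact $y$ is an upper bound for $X$, because adding inspection costs only lowers utility, and Theorem~\ref{thmCoMInoStackelberg} says $y$ is approached when it is not attained. The supremum for $X$ is therefore exactly $y$ in all cases.

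\textbf{Step 2: certifying attainment.} Step 1 already computes the supremum. Deciding whether it is attained, and outputting a contract when it is, takes one more LP. For each action $i$, let $K_i=\{k: q^0_{i,k}d_k>0\}$. We solve a modified LP for $\widehat{X}$ in which, for $k\in K_i$, the transfers are forced to be uninformative, i.e., $t_{k,j}=s_k$ for all $j$. This is equivalent to not inspecting $k$. We then check whether its value equals $y$. If it does for some $i$, then by Theorem~\ref{thmCoMInoStackelberg} the same contract is feasible in $X$ at zero inspection cost, since signals with $d_k=0$ may be inspected freely and signals with $q^0_{i,k}=0$ cost nothing in expectation under $i$. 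That contract then attains $y$.

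\textbf{Main obstacle.} The step needing the most care is the claim that free inspection can be taken to be full inspection while preserving incentives for every deviating action, not only the target one. The replacement $t_{k,j}=s_k$ handles this exactly, because it changes no $T_{i'}$ for any action $i'$. The tie-breaking convention in favor of the principal ensures that the LP optimum corresponds to an actual best response, so the maximum defining $y$ is well-defined.
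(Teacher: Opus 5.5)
Your proposal is correct and follows the paper's argument. You pass to the zero-cost instance $\widehat{X}$, take $p_k=1$ without loss of generality so that the problem becomes one MIN-PAY LP per action, and identify the resulting optimum $y$ as the supremum for $X$ via Theorem~\ref{thmCoMInoStackelberg} together with the bound that inspection costs only lower utility. Your Step~2 on certifying attainment is correct but goes beyond what the corollary asks. One small slip: the replacement $t_{k,j}=s_k$ only handles signals with $p_k=0$. For fractional $p_k\in(0,1)$ you need $t'_{k,j}=(1-p_k)s_k+p_k t_{k,j}$, as in the proof of Theorem~\ref{thmPaymentGaps}.
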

\begin{proof}
    Given any CoMI instance $X$, consider the instance $\widehat{X}$ from Theorem~\ref{thmCoMInoStackelberg}. Since inspection costs zero, it is without loss of generality to set each $p_k = 1$. Then Equation~(\ref{eq:contract_qcqp}) becomes a linear program, so we can solve $\widehat{X}$ in polynomial time to obtain the supremum utility.
\end{proof}

Intuitively, the core argument in the proof of Theorem~\ref{thmCoMInoStackelberg} is that for any contract which inspects signals with some probability, the principal can lower their expected payment by decreasing inspection probabilities and proportionally increasing payment for inspected outcomes. This shows that no optimal contract exists under these conditions, and provides a series of contracts that converge towards the optimal value $y$.
We note that the resulting contracts approaching the optimal value $y$ become impractical: Their worst-case payment tends to infinity while the probability of receiving any payment tends to zero, exceeding the budget limits of any practical principal, and deterring agents with the slightest degree of risk-aversion. 
In light of these issues, the next section shows that equilibrium guarantees can be recovered by imposing different restrictions on the principal.

\subsection{Restoring Equilibrium Guarantees}

In this section, we show that equilibrium guarantees can be restored by imposing additional restrictions on the principal. 
Specifically, we consider two variants: (1) disallowing commitment to inspection probabilities, and (2) requiring that the payment for inspected outcomes always remains at most that of corresponding uninspected signals.
In Theorem~\ref{thmEquilibriumExistence}, we prove that each of these restrictions---individually or in combination---restores equilibrium guarantees.

\textbf{Uncommitted Mixed Inspection (UMI).}
One possible way to restore equilibrium guarantees is to consider settings which generalize beyond the traditional Stackelberg framework. Specifically, we introduce the Uncommitted Mixed Inspection (UMI) variant, in which the principal only commits to payments $(s, t)$, but does not commit in advance to an inspection policy $p$.
Unlike the Stackelberg approach taken by classical contract design theory and by the settings we discussed thus far, the objective in UMI is to design a contract $(p, s, t)$ which induces a \emph{subgame-perfect Bayes-Nash equilibrium} of the principal-agent interaction that is optimal for the principal. By inducing such an equilibirium, the principal can declare an intention to inspect a particular signal $k$ with an intermediate probability $0 < p_k < 1$, which the agent will accept as believable. Subgame perfection necessitates that this is credible, meaning the principal must be genuinely indifferent between inspecting and not inspecting the signal. 
In Appendix~\ref{appQCLP}, we show how this constraint can be used to simplify Equation (\ref{eq:contract_qcqp}) into a quadratically-constrained linear program (QCLP), and we discuss how we may practically solve this problem for small instances.

\textbf{Committed Negative Inspection (CoNI).}
Another possible way to guarantee the existence of an equilibrium is to impose that the payment for an inspected outcome never exceeds the payment for the corresponding uninspected signal. This condition aligns well with scenarios where inspections may uncover negative information, justifying reduced pay.
For example, in the code generation use-case discussed in the introduction, further inspection may uncover security vulnerabilities that reduce the code's value.
Formally, this is represented by constraints requiring that $t_{k, j} \leq s_k$ for all $k \in [\ell]$ and $j \in [m_k]$, ensuring that inspection can only reduce or maintain the monetary transfer.

\textbf{Uncommitted Negative Inspection (UNI).}
Finally, we consider a problem variant which integrates both the no-commitment and negative inspection assumptions. Here, the principal does not commit to an inspection policy, and is also subject to the constraint $t_{k,j}\le s_k$ for all $k$ and $j$.

\textbf{Equilibrium Guarantees.}%
Not every QCQP or QCLP has an optimal value. Hence, it is not clear that any of the three variant games actually have an equilibrium, as there may not be one optimal contract for the principal to choose. The following result, proved in Appendix~\ref{appEquilibria}, shows that optimal contracts exist in all variants:

\begin{theorem}\label{thmEquilibriumExistence}
    Any UMI, CoNI, or UNI instance has an optimal solution.
\end{theorem}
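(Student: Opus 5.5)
The approach is to reduce to finitely many optimization problems over a compact feasible set, each with a continuous objective, and apply Weierstrass. As in the deterministic case, I would first decompose by the target action: the principal's optimum is the maximum over $i\in[n]$ of $R_i$ minus the infimum of $T_i+D_i$ over contracts incentivizing $i$ (ties broken in the principal's favor). It therefore suffices to show that, whenever the feasible set for action $i$ is nonempty, the MIN-PAY infimum is attained. The feasible set is defined by polynomial equalities and weak inequalities in $(\bp,\bs,\bt)$: the incentive constraints $T_i-c_i\ge T_{i'}-c_{i'}$, limited liability, $p_k\in[0,1]$, and the variant-specific constraints. For CoNI and UNI these are $t_{k,j}\le s_k$. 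For UMI and UNI, the credibility constraint requires the principal to be indifferent at any interior $p_k$; its exact form has to be read off Appendix~\ref{appQCLP}. I would write it as a closed condition, e.g.\ $p_k(1-p_k)\bigl(s_k-d_k-\mathbb{E}_{j}[t_{k,j}]\bigr)=0$ with the appropriate posterior, plus weak inequalities at the endpoints. The feasible set is thus closed, and the objective $T_i+D_i$ is a continuous polynomial.

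The main obstacle is compactness, since the payments $\bs,\bt$ are unbounded; the CoMI counterexample in Theorem~\ref{thmCoMInoStackelberg} shows exactly how this fails. The plan is to show that, without loss of optimality, one may restrict to a bounded region. Fix any feasible contract with value $V$; we only need contracts with $T_i+D_i\le V$. From this bound, $s_k\le V/q^0_{i,k}$ whenever $q^0_{i,k}>0$, and signals with $q^0_{i,k}=0$ can have their payments zeroed. Zeroing only lowers other actions' payments, so incentives are preserved, and it preserves $t\le s$ when both are zeroed. The difficulty is that $p_k t_{k,j}$ is bounded but $t_{k,j}$ itself is not when $p_k$ is small. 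The two variants handle this differently:
\begin{itemize}
\item \textbf{CoNI and UNI.} Here $t_{k,j}\le s_k$, and $s_k$ is already bounded. Every variable then lies in a box, and compactness is immediate.
\item \textbf{UMI.} Credibility ties $\mathbb{E}_j[t_{k,j}]$, under the principal's posterior, to $s_k$ and $d_k$ whenever $0<p_k<1$. This bounds the posterior-weighted $t_{k,j}$ for the outcomes that matter; outcomes of zero posterior probability can have their payment zeroed without loss. If $p_k=1$, the bound $q^0_{i,k}\,\mathbb{E}[t_{k,\cdot}]\le V$ applies directly. The residual worry is that the posterior in the indifference condition depends on the agent's equilibrium action $i$ itself, so it is positive exactly where $q^k_{i,j}>0$, and these are precisely the outcomes whose payments matter for action $i$. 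Payments on outcomes with $q^k_{i,j}=0$ I would cap, arguing that raising them only helps deviating actions, so setting them to $0$ is weakly better.
\end{itemize}

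Once every variant's MIN-PAY problem is restricted to a nonempty compact set with a continuous objective, the minimum is attained. Taking the best action over the finitely many $i$, with the principal-favorable tie-breaking consistent with the closed weak inequalities, yields an optimal contract. A final check is that the reduced contracts still satisfy credibility: when some $t$ are zeroed, the indifference expression changes only on zero-posterior outcomes, which is why the zeroing must be restricted to those outcomes.
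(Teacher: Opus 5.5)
Your overall strategy (reduce to a MIN-PAY problem per action, note the feasible set is closed and the objective continuous, then truncate to a bounded region without loss) is the paper's strategy. Your UMI treatment via the indifference condition is also essentially the paper's.

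The gap is the step ``$T_i+D_i\le V$ implies $s_k\le V/q^0_{i,k}$.'' The variable $s_k$ enters the cost only through $q^0_{i,k}(1-p_k)s_k$, so this bound holds only when $p_k=0$. At $p_k=1$, $s_k$ does not appear in the objective at all. At $p_k=1-\delta$, you can take $s_k=L/\delta$ with bounded cost, so the sublevel set is genuinely unbounded in $s_k$. Your CoNI/UNI bullet (``$t_{k,j}\le s_k$ and $s_k$ is already bounded, so everything lies in a box'') therefore does not follow.

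For UMI/UNI the repair is cheap. When $0<p_k<1$, indifference makes the signal-$k$ cost exactly $q^0_{i,k}s_k$. When $p_k=1$, the payoff-irrelevant $s_k$ can be lowered to the smallest value that credibility and negativity allow. Likewise, when $p_k=0$ in UMI, the payoff-irrelevant $t_{k,\cdot}$ are constrained only from below and must be lowered; you do not mention this.

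In CoNI there is no indifference to rescue you. Nor can you just pass to a limit along $p_k\to 1$, $(1-p_k)s_k\to L>0$: the limit removes $q^0_{i',k}L$ from each action's payment, which breaks IC whenever $q^0_{i,k}>q^0_{i',k}$.

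What is missing is the paper's two-step argument for CoNI.
\begin{itemize}
\item \textbf{Bound $t$ via negativity.} Use negativity rather than a bound on $s$. Since $t_{k,j}\le s_k$ and $q^k_{i,j}\le 1$, you get $(1-p_k)s_k+p_k\sum_{j'}q^k_{i,j'}t_{k,j'}\ge q^k_{i,j}t_{k,j}$. Hence $q^0_{i,k}q^k_{i,j}t_{k,j}\le V$, after zeroing the $t_{k,j}$ with $q^k_{i,j}=0$ as you proposed.
\item \textbf{Bound $s_k$ via rescaling.} Use the rescaling from Lemma~\ref{lemNoCoMIEquilibrium}. If $0<p_k<1$ and $s_k>\max_j t_{k,j}$, lower $p_k$ to $p'_k$ and set $s'_k=\frac{1-p_k}{1-p'_k}s_k$ and $t'_{k,j}=\frac{p_k}{p'_k}t_{k,j}$. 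This leaves every action's expected payment from signal $k$ unchanged, so IC holds, and it changes the objective by $q^0_{i,k}(p'_k-p_k)d_k\le 0$. Stop at the first $p'_k$ where $s'_k=\max_j t'_{k,j}$. That value is bounded by the first step applied to the new contract, or you reach $p'_k=0$ if all $t_{k,j}=0$. If $p_k=1$, simply lower $s_k$ to $\max_j t_{k,j}$.
\end{itemize}
Only after these reductions does the sublevel set become compact.
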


\subsection{Deterministic vs. Optimal Inspection}

We now turn to the question of how powerful randomized inspection policies can be, in terms of what actions they can incentivize and at what cost. The following theorem characterizes implementability and bounds the cost savings of nondeterminism for CoMI, CoNI, and UMI:

\begin{theorem}\label{thmPaymentGaps}
    For any adaptive contract setting and action:
    \begin{enumerate}
        \item\label{itmPaymentGapsImplementability} In each of CoMI, CoNI, UMI, and UNI, it is possible to incentivize an action $i$ with a nondeterministic contract if and only if it is possible to incentivize $i$ with a deterministic contract.
        \item\label{itmPaymentGapsCoNIUMICoMI} In CoNI, UMI, or UNI, if an optimal contract to incentivize $i$ ever pays for inspection, the principal can incentivize $i$ with a strictly smaller expected payment in CoMI; otherwise, the minimum expected payments are the same.
        \item\label{itmPaymentGapsDetCoNIUMI} In CoNI, UMI, or UNI, the principal's minimal payment to incentivize $i$ is weakly smaller than their minimum 
        payment required under a deterministic contract.
        \item\label{itmPaymentGapsFreeDet} The infimum 
        expected
        payment required to incentivize $i$ in CoMI is within $\expect{k\sim q^0_{i^*}}{d_k}$ of the minimum
        payment required to incentivize $i$ in a deterministic contract.
    \end{enumerate}
\end{theorem}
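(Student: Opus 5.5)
The plan is to prove the four items in the order (1), (3), (4), (2). Each comparison will come from an explicit transformation of contracts that preserves the agent's incentives. The basic tool is a \emph{rescaling} transformation. Fix a signal $k$ with $0<p_k$, and replace $(p_k,s_k,t_{k,\cdot})$ by $(p_k',s_k',t_{k,\cdot}')$ so that the conditional expected transfer given $k$ is unchanged \emph{as a function of the action}:
\[
(1-p_k)s_k + p_k\,t_{k,j} = (1-p_k')s_k' + p_k'\,t'_{k,j} \quad \text{for all } j.
\]
With $s_k'=s_k$, this becomes $t'_{k,j}=s_k+\tfrac{p_k}{p_k'}(t_{k,j}-s_k)$. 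Under this identity $T_i$ is the same for every action, so the agent's best response is unchanged.

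\textbf{Items (1) and (3).} For (1), the deterministic-to-randomized direction is trivial, since a deterministic contract is a special case in every variant. It is feasible in UMI/UNI because a pure inspection decision can be credible: we can always choose payments that make the principal indifferent, or treat $p_k\in\{0,1\}$ as a commitment. I will check this carefully against the UMI constraints in Appendix~\ref{appQCLP}. For the converse, take a randomized contract and set $p_k'=1$ whenever $p_k>0$, with $t'_{k,j}=(1-p_k)s_k+p_kt_{k,j}$ and $s_k'$ arbitrary. The new payments are nonnegative, $T_i$ is identical for all actions, and so the same action is incentivized. For (3), the inequality is weak, so it suffices to show that the optimal deterministic contract is feasible in CoNI, UMI and UNI. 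For CoNI and UNI this requires $t_{k,j}\le s_k$ on inspected $k$. Since $s_k$ is irrelevant when $p_k=1$, we set $s_k=\max_j t_{k,j}$. For UMI, subgame perfection with $p_k=1$ only needs inspection to be weakly preferred, so I will again use the freedom in $s_k$ (raising $s_k$ makes not inspecting costlier) to satisfy the credibility constraint.

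\textbf{Item (4).} The lower bound follows because CoMI payment plus inspection cost is at least the payment under free inspection, and free inspection lets us set $p\equiv 1$ and pay the deterministic $t$. For the upper bound, take an optimal deterministic contract and apply rescaling with $p_k'\to 0$. Its expected payment equals the free-inspection payment plus $\sum_k q^0_{i,k}d_k p_k$, which is at most $\E_{k\sim q^0_i}[d_k]$. I also need the direction deterministic minus CoMI-infimum $\le \E[d_k]$. This follows from Theorem~\ref{thmCoMInoStackelberg}: the CoMI infimum equals the zero-inspection-cost value, and deterministic with all $p_k=1$ costs that value plus at most $\E[d_k]$.

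\textbf{Item (2) (main obstacle).} Given an optimal CoNI/UMI/UNI contract that inspects some $k$ with $q^0_{i,k}p_kd_k>0$, apply the CoMI rescaling with $p_k'<p_k$. Agent incentives and $T_i$ are preserved, and $D_i$ strictly drops, giving a strictly smaller CoMI cost. If no optimal contract pays for inspection, the CoMI infimum equals the zero-cost value by Theorem~\ref{thmCoMInoStackelberg}, and I must show that this value is attained in the restricted variant. This is the delicate step. Take the zero-cost optimum, which never inspects a $k$ with $q^0_{i,k}d_k>0$ (otherwise the restricted optimum would have had to pay for inspection, by the trade-off argument above). The remaining inspections are on signals with zero cost or zero probability, and these can be made CoNI-feasible by raising $s_k$ and UMI-credible because the principal is indifferent at cost $0$. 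Making this case split rigorous, in particular matching ``ever pays for inspection'' with the dichotomy in Theorem~\ref{thmCoMInoStackelberg}, is where I expect most of the work.
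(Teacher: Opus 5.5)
Items (1) and (3) follow the paper's route: simulate a randomized contract by always inspecting with $t'_{k,j}=(1-p_k)s_k+p_kt_{k,j}$, and make a deterministic contract feasible in the restricted variants by resetting payoff-irrelevant variables. However, the step you flag in (2) is a gap that cannot be closed, because the clause ``otherwise, the minimum expected payments are the same'' is false. Your parenthetical claim (a zero-cost optimum never inspects a costly signal unless the restricted optimum pays for inspection) therefore has no valid justification. The vanishing-probability trade-off that makes CoMI cheap is exactly what the CoNI/UMI restrictions forbid.

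Take $n=2$, $\bc=(0,1)$, target action $2$, and two signals with $\bq^0_1=(2/3,1/3)$ and $\bq^0_2=(1/3,2/3)$. Let inspections be perfectly revealing, $\bq^k_1=(1,0)$ and $\bq^k_2=(0,1)$, with $d_1=d_2=100$.
\begin{itemize}
\item In CoMI, inspect signal $2$ with probability $p$ and pay only $t_{2,2}=3/(2p)$. This incentivizes action $2$ at total cost $1+200p/3$. Since $c_2-c_1=1$ is a lower bound, the CoMI infimum is $1$.
\item In UMI and UNI, inspecting $k$ with positive probability forces $s_k\ge d_k+\sum_jq^k_{2,j}t_{k,j}$. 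Hence it costs at least $q^0_{2,k}d_k\ge 100/3$.
\item In CoNI, write $\tau_{k,a}=(1-p_k)s_k+p_k\sum_jq^k_{a,j}t_{k,j}$. The constraint $t_{k,j}\le s_k$ gives $\tau_{k,1}\ge(1-p_k)\tau_{k,2}$. The resulting relaxation bounds the cost below by $2/(1+p_2)+200p_2/3\ge 2$ if $p_1\le 1/2$, and by $1+50/3$ otherwise.
\end{itemize}
So in all three restricted variants the optimum is the non-inspecting contract $s_2=3$, which never pays for inspection. Yet its payment, $2$, exceeds the CoMI infimum, $1$. The paper's own proof of (2) establishes only two things: the weak inequality (every CoNI/UMI/UNI contract is CoMI-feasible) and the strict improvement via Lemma~\ref{lemNoCoMIEquilibrium}. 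It never argues the equality, and you should not try to.

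Separately, your rescaling with $s'_k=s_k$ breaks limited liability exactly where you use it. For $p'_k<p_k$ it gives $t'_{k,j}=s_k+\tfrac{p_k}{p'_k}(t_{k,j}-s_k)$, which is negative whenever $t_{k,j}<s_k(1-p'_k/p_k)$. In particular this happens for any $t_{k,j}=0<s_k$, which is the natural shape of a CoNI contract. So the strict half of (2), as written, produces an infeasible CoMI contract. Use instead the multiplicative rescaling of Lemma~\ref{lemNoCoMIEquilibrium}, $s'_k=\tfrac{1-p_k}{1-p'_k}s_k$ and $t'_{k,j}=\tfrac{p_k}{p'_k}t_{k,j}$. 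It satisfies your per-outcome identity and keeps all payments nonnegative.

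In (3), UNI requires both constraints at once:
\begin{itemize}
\item On inspected signals, take $s_k=d_k+\max_jt_{k,j}$. Your choice $s_k=\max_jt_{k,j}$ can violate credibility when $d_k>0$.
\item On uninspected signals, take $t_{k,j}:=s_k$. The credibility constraint $s_k\le d_k+\sum_jq^k_{i,j}t_{k,j}$ applies there too, and $t_{k,j}=0$ need not satisfy it.
\end{itemize}

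Finally, your route for (4) is a valid alternative to the paper's. With $y_i$ the free-inspection value, you chain $y_i\le{}$CoMI infimum${}\le{}$deterministic minimum${}\le y_i+\E_{k\sim q^0_i}[d_k]$. The paper instead simulates an arbitrary CoMI contract by always inspecting, at extra cost $\sum_kq^0_{i,k}(1-p_k)d_k$. Do drop your claim that the optimal deterministic contract costs ``the free-inspection payment plus $\sum_kq^0_{i,k}d_kp_k$''. It is false in general: in the example above the deterministic optimum costs $2$, while $y_2=1$ and no inspection is paid.
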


The relationships between the principal's minimal payment in each variant are depicted in Figure~\ref{fig:results_min_pay_axis}, and the proof is in Appendix~\ref{appPaymentGaps}.
This result has complexity implications as well. One might hope that non-deterministic contracts are easier to find since the domain of each $p_k$ variable is the convex set $[0, 1]$ rather than the discrete set $\{0, 1\}$. However, combining Theorem~\ref{thmPaymentGaps} (\ref{itmPaymentGapsDetCoNIUMI}) and an observation about the proof of Theorem~\ref{thmHardnessOfApproximation}, we can conclude that at least two of our nondeterministic variants are still hard to approximate:

\begin{theorem}\label{thmUmiHardness}
    It is NP-hard to approximate the optimal principal utility to a constant factor in either UMI or UNI.
\end{theorem}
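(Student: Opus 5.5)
We reuse the reduction from Independent Set in the proof of Theorem~\ref{thmHardnessOfApproximation}, essentially verbatim. From a graph $G=(V,E)$ it builds an instance with one good action of tiny cost and one bad action per edge. The signal is a uniformly random vertex, and inspecting vertex $v$ reveals whether the agent took a bad action on an edge incident to $v$. Our goal is to show that the optimal principal utility in UMI (and in UNI) stays within the same multiplicative window of the maximum independent set size $\alpha(G)$ as in the deterministic case. The gap-preserving argument then carries over unchanged.

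The argument has two directions.

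\emph{Lower bound on UMI/UNI utility.} This follows from Theorem~\ref{thmPaymentGaps} (\ref{itmPaymentGapsDetCoNIUMI}): for every action, the minimal payment in UMI or UNI is weakly smaller than the deterministic minimum. Hence the optimal UMI/UNI utility is at least the deterministic optimum. By the original reduction, that optimum is roughly proportional to $\alpha(G)$ in the YES case.

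For UNI, we must also check that the deterministic contract from the reduction satisfies $t_{k,j}\le s_k$, or can be modified to do so without loss. The intended contract pays a small amount unless inspection reveals a bad action, so we set $s_k$ equal to the "clean" payment and $t_{k,\text{detected}}=0$. We also verify that a deterministic contract with $p_k\in\{0,1\}$ is credible. It is not obviously so, since credibility requires indifference only when $0<p_k<1$. Where pure inspection is not credible, we instead rely directly on Theorem~\ref{thmPaymentGaps} (\ref{itmPaymentGapsDetCoNIUMI}), which already asserts the comparison.

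\emph{Upper bound on UMI/UNI utility (the key observation).} In the original proof, the NO-case upper bound presumably comes from the following facts. To deter the bad action on edge $\{u,v\}$, the contract must inspect $u$ or $v$ with positive probability. The principal's cost is then at least the inspection cost summed over the inspected set, which is a vertex cover. We must redo this accounting with fractional $p_k$.

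In UMI/UNI, credibility forces indifference: for $0<p_k<1$, the principal's expected cost of inspecting $k$ must equal that of not inspecting. So the expected inspection cost $d_k$ is offset by an expected payment saving, namely $s_k$ minus the expected $t_{k,j}$ under the principal's belief about the equilibrium action. This makes the principal's total expected cost on signal $k$ equal to the cost of full inspection, or equivalently of no inspection. Because the agent takes the good action on path, inspection never detects anything, so the saving comes only from $s_k - t_{k,\text{clean}}$. Indifference then forces $s_k - t_{k,\text{clean}} = d_k$, which is a large payment.

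It follows that every vertex $k$ with $p_k>0$ costs the principal about $d_k$ per visit, either in inspection or in payment. Consequently, the utility is again at most roughly proportional to $|V|$ minus the size of the support of $\bp$. That support must be a vertex cover, so the NO-case bound on $\alpha(G)$ is preserved.

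\emph{Main obstacle.} The delicate step is making the indifference accounting precise. We must show that randomization with credibility cannot cheaply deter bad actions. The concern is that a small $p_k$ combined with a large $t_{k,j}$, which is exactly the CoMI trick, might be sustained. In UNI this is ruled out directly, because $t_{k,j}\le s_k$ caps the incentive generated by inspection by the uninspected payment. In UMI we would invoke the indifference constraint from the QCLP formulation in Appendix~\ref{appQCLP}: the expected payment conditional on signal $k$ is the same whether or not the principal inspects. Consequently, the expected payment plus inspection cost equals the deterministic-inspection value on $k$.

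With these two bounds, the constant-factor gap from \cite{IndSetHard} transfers, giving NP-hardness of constant-factor approximation for both UMI and UNI.
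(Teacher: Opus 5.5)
Your proposal follows the paper's proof. The YES case comes from Theorem~\ref{thmPaymentGaps}~(\ref{itmPaymentGapsDetCoNIUMI}). The NO case comes from the fact that, without commitment, every signal $k$ with $p_k>0$ must satisfy Equation~(\ref{equEquilibriumHighP}). Each such signal therefore costs the principal at least $q^0_{i,k}d_k$ in expectation: $1$ per vertex in the support of $\bp$, and $|V|+1$ for the dummy signal. This same bound is what disposes of supports that are not vertex covers, and you should state it explicitly, because in the randomized setting the dummy case is no longer automatic. With commitment, a tiny dummy-inspection probability would make it cheap.

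One remark in your ``main obstacle'' paragraph is false and should be dropped: the negativity constraint $t_{k,j}\le s_k$ does not by itself prevent cheap randomized deterrence. Consider CoNI, which has negativity plus commitment. Take $p_v=p$, $s_v=t_{v,B}=\varepsilon/(2p)$ and $t_{v,A}=0$ for every $v\in V$, and choose $p$ suitably. This contract deters every edge action at total expected cost $O(\sqrt{\varepsilon|V|})$, independently of the graph, so the reduction carries no gap there. For UNI, the work is done by the credibility constraint inherited from UMI, which your upper-bound paragraph already invokes. For the same reason, the inspected signals in your UNI lower-bound contract need $s_k\ge d_k+t_{k,B}$, as in the proof of Theorem~\ref{thmPaymentGaps}. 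Setting $s_k$ equal to the clean payment is not credible when $d_k>0$.
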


See Appendix~\ref{appUmiHardness} for the proof.
Finally, we remark that the middle inequality from Figure~\ref{fig:results_min_pay_axis} is \emph{sometimes} strict. In other words, randomized inspection can improve principal's optimal utility in some cases, even with either the commitment or negativity constraints imposed. The proof uses the help of optimization software, and is presented in Appendix~\ref{appNondeterminismOptimal}.

\begin{theorem}\label{thmNondeterminismOptimal}
    There exist instances of UMI and CoNI where nondeterministic inspection is strictly optimal.
\end{theorem}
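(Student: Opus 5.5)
The plan is to prove each half of the statement with an explicit small instance, certified by exact comparison against every deterministic policy. By the remark in \Cref{sec:computing_optimal_contracts}, once the inspection vector $\bp\in\Set{0,1}^\ell$ is fixed, the optimal payments $(\bs,\bt)$ come from a linear program. So for an instance with small $\ell$, the best deterministic utility is the maximum of $2^\ell\cdot n$ LP values, which I can bound exactly with dual certificates. Strict optimality of nondeterminism then only requires one feasible randomized contract, satisfying the variant's extra constraints, whose principal utility beats that maximum.

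\textbf{CoNI.} I would use an analytic instance with $n=2$, $\ell=1$, $m_1=2$. The data are $\bq^0_1=\bq^0_2=(1)$, $\bq^1_1=(1-a,a)$ and $\bq^1_2=(1-b,b)$ with $b>a$ (so MLRP holds), costs $\bc=(0,c)$, inspection cost $d$, and a large reward for outcome $2$, so the principal wants action $2$.
\begin{itemize}
\item Under $t_{1,j}\le s_1$ and limited liability, the best choice for inspection probability $p$ is $t_{1,2}=s_1$ and $t_{1,1}=0$.
\item The incentive constraint is then $p(b-a)s_1\ge c$, so $s_1=c/(p(b-a))$.
\item The principal's expected cost becomes
\[ s_1\bigl(1-p(1-b)\bigr)+pd=\frac{c}{p(b-a)}-\frac{c(1-b)}{b-a}+pd. \]
\end{itemize}
The deterministic options are only $p\in\{0,1\}$, and $p=0$ cannot implement action $2$ because the signal is uninformative. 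The cost is strictly convex in $p$ with minimizer $p^*=\sqrt{c/((b-a)d)}$. Choosing $c<(b-a)d$ gives $p^*\in(0,1)$, so $p^*$ strictly beats $p=1$. Finally I would set rewards large enough that incentivizing action $2$ beats action $1$ at zero payment, which makes the strict gap in cost a strict gap in principal utility.

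\textbf{UMI.} This half is the main obstacle, because the same instance fails there. Credibility forces indifference, $s_k=\E_{j\sim\bq^k_{i^*}}[t_{k,j}]+d_k$ whenever $0<p_k<1$. With one signal the principal's cost is then just $s_1$, and since $\bt$ is unbounded above, the cost decreases in $p$, so $p=1$ is optimal. A strict gap therefore needs several signals and at least three actions. The mechanism I would aim for is that a credible interior $p_k$ lets the incentive $T_{i^*}-T_{i'}$ against one deviating action be tuned separately from the constraint against another. A deterministic policy can only achieve this at the cost of either paying $d_k$ for sure or losing the outcome-based separation.

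Rather than guess parameters, I would search numerically. I would solve the QCLP of Appendix~\ref{appQCLP} over small random MLRP instances ($n=3$, $\ell=2$, $m_k=2$) and compare against the enumerated deterministic LPs. Once an instance with a gap is found, I would round it to rationals and read off a randomized contract $(\bp,\bs,\bt)$ with some $p_k\in(0,1)$. I would then verify exactly that it satisfies all incentive constraints, the indifference constraints, and limited liability, and compute its utility. For each of the four deterministic $\bp$ and each target action, I would give a dual feasible solution of the payment LP bounding its utility strictly below that value. This reduces the proof to a finite, checkable certificate, which is consistent with the statement that the proof relies on optimization software.
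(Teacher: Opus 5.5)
Your CoNI half is correct and takes a different, more elementary route than the paper. Your one-signal instance is fully analytic: the cost $\frac{c}{p(b-a)}-\frac{c(1-b)}{b-a}+pd$ is strictly convex, its minimizer $p^*=\sqrt{c/((b-a)d)}$ is below $1$ when $c<(b-a)d$, and $p=0$ is infeasible. It also satisfies MLRP and needs no solver. The paper instead uses a single three-action, two-signal instance for both variants and certifies the numbers with Gurobi. Your explanation of why one signal cannot work for UMI is also right. Indifference makes the cost $s_1=d_1+\E_{j\sim\bq^1_{i^*}}[t_{1,j}]$, while the incentive constraints depend on $\bt$ only through $p\,\bt$, so rescaling to $p=1$ never hurts.

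The gap is that the UMI half contains no instance. The theorem is an existence claim, so the witness is the entire content of this half. A plan to search random instances and certify whatever turns up is not a proof until the search succeeds, and nothing in the proposal shows your search space contains a witness. What is missing is a concrete instance realizing the mechanism you describe: one cheap deviator separated from the target only by the signal, and another separated only by the outcome.

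The paper's witness does exactly this. Take $\bq^0$ with rows $(0.5,0.5),(0.6,0.4),(0.6,0.4)$, $\bq^1=\bq^2$ with rows $(0.6,0.4),(0.4,0.6),(0.6,0.4)$, $\bc=(0,0,1)$, $\bd=(1,1)$, and target action $3$. Inspect signal $1$ with probability $p$, and pay $t_{1,1}=x$, $t_{1,2}=0$, $s_2=0$, and $s_1=1+0.6x$ (by indifference). The principal's cost is then $0.6(1+0.6x)$. The IC constraints against actions $1$ and $2$ become $0.1\bigl((1-p)+0.6x\bigr)\ge 1$ and $0.12\,p\,x\ge 1$. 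Making both tight gives $p^2+9p-5=0$, so $p\approx 0.525$ and the cost is $\approx 6.315$.

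Deterministically, the comparison is as follows:
\begin{itemize}
\item Never inspecting is infeasible, because actions $2$ and $3$ have the same signal distribution.
\item Inspecting only signal $2$ costs $13.9$.
\item Every policy inspecting signal $1$ needs expected transfer at least $6$, certified by dual multipliers $6$ and $0$ on the two IC constraints. Adding inspection cost at least $0.6$ gives a total of at least $6.6$.
\end{itemize}

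This witness violates MLRP, since $\bq^0$ and $\bq^1$ rank actions $1$ and $2$ in opposite orders. Restricting your search to MLRP is therefore a constraint the paper does not impose. MLRP witnesses do exist: take $\bq^0$ with rows $(0.8,0.2),(0.4,0.6),(0.4,0.6)$, $\bq^1=\bq^2$ with rows $(0.6,0.4),(0.6,0.4),(0.4,0.6)$, $\bc=(0,0.8,1)$, and $\bd=(1,1)$. There, UMI-credible randomized inspection of signal $2$ lowers the cost of action $3$ from $9/7+0.6\approx 1.886$ to about $1.721$. Exhibiting such an instance is exactly the step your proposal leaves undone.
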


\section{Empirical Evaluation}
\label{sec:empirical}

To demonstrate the value of adaptive contracts in practical scenarios, we present two empirical case studies of adaptive contracts for delegated text generation using LLM evaluation benchmark data. The first case study demonstrates the empirical gains of contracts in a general question answering dataset where our theoretical assumptions do not hold. The second demonstrates the value of adaptivity in a setting where the assumptions of \Cref{thm:mlrp_isop_tractability} initially hold,  and shows that the functional form of optimal contracts remains simple even as assumptions are gradually relaxed.

\subsection{Benefits of Adaptivity in Question-Answering}
\label{sec:alpacaeval}
\newcommand{\AlpacaLengthThreshold}{250}
\newcommand{\AlpacaNPrompts}{805}
\newcommand{\AlpacaLongPay}{0.03}
\newcommand{\AlpacaShortInspectedPay}{0.47}
\newcommand{\AlpacaNBootstrapRepetitions}{100}

\begin{figure}
    \centering
    \includegraphics[width=\columnwidth]{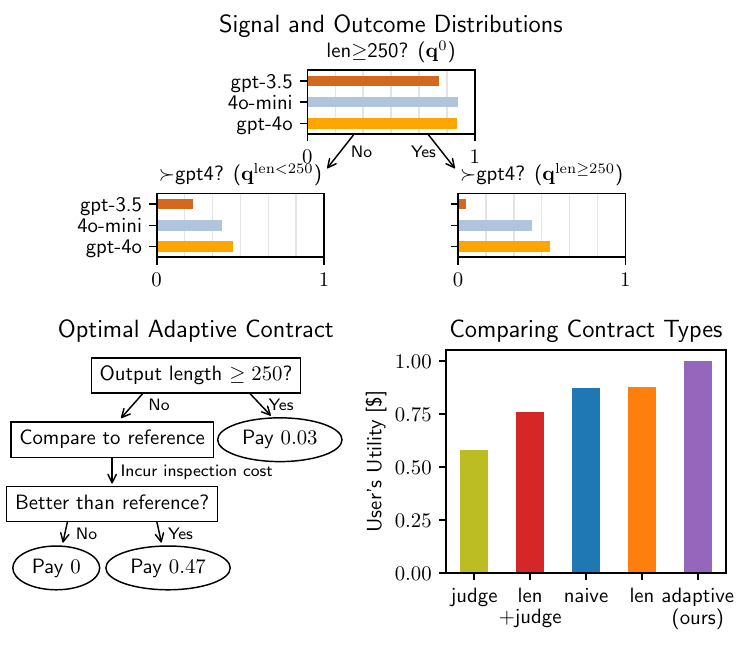}
    \caption{
    Empirical evaluation of adaptive contracts on AlpacaEval data (\Cref{sec:alpacaeval}).
    \textbf{(Top)}
    Outcome distributions. Initial signal is length thresholding, and inspection provides human-annotated pairwise comparison to a reference output generated by GPT-4.
    \textbf{(Bottom Left)}
    Optimal adaptive contract. The contract inspects short outputs, and rewards concise answers which are better than the GPT-4 reference.
    \textbf{(Bottom Right)} Comparison of principal utility across different types of contracts. Adaptive contracts yield significantly higher utility to the principal in this setting.
    }
\label{fig:alpaca_experiment}
\end{figure}

\textbf{Dataset.}
For our first case study, we use evaluation data gathered from the AlpacaEval 2.0 dataset \citep{alpaca_eval}. AlpacaEval evaluates language models by prompting them with a fixed set of 
\AlpacaNPrompts{}
prompts, and comparing responses to outputs generated by a reference model (GPT-4 Turbo). A response is labeled as `high-quality' if it compares favorably to the reference output, and comparison methods vary.

\textbf{Actions, signals and outcomes.} 
Our setup follows the two-tiered evaluation setting illustrated in  Example~\ref{ex:two-tier}: 
For the space of actions, we use a selection of LLMs currently offered by OpenAI (GPT-3.5 Turbo, GPT-4o Mini, and GPT-4o).
As the coarse evaluation signal, we use a response length threshold, which indicates whether the response is longer than $\AlpacaLengthThreshold$ characters. As the costly evaluator, we use the AlpacaEval pairwise evaluation, which marks a response as ``good-quality'' if it is preferred over a reference LLM output generated independently. 
The induced signal and conditional outcome distributions are illustrated in \Cref{fig:alpaca_experiment} (Top).
Note that the cost of invoking the second evaluation is considerable, as it involves generating a reference output and performing costly pairwise comparison.

\textbf{Costs and rewards.} As a first-order estimate of costs, we use the public OpenAI API pricing,
with GPT-4o Mini being the least costly model, and GPT-4o having the highest cost.
We also note that the signal distribution $\bq^0$ doesn't satisfy MLRP, as GPT-4o Mini tends to generate longer responses compared to GPT-4o, despite being less costly.
For inspection, we assume human pairwise comparison costs, and use the human evaluation costs reported by AlpacaEval.
For principal rewards, we assume that an answer which compares favorably to the reference GPT-4 output generates a reward of \$2, modeling a competitive advantage of over the market baseline. See \Cref{subsec:appx_sensitivity} for for further discussion of parameter choice and sensitivity.

\begin{figure}
    \centering
    \includegraphics[width=\columnwidth]{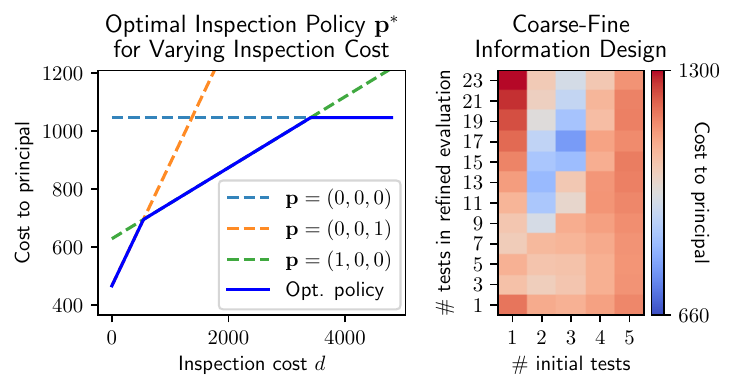}
    \vspace{0.5em}
    \caption{
    Adaptive contracts on SWE-Bench data (\Cref{sec:swebench}).
    \textbf{(Left)} Optimal inspection policy as a function of inspection cost, showing transition between three optimal policies as $d$ grows.
    \textbf{(Right)} Information design optimization heat map. Cells represent cost of optimal contract given initial and refined test counts.
    }
\label{fig:swebench_experiment}
\end{figure}

\textbf{Results.}
Results are illustrated in \Cref{fig:alpaca_experiment}.
We compute optimal contracts by enumeration, and compare the principal's utility to non-adaptive contract baselines. 
The optimal contract is illustrated in \Cref{fig:alpaca_experiment} (Bottom Left): Short responses are inspected, and the contract pays \$\AlpacaShortInspectedPay{} for outputs judged as favorable; otherwise, the contract pays \$\AlpacaLongPay{} for long outputs. The adaptive contract incentivizes use of GPT-4o, and yields expected utility of \$1.00 to the principal. This utility is \%14 higher than the closest non-adaptive baseline (\Cref{fig:alpaca_experiment} Bottom Right).
Appendix~\ref{app:experiment} provides further interpretation and details.

\subsection{Adaptive Contracts in Agentic Coding}
\label{sec:swebench}
\newcommand{\SWEBenchNEvals}{{500}}
\newcommand{\swebench}{\mbox{SWE-Bench}}
\newcommand{\SWEBenchNInitial}{{2}}
\newcommand{\SWEBenchNRefined}{{8}}
\newcommand{\SWEBenchNModels}{{six}}

\textbf{Dataset.}
Our second case study analyzes adaptive contracts for delegation of generative coding tasks, based on data from the \swebench{} benchmark \citep{jimenez2024swebench}, which evaluates coding agents' ability in complex codebases. 
We use the `bash-only' sub-benchmark, which consists of \SWEBenchNEvals{} Github issues  (bug reports), together with pass/fail unit tests. The goal of the generative model in this benchmark is to resolve the issues by modifying the code to pass the tests.

\textbf{Actions, signals and outcomes.} For the space of actions, we use a set of \SWEBenchNModels{} contemporary OpenAI LLMs, ranging from GPT-3o to GPT-5.
In the base setup, the initial signal space represents the success rate in a small suite of \SWEBenchNInitial{} tests, and the refined test suite is of size \SWEBenchNRefined{}. Signal and outcome distributions are approximated by binomial distributions, initialized with the aggregate empirical success rates. 
As Binomial distributions satisfy MLRP and test results are independent, this setting satisfies both MLRP and ISOP.

\textbf{Costs and rewards.}
We use the inference costs reported in the \swebench{} dataset. For inspection costs, we assume that running a single test entails a fixed cost to the principal, and we vary this cost in our experiments. For rewards, we assume that the expected reward of the best-performing model (\mbox{GPT-5}) is high enough so it is always targeted.

\textbf{Results and interpretation.} Results are illustrated in \Cref{fig:swebench_experiment}. As inspection cost $d$ grows, \Cref{fig:swebench_experiment} (Left) shows that the optimal contract transitions between three distinct optimal policies: from inspection on full success, to inspection on full failure, and eventually to no inspection. 
Additionally, \Cref{fig:swebench_experiment} (Right) shows a heat-map representing different information-design trade-offs between initial and refined test counts, finding that a suite of 3 initial tests and an extended suite of 17 tests is optimal for the given data.

\textbf{Additional experiments.}
In Appendix~\ref{app:experiment}, we provide additional results and interpretation, showing that simple inspection policies remain optimal even under parameterized correlation or random perturbation of distributions, and analyzing agent utility as informativeness varies.

\section{Discussion}

Our findings open several avenues for future work: 
First, the problem of finding the optimal adaptive contract is hard with general dimensions and correlated evaluations; are there more nuanced assumptions that make near-optimal adaptive contracts tractable? A natural such assumption is limited correlation among the evaluations. In particular, would it be sufficient for tractability that the outcome is conditionally independent of the signal given a hidden \emph{state-of-nature}?

Second, there is much more to investigate for randomized inspections:
We have shown that randomization can strictly lower costs for the user, and that this holds even under either of our two practical constraints (on the user's power to commit to probabilities, and on the magnitude of payments); however, we do not have a tight upper-bound on just how large the cost savings can be. 
Complexity questions remain as well---we establish either hardness or tractability for three of the four problem variants, but it is unknown whether the final CoNI variant can be solved in polynomial time. 
Hardness of approximation and hardness subject to assumptions like independence also remain open. 

Finally, our setup focuses on a single adaptive round of inspection---a second opinion. Would our positive results for the independent evaluation setting carry over to more than two opinions, e.g., with a cost for each additional opinion or subject to a total inspection budget? 
Addressing this could shed new light on the general interaction between information acquisition and contract design.

%% file: figures/interaction_model_tikz.tex
\scalebox{1.0}{\tikzset{
schematicstyle/.style={
  font=\scriptsize,
  align=center,
  },
schematicprincipal/.style={
    schematicstyle,
    minimum width=6.7cm,
},
schematicagent/.style={
    schematicstyle,
    minimum width=6cm,
},
schematicline/.style={
  -{Latex[length=1.8mm]}
}
}
{\begin{tikzpicture}
\node[schematicprincipal] at (0,0) (l0) {\textbf{Principal} \textcolor{blue}{(\it Organization Outsourcing LLM Services)}};
\node[right=1cm of l0, schematicagent] (r0) {\textbf{Agent} \textcolor{orange}{(\it LLM Provider)}};
\node[draw, below=0.2cm of l0, schematicprincipal] (l1) {Commit to adaptive contract $(\bp,\bs,\bt)$.\\ \textcolor{blue}{(\it Commit to an LLM pay-for-performance pricing scheme.)}};
\node[draw, below=0.4cm of l1, schematicprincipal] (l2) {Observe signal $k\sim \bq^0_{i^*}$.\\ 
\textcolor{blue}{(\it Observe an initial coarse performance evaluation.)}\\If $p_k=1$ : Inspect the signal at cost $d_k$ and observe outcome $j\sim \bq^k_{i^*}$.\\
\textcolor{blue}{(\it Pay for a secondary performance evaluation.)}};
\node[draw, below=0.5cm of r0, schematicagent] (r1) {Take best-response action $i^*\in[n]$ and incur cost $c_{i^*}$.\\
\textcolor{orange}{(\it Choose a generative model and incur its expected cost.)}};
\node[draw, below=0.4cm of r1, schematicagent] (r2) {If $p_k=0$: Receive payment $s_k$.\\If $p_k=1$: Receive payment $t_{k,j}$.\\
\textcolor{orange}{(\it Receive payment from the principal according to}\\
\textcolor{orange}{\it the coarse or refined performance evaluation.)}};
\node[draw, below=0.4cm of l2, schematicprincipal] (l3) {Receive reward $r_{k,j}$.\\
\textcolor{blue}{(\it Extract value from the LLM performance.)}};
\draw[schematicline] (l1) -- (r1);
\draw[schematicline] (r1) -- (l2);
\draw[schematicline] (l2) -- (r2);
\draw[schematicline] (r2) -- (l3);
\end{tikzpicture}}}

%% file: body/acknowledgements.tex
The authors would like to thank
anonymous reviewers for their insightful remarks and valuable suggestions.
Eden Saig is supported by the Israel Council for Higher Education PBC scholarship for Ph.D. students in data science, and the Zuckerman STEM Leadership Program.
The work of Tamar Garbuz, Eden Saig and Inbal Talgam-Cohen received funding from the European Research Council (ERC) under the European Union's Horizon 2020 research and innovation program (grant No.: 101077862, project: ALGOCONTRACT, PI: Inbal Talgam-Cohen), from the Israel Science Foundation (grant No.: 3331/24), from the NSF-BSF (grant No.: 2021680), and from a Google Research Scholar Award and a Microsoft AIEI Fellowship. Ariel Procaccia was partially supported by the National Science Foundation under grants IIS-2147187 and IIS-2229881; by the Office of Naval Research under grants N00014-24-1-2704 and N00014-25-1-2153; and by a grant from the Cooperative AI Foundation.
Jamie Tucker-Foltz was supported by the National Science Foundation Graduate Research Fellowship Program under Grant No. DGE1745303 and a Google PhD Fellowship.

%% file: appendix/appendix_main.tex
\newcommand{\plotscale}{0.675}

\renewcommand{\thesubsection}{\thesection.\arabic{subsection}}
\setcounter{secnumdepth}{3}

\section{Delegated Text Generation as an Adaptive Contract Design Problem}
In this section, we formally describe Examples \ref{ex:two-tier} and \ref{ex:indep-evals} as instances of the adaptive contracts framework. 

\paragraph{Delegated text generation.}
In an LLM moral hazard setting (see~\cite{SaigET24}), there is a vocabulary of tokens denoted by $V$, and the set of all token sequences is denoted by $V^*$. A \emph{text generator} $g:V^*\to V^*$ is a mapping from a prompt $\Prompt$ to a response $\Response$. We assume there is a distribution over prompts such that $\Prompt\sim \mathcal{D}$, and denote by $\mathcal{D}_g$ the distribution over prompt-response pairs induced by $g$. A \emph{quality evaluator} $q$ is a possibly-stochastic mapping from a prompt-response pair to an integer score. For adaptive evaluation, we assume that there are two quality evaluators $q_L, q_H$, where $q_L$ is coarse and can be evaluated for free, and $q_H$ is more accurate but also costly to invoke. 

The agent has access to a set of generators $\gG=\Set{g_1,\dots,g_n}$, which we also refer to by their indices $[n]$. The expected cost incurred by the agent when using generator $g_i$ is assumed to be proportional to the average response length $c_i=\alpha_i  \expect{(\omega_0,\omega_R)\sim \mathcal{D}_i}{\Size{\omega_R}}$, where $\alpha_i>0$ is a generator-dependent coefficient. In contract design terminology, the generators $\gG$ are the agent’s possible actions. 
Given a prompt $\Prompt$, and upon receiving a response $\Response$, the principal invokes the coarse evaluator $q_L$ at no cost, revealing an \emph{initial signal}, denoted by $k=q_L(\Prompt,\Response)\in[\ell]$.
As the coarse signal may be uninformative, 
the principal may decide to further inspect the response using the costly evaluator $q_H$. Invoking $q_H$ incurs a cost $d_k \geq 0$ for the principal, and reveals an \emph{outcome}, denoted by $j=q_H(\Prompt,\Response)\in[m]$. 

\paragraph{Two-tiered evaluation.} In Example~\ref{ex:two-tier}, the user first receives a coarse evaluation at no cost, then they may pay for additional refinement. For example, the coarse evaluation may be implemented by length comparison, formally $q_L(\Prompt,\Response)=1$ if and only if the length of the response $\Response$ is higher than some fixed threshold. For the costly evaluation, one can use a pairwise comparison method such as LLM-as-a-judge or human evaluators (see \Cref{sec:empirical}), formally $q_H(\Prompt,\Response)=1$ if and only if $\Response$ is preferred to a reference output $\Response^\mathrm{ref}$ generated by GPT-4 using the same prompt \citep{alpaca_eval}. Inspection costs $\bd$ are induced by the costs of generating text and performing pairwise comparison given the output length. Ex-ante, the signal generated by $g_i\in\mathcal{G}$ is denoted by $k\sim \bq^0_i$, and its distribution is induced by the composition of $\mathcal{D}$ and $q_L$. Similarly, the refined outcome is denoted by $j\sim \bq^k_i$, and its distribution is induced by composing $\mathcal{D}|k$ with $q_H$.

\paragraph{Adaptive independent sampling.} In Example~\ref{ex:indep-evals}, there is a large batch of issues to be resolved, and the contract rewards the agent based on their performance on a random subset of tasks. Formally, given a batch size $B_0$, the user first samples a batch of tasks, represent as prompts $\Set{\Prompt^1,\dots,\Prompt^{B_0}} \sim \mathcal{D}^{B_0}$, then a response is generated for each prompt, resulting in a set of prompt-response pairs $W_0=\Set{(\Prompt^1,\Response^1),\dots,(\Prompt^B,\Response^B)}$. Initial evaluation is performed by evaluating all input-output pairs in $W_0$ using $q_L$ to yield the initial signal $k$, which represents the joint results, or their aggregate. The signal distribution $\bq^0$ is induced by the composition of $\mathcal{D}$, the random sampling that yields $W_0$, and the evaluation $q_L$. Outcome inspection is performed by sampling another (possibly larger) batch $W_1$ of size $B_1$, and evaluating through $q_H$ to yield output $j$. The dependence between $k$ and $j$ vanishes when $B$ is large, and the ISOP assumption increasingly holds (see Definition~\ref{def:isop}). 

\section{Limitations}
As shown above, the adaptive contract setting significantly generalizes the classic contract setting in a way that enables more efficient delegated text generation contracts. Yet our setting is not without limitations:
\paragraph{MLRP assumption.} The MLRP assumption is very natural in many settings, as higher evaluation scores are more likely given more costly effort on the provider side. However, not all evaluation methods satisfy this property; for example, the coarse inspection in our experiments (Section~\ref{sec:empirical}) simply measures length, and a long response may actually be a sign of a lower-quality, wordy model. We note that the definition of the signal and outcome spaces is to some extent up to the designer---e.g., instead of length thresholding, one could use bucketing, or alternatively, the distance from what would be considered an ideal response length. Such design choices can yield settings that are quite similar, with the additional property of satisfying MLRP.

\paragraph{Two-level inspection.} Our settings incorporate two levels of inspection, one free and one costly. The reason for this is that this is arguably the simplest possible setting that allows for adaptivity, and it still captures realistic scenarios. But there are scenarios in which a third or even fourth opinion is likely to be solicited; and/or where the inspection policy is best described by a multi-level decision tree. We leave this generalization for future work.

\paragraph{Deterministic inspection.} Our setting accommodates both deterministic and randomized inspections, but our positive computational results are for deterministic inspection schemes. We emphasize that this does not mean that the quality evaluators themselves cannot be stochastic; but the decision whether or not to inspect (using the possibly-stochastic evaluator) is binary. The advantage of determinism is that it yields natural adaptive contracts, as demonstrated in Figure~\ref{fig:alpaca_experiment}, where the contract turns out to be ``inspect concise answers". Section~\ref{sec:randomized} explores at length the relaxation of deterministic inspection.

\paragraph{Manipulation of evaluation metrics.}
Our framework assumes that the agent's action space is fixed. However, when the principal employs naive evaluation criteria, the agent may be able to manipulate the outcome distribution without meaningfully altering the true quality of the underlying model. For example, if a contract relies heavily on output length, the agent can tune the model to produce longer answers by tuning model weights, effectively creating a new action ``gaming'' the contract to maximize expected reward. That said, we also note that many evaluation methods of significant practical interest are in fact robust to gaming. 
For example, in code generation (discussed in Example 1.1 and \Cref{sec:swebench}), it is generally very challenging to pass unit tests without generating the correct program. As additional examples, direct evaluation by humans \citep{alpaca_eval} and LLM evaluation based on forecasting \citep{yang2025llm} are both generally robust to this type of manipulation.
Our framework supports the full spectrum of evaluation methods, regardless of their susceptibility to manipulation. We hope that our work will further motivate the consideration of strategic elements in LLM evaluation.

\paragraph{Size of action space.}
Our framework assumes that the space of actions is finite, and the algorithms we presents have polynomial dependence on its cardinality. While the possible number of potential model configurations and hyperparameter is theoretically unbounded, we note that many providers in the AI supply chain operate within a constrained set of alternatives. 
For example, within the enterprise sector, recent reports indicate that AI services are becoming increasingly embedded within existing business software \citep{gartner2025ai}.
Providers of these software products act as intermediate AI providers, processing requests and routing them to a foundation model providers (e.g. OpenAI, Anthropic or Google). The set of actions available to these providers is the set of compatible API endpoints, which is typically finite and reasonable in size: For example, OpenAI currently offers a selection of roughly 40 text models through its API\footnote{\url{https://developers.openai.com/api/docs/pricing}}, and the OpenRouter platform offers a selection of roughly 400 models through its cross-market API\footnote{\url{https://openrouter.ai/models}}.
As a possible direction for future inquiry, we note that the robust contract design results of \citet{carroll2015robustness} show that it is possible to design non-adaptive incentive-compatible contract even in settings where the principal only has information about a finite and possibly small subset of an action space that may be infinite. We hope that our work will motivate the application of robust approaches to the adaptive setting.

\paragraph{Bargaining power.}
Our framework assumes that users have the power to negotiate a favorable contract. While this assumption may not fully capture consumer-facing markets with posted-price subscription models, we note that a substantial portion of the AI services market is dominated by business and enterprise clients which are likely to have higher market leverage \citep{reuters2025anthropic, reuters2025openai}.
Moreover, these enterprise clients typically engage in long-term agreements and have higher vendor lock-in due to the technical integration efforts required. This long-term dependency heightens the risk of strategic quality degradation by AI providers, thereby motivating the consideration of pay-for-performance schemes.

\paragraph{Common prior.}
The optimization program we present assumes that the score distributions are known by the principal, and remain fixed for the duration of the contract. In response to similar concerns about the strength of these assumptions, the non-adaptive contract design literature offers methods for learning optimal contracts in dynamic environments that may evolve over time \citep{ho2014adaptive,zhu2022sample}, and for robust contract design, which often only requires partial and easier-to-track knowledge about the prior distribution, such as knowledge of expected values instead of full distributions \citep{dutting2019simple}. Both of these approaches typically use classic contracts as a basic building block or benchmark, and we hope that our results will motivate their extension to adaptive contract design settings.

\section{Agent's Utility}
\label{appendix:agent_utility}
Our contract design framework primarily adopts a principal-centric perspective, aiming to compute adaptive contracts that maximize the principal's expected utility. However, in many segments of the market, AI pricing and service delivery strategies are currently primarily decided by the model providers, who act as the agents in our setting. To fully contextualize our results within current LLM monetization paradigms, we discuss in this section the changes in the agent's expected utility under adaptive contracts compared to non-adaptive pricing schemes. 
We find that the introduction of adaptivity can be either beneficial or detrimental to the agent, depending on the problem parameters, and demonstrate this by identifying two distinct scenarios leading to opposite effects on the agent's utility:

\paragraph{Adaptive inspection leading to an agent utility decrease.}
When inspection is affordable and reveals the model chosen by the agent, then the principal completely eliminates moral hazard by inspecting. This can occur, for example, when outcome distributions of different models have negligible overlap in support, and inspection cost $d$ is sufficiently small. In such cases, the transfer of the optimal contract converges towards the target action's cost, and thus the agent's utility decreases to 0 -- effectively achieving the "first-best" solution, with the added cost of inspection. Emprically, we observe this in our code-generation experiments (\Cref{sec:swebench}). As demonstrated in \Cref{fig:agent_utility} and further detailed in Appendix \ref{appendix:swebench_additional}, the agent's utility monotonically decreases as the size of the refined test suite grows, providing the principal with a stronger signal and significantly reducing the moral hazard information gap.

\paragraph{Adaptive inspection leading to agent utility increase.}
Conversely, adaptive contracts can also lead to an increase in the agent's utility. When a high-cost LLM is not cost-effective for the principal without inspection but becomes reward-maximizing with inspection, the agent's utility may increase due to higher information rent. We observe this empirically in our AlpacaEval experiments (\Cref{sec:alpacaeval}), where the most capable and expensive model (GPT-4o) only emerges as the reward-maximizing action for the principal when inspection is permitted. This scenario may be of practical interest, as it demonstrates that adaptive contracts can strictly benefit \emph{both} the users of AI tools (the principal) and AI model providers (the agent), potentially providing additional motivation for adoption of adaptive contracts in practice.

\section{Proof Preliminaries}\label{app:preliminaries}

\subsection{Optimization Program}\label{appOptimizationProgram}
We seek the action $i$ that maximizes expected reward $R_i$ minus the expected inspection cost $D_i$ and expected payments $T_i$ required to incentivize it. This amounts to solving the following quadratically-constrained quadratic program (QCQP) for each action $i\in[n]$:

\begin{equation}
\label{eq:contract_qcqp}
\begin{aligned}
\text{\textbf{minimize}} \quad &
\sum_{k \in [\ell]} q^0_{i, k} \left((1 - p_k) s_k + p_k\left(d_k + \sum_{j \in [m_k]} q^k_{i, j} t_{k, j}\right)\right)
\\
\text{\textbf{subject to}} \quad &
s_k \geq 0 & \forall k \in [\ell] \\
&
t_{k, j} \geq 0 & \forall k \in [\ell],\ j \in [m_k] \\
&
0 \leq p_k \leq 1 & \forall k \in [\ell] \\
&
\sum_{k \in [\ell]} q^0_{i, k} \left((1 - p_k)s_k + p_k\sum_{j \in [m_k]} q^k_{i, j} t_{k, j}\right) - c_i \\ &\geq
\sum_{k \in [\ell]} q^0_{i', k} \left((1 - p_k)s_k + p_k\sum_{j \in [m_k]} q^k_{i', j} t_{k, j}\right) - c_{i'} & \forall i' \in [n]
\end{aligned}
\end{equation}

We refer to the last constraint above as the \emph{IC constraint}.

Note that in the deterministic inspection setting, which is the primary focus of our paper, we additionally require each $p_k$ to take values in $\{0, 1\}$. However, it will be useful to consider the more general problem without this restriction. Even when allowing $p_k$ to take any value in the interval $[0, 1]$, the program remains generally non-convex.

If we fix the values of $p$ and then optimize the remaining $s$ and $t$ variables, the problem reduces to a linear program, as elaborated in the next subsection.
This will allow us to treat \Eqref{eq:contract_qcqp} as a nested optimization problem, and provide tractability guarantees.

\subsection{Combined Outcome Space and  Distribution}
\label{combinedOutcomeSpace}
\tikzset{
schematicoutcome/.style={
    draw,
    minimum height=0.6cm,
    minimum width=0.6cm,
    align=center
  }
}
\begin{figure}
    \centering
    {\begin{tikzpicture}[every node/.style={font=\small},level 1/.style={sibling distance=30mm},level 2/.style={sibling distance=30mm, level distance=20mm},edge from parent/.style={draw,-latex}]
    \node{action $i\in[n]$}
    child {
        node[draw](s1){signal $1$}
        edge from parent node[fill=white]{\scriptsize $q^0_{i,1}$}
    }
    child {
        node[draw,dotted](s2){signal $2$}
        child {node[draw]{outcome $(2,1)$} edge from parent node[fill=white, pos=0.63]{\scriptsize $q^2_{i,1}$}}
        child {node[draw]{outcome $(2,2)$} edge from parent node[fill=white, pos=0.63]{\scriptsize $q^2_{i,2}$}}
        child {node[draw]{outcome $(2,3)$} edge from parent node[fill=white, pos=0.63]{\scriptsize $q^2_{i,3}$}}
        edge from parent node[fill=white]{\scriptsize $q^0_{i,2}$}
    }
    child {node[draw](s3){signal $3$} edge from parent node[fill=white]{\scriptsize $q^0_{i,3}$}}
    child {node[draw](s4){signal $4$} edge from parent node[fill=white]{\scriptsize $q^0_{i,4}$}};
    \node[below=0mm of s1,fill=white]{\scriptsize$p_1=0$};
    \node[below=0mm of s2,fill=white,minimum width=14mm]{\scriptsize$p_2=1$};
    \node[below=0mm of s3,fill=white]{\scriptsize$p_3=0$};
    \node[below=0mm of s4,fill=white]{\scriptsize$p_4=0$};
    \end{tikzpicture}}
    \caption{
    Schematic diagram of the combined outcome space $\Omega$.
    When the inspection policy $p$ is fixed, sampling from the combined outcome distribution $f_{i,\omega}$ (\cref{eq:combined_outcome_space_f}) is equivalent to a random walk on the inspection tree according to the signal and outcome probabilities ($q^0_{i,j}$ and $q^k_{i,j}$, respectively), starting from the root and proceeding until a leaf is reached.
    } \label{fig:combined}
\end{figure}
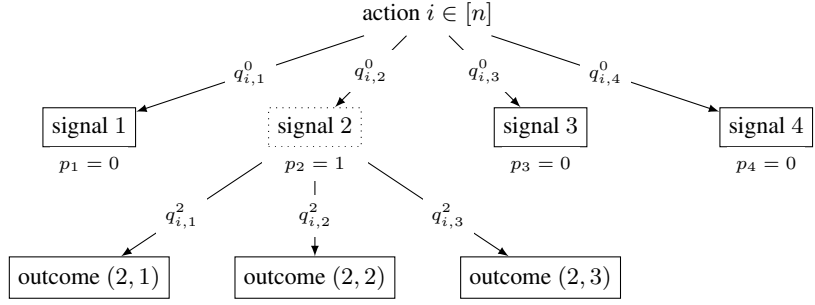
A key ingredient in our tractability proofs is a correspondence between second opinion contracts and classical contracts for any fixed inspection policy $p$. 
Given a second opinion contract setting,
we define the combined outcome space and corresponding distributions, which provide a more unified perspective on the signals and outcomes. 
For $k\in\Set{0,\dots,\ell}$, we denote the set of signals by $\Omega_0$ (such that $\Size{\Omega_0}=\ell$), and the sets of outcomes by $\Omega_k$ for $k\ge 1$ (such that $\Size{\Omega_k}=m_k$). We assume that elements in all sets are uniquely labeled, and therefore the sets do not intersect. The \emph{combined outcome space} $\Omega$ is the union: $$\Omega=\Omega_0\cup\dots\cup\Omega_\ell.$$ Given inspection probabilities $p\in\UnitInterval^\ell$, for any action-outcome pair $(i,\omega)$ the combined outcome  distribution is:
\begin{equation}
\label{eq:combined_outcome_space_f}
\begin{aligned}
f_{i,\omega}(p) = \begin{cases}
    q^0_{i,\omega}
    (1-p_\omega)
    & \omega\in\Omega_0
    \\
    q^0_{i,k}
    p_k
    q^k_{i,\omega}
    & \omega\in\Omega_k \text{ for $k>0$}
\end{cases}
\end{aligned}
\end{equation}
For brevity, we denote $f_{i,\omega}=f_{i,\omega}(p)$ when $p$ is clear from context.
We note that for any action $i$, the vector $f_{i,\omega}$ is a probability distribution since plugging in the definition and rearranging we get:

\begin{equation*}
\begin{aligned}
\sum_{\omega\in\Omega} f_{i,\omega} 
&= \sum_{\omega\in\Omega_0} f_{i,\omega} 
+ \sum_{k\in[\ell]} \sum_{\omega\in\Omega_k} f_{i,\omega} \\
&= \sum_{k\in[\ell]} (1-p_k) q^0_{i,k}
+ \sum_{k\in[\ell]} p_k q^0_{i,k} 
\underbrace{\sum_{\omega\in\Omega_k} q^k_{i,\omega}}_{=1} \\
&= \sum_{k\in[\ell]} q^0_{i,k} = 1
\end{aligned}
\end{equation*}

Similarly, the combined payments vector is:
\begin{equation}
\label{eq:combined_space_contract_vector}
v_\omega=\begin{cases}
s_\omega & \omega\in\Omega_0 \\
t_{k,\omega}
& \omega\in\Omega_k \text{ for $k>0$}
\end{cases}
\end{equation}

Figure~\ref{fig:combined} provides a schematic representation of the combined outcome space.
Using these notations, we reformulate the contract design problem:
\begin{lemma}
\label{lemma:union_contract_qcqp}
\Eqref{eq:contract_qcqp} is equivalent to the following optimization problem:

\begin{equation}
\label{eq:union_contract_qcqp}
\begin{aligned}
\text{\textbf{minimize}} \quad & \sum_{\omega\in\Omega} f_{i,\omega} v_\omega + D_i \\
\text{\textbf{subject to}} \quad & v \ge 0 \\
& p \in \UnitInterval^\ell \\
& \sum_{\omega\in\Omega} f_{i,\omega} v_\omega - c_i
\ge \sum_{\omega\in\Omega} f_{i',\omega} v_\omega - c_{i'} \quad \forall i' \neq i
\end{aligned}
\end{equation}

\end{lemma}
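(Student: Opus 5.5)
The plan is to verify that, for every fixed feasible choice of $(p,s,t)$, the objective and constraints of \Eqref{eq:contract_qcqp} coincide term by term with those of \Eqref{eq:union_contract_qcqp} under the identification of $(s,t)$ with the combined payment vector $v$ from \Eqref{eq:combined_space_contract_vector}. The map $(s,t)\mapsto v$ is a bijection between the variable spaces: $v$ is simply the concatenation of the entries $s_k$ for $k\in\Omega_0$ and $t_{k,j}$ for $j\in\Omega_k$, and the labels are disjoint. Nonnegativity of $s$ and $t$ is therefore exactly $v\ge 0$, and the constraint $p\in\UnitInterval^\ell$ appears in both programs. It then suffices to show that the objective values agree and that the IC constraints agree.

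The key identity is that, for every action $i'$,
\begin{equation*}
\sum_{\omega\in\Omega} f_{i',\omega} v_\omega = \sum_{k\in[\ell]} q^0_{i',k}\Big((1-p_k)s_k + p_k\sum_{j\in[m_k]} q^k_{i',j}t_{k,j}\Big) = T_{i'}.
\end{equation*}
To prove it, I will split the sum over $\Omega$ into the part over $\Omega_0$ and the parts over each $\Omega_k$ with $k\ge1$, mirroring the normalization computation done just before the lemma. I will then substitute \Eqref{eq:combined_outcome_space_f} and \Eqref{eq:combined_space_contract_vector}. The $\Omega_0$ part gives $\sum_k q^0_{i',k}(1-p_k)s_k$. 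Each $\Omega_k$ part gives $q^0_{i',k}p_k\sum_j q^k_{i',j}t_{k,j}$. Regrouping these by $k$ yields the right-hand side.

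Applying the identity to $i'$ ranging over all actions shows that the IC constraints of the two programs are literally the same inequalities. The constraint for $i'=i$ is trivial, so restricting to $i'\ne i$ loses nothing.

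For the objective, expanding the QCQP objective gives $T_i+\sum_k q^0_{i,k}p_k d_k = T_i + D_i$, by the definition of $D_i$ with $p_k$ now allowed to be fractional. This equals the objective of \Eqref{eq:union_contract_qcqp}. Hence the two programs have the same feasible sets, up to the bijection, and the same objective values, so they are equivalent. This also holds after restricting to $p\in\{0,1\}^\ell$.

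There is no real obstacle here; the argument is bookkeeping. The only care needed is to use the disjoint labeling so that $\sum_{\omega\in\Omega}$ decomposes cleanly, and to note that $D_i$ is read with the general (possibly fractional) $p$.
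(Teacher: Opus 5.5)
Your proposal is correct and follows the paper's proof: both split the sum over $\Omega$ into its $\Omega_0$ and $\Omega_k$ parts, substitute the definitions of $f_{i,\omega}$, $v_\omega$, and $D_i$, and match the objective and the IC constraints term by term. You also spell out two details the paper leaves implicit, namely the bijection $(s,t)\leftrightarrow v$ and the fact that the IC constraint for $i'=i$ is trivial.
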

\begin{proof}
By definition of $f_{i,\omega}$ (\Cref{eq:combined_outcome_space_f}), $D_i$ (\Cref{sec:setting}), and $v_\omega$ (\Cref{eq:combined_space_contract_vector}).

The objective of \Eqref{eq:contract_qcqp} is:
$$
\sum_{k \in [\ell]} q^0_{i, k} \left((1 - p_k) s_k + p_k\left(d_k + \sum_{j \in [m_k]}q^k_{i, j} t_{k, j}\right)\right)
$$
Rearranging the sums, we obtain:
$$
\underbrace{
\sum_{k \in [\ell]} q^0_{i, k} (1 - p_k) s_k
}_{
=\sum_{\omega\in\Omega_0}
f_{i,\omega} v_\omega
}
+
\underbrace{
\sum_{k \in [\ell]}
\sum_{j \in [m_k]}
q^0_{i,k}
p_k
q^k_{i, j} 
t_{k, j}
}_{
=\sum_{k\in[\ell]}
\sum_{\omega\in\Omega_k}
f_{i,\omega} v_\omega
}
+
\underbrace{
\sum_{k \in [\ell]}
q^0_{i,k}
p_k
d_k
}_{
=D_i
}
$$
Which is equal to the objective $\sum_{\omega\in\Omega} f_{i,\omega} v_\omega + D_i$.

Similarly, the (IC) constraint in \Eqref{eq:contract_qcqp} is:

\begin{equation*}
\begin{aligned}
&\sum_{k \in [\ell]} q^0_{i, k} \left((1 - p_k)s_k 
+ p_k\sum_{j \in [m_k]} q^k_{i, j} t_{k, j}\right) - c_i \\
&\quad \ge 
\sum_{k \in [\ell]} q^0_{i', k} \left((1 - p_k)s_k 
+ p_k\sum_{j \in [m_k]} q^k_{i', j} t_{k, j}\right) - c_{i'}
\end{aligned}
\end{equation*}

Using the same arguments, we rearrange and obtain equivalence to the inequality as required: 
$$
\sum_{\omega\in\Omega} f_{i,\omega} v_\omega - c_i 
\ge
\sum_{\omega\in\Omega} f_{i',\omega} v_\omega - c_{i'}.
$$
\end{proof}

We make the following observation about \Eqref{eq:union_contract_qcqp}:

\begin{lemma}
\label{lemma:union_contract_is_minpay_for_fixed_p}
For any fixed $p$, the optimal $v$ in \Eqref{eq:union_contract_qcqp} is a MIN-PAY optimal contract.
\end{lemma}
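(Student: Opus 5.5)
Fix $p\in\UnitInterval^\ell$ throughout. We will show that, restricted to the variables $v$, \Eqref{eq:union_contract_qcqp} is exactly the classical MIN-PAY linear program for action $i$ in an ordinary (non-adaptive) contract setting: the outcome space is $\Omega$, action $i'$ induces the distribution $f_{i',\cdot}(p)$, and action costs are $\bc$. Two facts suffice. First, each row $f_{i',\cdot}(p)$ is a probability distribution over $\Omega$; this was already verified right after \Cref{eq:combined_outcome_space_f}. Second, with $p$ fixed, every coefficient $f_{i',\omega}(p)$ is a constant, so the objective and all IC constraints are linear in $v$. The same holds for the limited-liability constraint $v\ge 0$.

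Next we handle the term $D_i=\sum_k q^0_{i,k}p_kd_k$. It depends only on $p$, not on $v$. So for fixed $p$ it is an additive constant in the objective and does not affect which $v$ is optimal. Minimizing $\sum_\omega f_{i,\omega}v_\omega + D_i$ is therefore the same as minimizing the expected payment $\sum_\omega f_{i,\omega}v_\omega$. The remaining constraints are exactly the classical requirements. The IC inequalities say that $i$ is a best response to $v$ in the induced classical setting, and ties are broken in the principal's favor, so weak inequality suffices. The constraint $v\ge0$ is limited liability. Hence the feasible set and objective coincide with those of the classical MIN-PAY LP for $i$.

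Finally, we translate back. Through \Cref{eq:combined_space_contract_vector}, the vector $v$ corresponds bijectively to payments $(\bs,\bt)$. By \Cref{lemma:union_contract_qcqp}, for this fixed $p$ an optimal $v$ is thus a payment scheme minimizing $T_i$, equivalently $T_i+D_i$, subject to $i^*=i$. This is the MIN-PAY contract for $i$ under inspection policy $p$.

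The only delicate point is bookkeeping rather than mathematics. Coordinates $\omega$ with $f_{i',\omega}=0$ for all $i'$ (for example $\omega\in\Omega_k$ with $p_k=0$) carry arbitrary payments. These do not affect the objective or the constraints, so the correspondence is between equivalence classes of optimal solutions. We would remark that such $v_\omega$ can be set to $0$ without loss of generality. If the LP is infeasible, then $i$ is not implementable under $p$, and the claim holds vacuously.
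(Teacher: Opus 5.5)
Your proposal is correct and takes the same route as the paper: with $p$ fixed, the coefficients $f_{i',\omega}$ and the inspection term $D_i$ are constants, so the additive $D_i$ drops out and what remains is exactly the classical MIN-PAY linear program over the combined outcome space $\Omega$. The rest of your write-up (rows of $f$ are distributions, $v$ is a relabeling of $(s,t)$, payoff-irrelevant coordinates, infeasibility) spells out details the paper's one-line argument leaves implicit.
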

\begin{proof}
In the objective of \Eqref{eq:union_contract_qcqp}, $\sum_\omega f_{i,\omega} v_\omega$ is the expected pay of the contract $v$, and $D_i$ is an additive term which does not depend on $v$. 
When $p$ in fixed, $f_i$ and $D_i$ are also fixed. In that case, the constant term $D_i$ does not affect the optimization, and the optimization problem is therefore equivalent to the classic Stackelberg MIN-PAY problem.
\end{proof}

\section{Tractability}\label{app:tractability}
\subsection{General Properties of Optimal Contracts}
\begin{proposition}
\label{claim:optimal_contract_only_inspects_when_required}
Let $(\bp,\bs,\bt)$ be a contract. If there exists an inspected signal $k_0\in[\ell]$ such that $t_{k_0,j}=0$ for all $j\in[m_{k_0}]$, then there exists a contract $(\bp',\bs',\bt')$ with $p'_{k_0}=0$ and $p'_k=p_k$ for all $k\neq k_0$, which yields weakly higher utility for the principal.
\end{proposition}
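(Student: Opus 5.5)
The plan is a direct construction: stop inspecting $k_0$, pay nothing on the uninspected branch, and keep every other payment unchanged. Formally, define $p'_{k_0}=0$, $s'_{k_0}=0$, and let $p'_k=p_k$, $s'_k=s_k$, $t'_{k,j}=t_{k,j}$ for all $k\neq k_0$; the values $t'_{k_0,j}$ are irrelevant and can be kept as they are. The point is that, for every action $i'$, the expected transfer is unchanged. Signal $k_0$ contributes $q^0_{i',k_0}\bigl((1-p_{k_0})s_{k_0}+p_{k_0}\sum_j q^{k_0}_{i',j}t_{k_0,j}\bigr)$ to $T_{i'}$. Under the old contract this term vanishes because the $t_{k_0,j}$ are all zero. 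Under the new contract it vanishes because $p'_{k_0}=0$ and $s'_{k_0}=0$. All other terms coincide. Hence $T_{i'}(\bp',\bs',\bt')=T_{i'}(\bp,\bs,\bt)$ for every $i'\in[n]$.

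One caveat should be checked carefully. If $k_0$ is only partially inspected ($0<p_{k_0}<1$), then the old term contains $(1-p_{k_0})s_{k_0}$, which need not be zero. In that case I would instead set $s'_{k_0}=s_{k_0}$. Then the contribution becomes $q^0_{i',k_0}s_{k_0}$ rather than $q^0_{i',k_0}(1-p_{k_0})s_{k_0}$, which is no longer identical across actions. To avoid this, I would define $s'_{k_0}=(1-p_{k_0})s_{k_0}$, which is still nonnegative. Then $p'_{k_0}=0$ gives the contribution $q^0_{i',k_0}(1-p_{k_0})s_{k_0}$ for every $i'$, exactly matching the old contribution. This is the ``appropriately adjusting the remaining payments'' mentioned in the text, and it covers both the deterministic case and the randomized case.

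Incentive compatibility and limited liability then follow immediately. Since every $U_A(i')=T_{i'}-c_{i'}$ is unchanged, the agent's set of best responses is the same. With ties broken in favor of the principal, the agent still plays the same $i^*$; more carefully, the induced action's principal utility can only weakly improve, since the tie-breaking set is identical. All payments remain nonnegative. For the principal, $R_{i^*}$ and $T_{i^*}$ are unchanged, and $D_{i^*}$ changes by $-q^0_{i^*,k_0}p_{k_0}d_{k_0}\le 0$. Therefore $U_P$ weakly increases.

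The only real subtlety, and the step I would double-check, is the tie-breaking argument. Because the entire vector $(U_A(i'))_{i'}$ is preserved, the argmax set is identical. The principal's utility for each action in that set weakly increases, since $D_{i'}$ weakly decreases for every $i'$. So the maximum over the tie set weakly increases, which finishes the proof. Alternatively, the same argument can be phrased through Lemma~\ref{lemma:union_contract_qcqp}: the new combined vector $v'$ with $f'_{i',\omega}v'_\omega$ preserves all the sums in \eqref{eq:union_contract_qcqp}, while the constant $D_i$ weakly drops.
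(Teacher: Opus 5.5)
Your final construction ($p'_{k_0}=0$, $s'_{k_0}=(1-p_{k_0})s_{k_0}$, everything else unchanged) is exactly the paper's, and so is the argument: every $T_{i'}$ is preserved, so the same action is implemented, while the inspection cost weakly drops. Your first-paragraph choice $s'_{k_0}=0$ is valid only when $p_{k_0}=1$, but you catch and fix this yourself. Your explicit tie-breaking step (the argmax set is unchanged and each $D_{i'}$ weakly decreases) is a bit of care the paper's proof leaves implicit.
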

\begin{proof}
Define a modified contract $(\bp',\bs',\bt')$ such that: 

\begin{equation*}
\begin{aligned}
p'_k &= 
\begin{cases}
    0 & \text{if } k = k_0 \\
    p_k & \text{otherwise}
\end{cases} \\
s'_k &=
\begin{cases}
    (1 - p_{k_0}) s_{k_0} & \text{if } k = k_0 \\
    s_k & \text{otherwise}
\end{cases} \\
t'_{k,j} &= t_{k,j}
\end{aligned}
\end{equation*}

For any action $i\in[n]$, the expected monetary transfer from the principal to the agent satisfies:

\begin{equation*}
\begin{aligned}
T_i(\bp,\bs,\bt)
&= \sum_{k\in[\ell]} q^0_{i,k} \left((1-p_k) s_k + p_k \sum_{j\in[m_k]} q^k_{i,j} t_{k,j} \right) \\
&= \sum_{k\in[\ell]\setminus\{k_0\}} q^0_{i,k} \left((1-p_k) s_k + p_k \sum_{j\in[m_k]} q^k_{i,j} t_{k,j} \right) \\
&\quad + q^0_{i,k_0} \left((1-p_{k_0}) s_{k_0} + p_{k_0} \sum_{j\in[m_{k_0}]} q^{k_0}_{i,j} \underbrace{t_{k_0,j}}_{=0} \right) \\
&= \sum_{k\in[\ell]\setminus\{k_0\}} q^0_{i,k} \left((1-p_k) s_k + p_k \sum_{j\in[m_k]} q^k_{i,j} t_{k,j} \right) \\
&\quad + q^0_{i,k_0} \underbrace{(1 - p_{k_0}) s_{k_0}}_{=s'_{k_0}} \\
&= \sum_{k\in[\ell]} q^0_{i,k} \left((1 - p'_k) s'_k + p'_k \sum_{j\in[m_k]} q^k_{i,j} t'_{k,j} \right) \\
&= T_i(\bp',\bs',\bt')
\end{aligned}
\end{equation*}

The two contracts transfer identical amounts in expectation for any action, and therefore the modified contract $(\bp',\bs',\bt')$ implements the same action. The modified contract $(\bp',\bs',\bt')$ also yields weakly higher utility for the principal, as it does not inspect signal $k_0$.
\end{proof}

\subsection{Constant-Many Actions}

In this section we prove \Cref{thm:constant_actions_polytime}.

\begin{lemma}
\label{claim:bounded_action_space_max_inspected_signals}
Consider a contract design setting $(\bq,\bc,\bd)$ with $\ell$ signals and $n$ actions. There is an optimal contract which implements action $i^*$ while inspecting at most $n-1$ signals (with at most $n-1$ nonzero payments).
\end{lemma}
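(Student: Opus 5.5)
Fix an optimal (deterministic) contract $(\bp,\bs,\bt)$ that implements $i^*$, and let $S=\{k:p_k=1\}$ be the set of inspected signals. Hold $\bp$ fixed. By Lemma~\ref{lemma:union_contract_is_minpay_for_fixed_p}, the payment vector $v$ on the combined outcome space $\Omega$ of Section~\ref{combinedOutcomeSpace} is then an optimal solution of the classic MIN-PAY linear program.
\begin{equation*}
\begin{aligned}
\min_{v\ge 0}\ \sum_{\omega\in\Omega} f_{i^*,\omega} v_\omega \quad \text{s.t.}\quad \sum_{\omega\in\Omega}(f_{i^*,\omega}-f_{i',\omega})v_\omega \ge c_{i^*}-c_{i'}\quad \forall i'\neq i^*.
\end{aligned}
\end{equation*}
This LP has $|\Omega|$ variables, the $n-1$ IC constraints, and nonnegativity constraints. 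It is feasible and bounded below by $0$, so an optimal basic feasible solution exists. A basic feasible solution is determined by $|\Omega|$ linearly independent tight constraints, and at most $n-1$ of these can be IC constraints. The remaining ones must be nonnegativity constraints, so at least $|\Omega|-(n-1)$ coordinates of $v$ are zero. I therefore replace $v$ by such a vertex $v^*$. It has at most $n-1$ nonzero entries and the same expected payment, so, with $\bp$ unchanged, the contract is still optimal and still implements $i^*$. Tie-breaking in favor of the principal is preserved, because we only need $i^*$ to satisfy the IC constraints.

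Next I reduce the number of inspected signals. Consider any $k\in S$ whose outcome payments $t_{k,j}=v^*_\omega$ for $\omega\in\Omega_k$ are all zero. Proposition~\ref{claim:optimal_contract_only_inspects_when_required} lets me set $p_k=0$ without changing $T_{i'}$ for any action. For deterministic $p_k=1$ the new uninspected payment is $(1-p_k)s_k=0$, so no new nonzero payment is created. This step preserves incentive compatibility and weakly lowers $D_{i^*}$, so it weakly improves the principal's utility. Since the original contract was optimal, the utility is unchanged.

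I apply this to every such $k$. Each signal that remains inspected then carries at least one nonzero payment in its own block $\Omega_k$. The blocks are disjoint and there are at most $n-1$ nonzero payments in total, so at most $n-1$ signals remain inspected, with at most $n-1$ nonzero payments overall.

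The main obstacle is making sure the two steps do not interfere. The vertex step requires the LP with fixed $\bp$, and the elimination step changes $\bp$. This is fine because elimination never adds a nonzero payment and never changes any $T_{i'}$, so the support bound from the first step survives. I also need to check that the basic-solution argument is valid when some $f_{i^*,\omega}=0$, for example when a signal is never reached under $i^*$. The LP is still a standard-form LP with a finite optimum, and its optimal value is attained at a vertex because the feasible region contains no line (it lies in $v\ge 0$). So the argument goes through.
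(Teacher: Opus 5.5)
Your proof is correct and follows essentially the same route as the paper. Fix the inspection policy, obtain at most $n-1$ nonzero payments from sparsity of the MIN-PAY LP over the combined outcome space, then use Proposition~\ref{claim:optimal_contract_only_inspects_when_required} to drop inspected signals whose outcome payments are all zero, and count the remaining blocks. The differences are minor: you prove the sparsity with a basic-feasible-solution argument where the paper cites \citep{dutting2019simple}, you eliminate signals iteratively rather than via the paper's minimal-counterexample argument, and your check that $s'_k=(1-p_k)s_k=0$ keeps the nonzero-payment count intact is a detail the paper leaves implicit.
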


\begin{proof}
By contradiction. Denote by $\bp^*$ the inspection policy of an optimal contract, and denote the corresponding inspection set by $S=\Set{k\mid p^*_k>0}$. Assume that $\bp^*$ has a minimal inspection set size $\Size{S}$ among all optimal contracts, and assume by contradiction that $\Size{S}\ge n$.

By Lemma~\ref{lemma:union_contract_is_minpay_for_fixed_p}, for any fixed inspection policy $\bp\in[0,1]^\ell$, and in particular for the given $\bp^*$, the optimal signal and outcome payments $\bs^*$, $\bt^*$ are equivalent to a MIN-PAY contract over the combined outcome space $\Omega=\Omega_0\cup\dots\cup\Omega_\ell$. 
Denote by $\bv^*$ the representation of $\bs^*,\bt^*$ in the combined outcome space, as given by \Eqref{eq:combined_space_contract_vector}.
By \citep{dutting2019simple}, a MIN-PAY contract design problem over $n$ actions has an optimal solution with at most $n-1$ nonzero payments, and therefore we assume that $\bv^*$ has at most $n-1$ nonzero entries. By definition of the combined space $\Omega$, there exist at most $n-1$ signals $k\in[\ell]$ for which there exists $\omega\in\Omega_k$ such that $v^*_\omega > 0$. Equivalently, in the $\bs^*, \bt^*$ representation, there exist at most $n-1$ signals $k\in[\ell]$ for which $t^*_{k,j}>0$ for some $j\in[m_k]$.

Since $\Size{S}\ge n$, there exists at least one inspected signal $k'$ for which $t^*_{k',j}=0$ for all $j\in[m_{k'}]$. By Proposition~\ref{claim:optimal_contract_only_inspects_when_required}, there exists a weakly-better contract $(\bp',\bs',\bt')$ with a smaller inspection set, contradicting the minimality of $\Size{S}$. Therefore, there exist an optimal contract inspecting at most $n-1$ signals.
\end{proof}

\begin{proof}[Proof of Theorem \ref{thm:constant_actions_polytime}]
Denote the set of actions by $[n]$, and consider the following algorithm:
For each subset of signals $S\subseteq [\ell]$ such that $\Size{S}\le n-1$, 
compute the optimal payments $\bs,\bt$ to implement action $i^*$ under the constraint $p_k=\mathds{1}_{k\in S}$. 
Output the contract $(\bp^*,\bs^*,\bt^*)$ yielding the minimal expected pay.

By Lemma~\ref{lemma:union_contract_is_minpay_for_fixed_p}, for any fixed inspection policy $\bp\in[0,1]^\ell$, the optimization problem for the signal and outcome payments $\bs,\bt$ is equivalent to a MIN-PAY contract design problem, and therefore the optimal contract in each iteration can be computed in polynomial time by solving a linear program. 
There are $\BigO{\ell^n}$ subsets of signals of size smaller than $n$, and since $n=\BigO{1}$,
it holds that $\BigO{\ell^n}=\BigO{\mathrm{poly}(\ell)}$. 
Multiplying the time complexity of each iteration by the total number of iterations, we obtain that the algorithm described above runs in $\BigO{\mathrm{poly}(\ell,m)}$ time in total.

By Claim~\ref{claim:bounded_action_space_max_inspected_signals}, there exists an optimal contract inspecting at most $n-1$ signals. Since the algorithm enumerates all subsets of signals up to this size, it is guaranteed to encounter the optimal subset of signals, and thus return an optimal result.
\end{proof}

\subsection{ISOP}

In this section we prove \Cref{thm:mlrp_isop_tractability}.

\begin{proposition}
\label{claim:mlrp_inspects_two_entries}
Consider an adaptive contract setting satisfying ISOP, and targeting the highest-cost action $n$. If the action is implementable, then any optimal contract pays for at most one signal, and for at most one outcome.
\end{proposition}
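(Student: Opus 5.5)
The plan is to fix the inspection policy $\bp$ and work in the combined outcome space $\Omega$ of Appendix~\ref{combinedOutcomeSpace}. By Lemma~\ref{lemma:union_contract_is_minpay_for_fixed_p}, once $\bp$ is fixed the optimal payments $\bv$ solve a classic MIN-PAY problem for action $n$ with outcome distributions $f_{i,\omega}(\bp)$. So it suffices to show the following: for every $\bp$, some MIN-PAY solution has at most one nonzero uninspected payment $s_k$ and at most one nonzero inspected payment $t_{k,j}$. Moreover, any optimal solution can be rewritten in this form without loss. Under ISOP the combined probabilities factor as
\[
f_{i,k}=q^0_{i,k}(1-p_k)\quad\text{for } k\in\Omega_0,
\qquad
f_{i,(k,j)}=q^0_{i,k}\,p_k\,q_{i,j}\quad\text{for inspected } (k,j).
\]
Let $U=\Set{k : p_k=0}$ and $I=\Set{k : p_k=1}$. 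Within each group, the likelihood ratio of action $n$ against any $i'<n$ is
\[
\frac{q^0_{n,k}}{q^0_{i',k}} \ \text{ on } U,
\qquad
\frac{q^0_{n,k}}{q^0_{i',k}}\cdot\frac{q_{n,j}}{q_{i',j}} \ \text{ on } I\times[m].
\]

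The key step uses MLRP of $\bq^0$ and of $\bq$. Because $n$ is the top row, each ratio is increasing in its index for every $i'<n$ simultaneously. Hence on $U$ the ratio is maximized, uniformly over all $i'$, at $k_U=\max U$. On $I\times[m]$ it is maximized at $(k_I,m)$ with $k_I=\max I$, since a product of two nonnegative factors, each maximized at the largest index, is maximized at the pair of largest indices.

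Next comes a mass-shifting argument performed separately within each group. Take any feasible $\bv$ and replace all uninspected payments by a single payment at $k_U$, chosen so that $\sum_{k\in U} f_{n,k}v_k$ is unchanged. Do the same for the inspected payments at $(k_I,m)$. The expected pay under action $n$, and hence the objective, stays the same. For each $i'<n$, the expected pay under $i'$ weakly decreases, because mass at lower-ratio outcomes is converted into mass at the uniformly highest-ratio outcome. Actions $i'<n$ are the only competitors, since $n$ is the highest-cost action and we target it; so all IC constraints remain satisfied. Limited liability is clearly preserved. The two groups cannot be merged in general, because the ratio at $k_U$ and the ratio at $(k_I,m)$ are not ordered uniformly over $i'$. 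This is exactly why the statement allows one signal plus one outcome.

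I expect the main obstacle to be the degenerate cases. Outcomes with $f_{n,\omega}=0$ should simply receive zero payment: such payments do not lower the objective and only help deviating actions. Zero denominators $q^0_{i',k}=0$ must be handled by treating the ratio as $+\infty$; MLRP then places these indices at the top, which keeps the ``uniform maximizer'' claim valid. A further subtlety is that the statement says \emph{any} optimal contract. I would first prove the without-loss version as above. Then I would argue that if an optimal contract had positive mass at a non-maximizing outcome whose ratio is strictly smaller for some binding $i'$, the shift would create slack that allows a strict reduction in pay, a contradiction. Ties would be resolved by reading the claim as ``up to equivalent relabeling''.
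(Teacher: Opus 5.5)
Your proof is correct for the without-loss-of-generality form of the claim, and it takes a different route from the paper. The paper fixes $\bp^*$ and writes the \emph{dual} of the MIN-PAY LP over the combined outcome space. It divides each dual constraint by $f_{n,\omega}$ and uses MLRP to show two dominance facts. Every uninspected-signal constraint is implied by the one at $k^*_0=\max\{k : p^*_k=0\}$. Every inspected constraint is implied by the one at $(k^*_1,\text{top outcome})$. It then concludes by complementary slackness. Your mass-shifting argument is the primal counterpart of that dominance. It uses the same MLRP fact: within each group, the likelihood ratio against \emph{every} $i'<n$ is simultaneously extremal at the top element. You use it to move payment mass directly, which keeps $T_n$ fixed and weakly lowers every $T_{i'}$.

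Your route is constructive and avoids duality. It turns any feasible contract for a given $\bp$ into a two-payment one with the same objective. It also works verbatim when the outcome space has size $m\neq\ell$; the paper's proof writes $j\in[\ell]$ and $q^1_{i,\ell}$. Finally, it makes plain that the conclusion is a without-loss-of-generality statement. The paper's route plugs directly into the standard dual framework of D\"utting et al.

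On the ``any optimal contract'' part: neither argument establishes it in general. Under weak MLRP the literal claim is false. Take $n=2$, $q^0_1=(0.5,0.25,0.25)$, $q^0_2=(0.2,0.4,0.4)$, $c=(0,1)$, and prohibitive inspection costs. Then every split of a total payment of $20/3$ between signals $2$ and $3$ is optimal.

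Your sketched strengthening is also not right as stated. Making the IC constraint of \emph{one} binding $i'$ slack does not allow a strict reduction if some other binding $i''$ has a tied ratio at the shifted mass, because $i''$ still blocks any decrease. What you need is that the shifted mass has a strictly smaller ratio for \emph{every} binding competitor, for example under strict MLRP with full support. Then all IC constraints become strictly slack, and scaling all payments by $1-\delta$ gives the contradiction. So the statement should be read as in Theorem~\ref{thm:mlrp_isop_tractability}, i.e.\ without loss of generality; the paper's complementary-slackness step also yields only that.
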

\begin{proof}
Denote an optimal contract by $(\bp^*, \bs^*, \bt^*)$. Since the setting satisfies ISOP, we denote $q^{k}_{i,k}=q^1_{i,k}$.
We consider two cases:

If the contract doesn't inspect any signal (i.e., $p^*_k=0$ for all $k\in[\ell]$),
then by Lemma~\ref{lemma:union_contract_is_minpay_for_fixed_p}, the optimal pay for signals $s^*$ is a MIN-PAY contract over the signal distribution $q^0_{i,k}$ and costs $c_i$. The contract design problem $(q^0_{i,k},c_i)$ satisfies MLRP, and therefore by \citep[Lemma 7]{dutting2019simple} there exists an optimal contract which only pays for the highest signal $k=\ell$.

Otherwise, the contract inspects at least one signal. 
By Lemma~\ref{lemma:union_contract_is_minpay_for_fixed_p}, the transfers $\bs^*,\bt^*$ are an optimal solution to a MIN-PAY contract design problem over the combined outcome space:

\begin{equation*}
\begin{aligned}
\text{\textbf{minimize}} \quad & \sum_{\omega\in\Omega} f_{i,\omega} v_\omega \\
\text{\textbf{subject to}} \quad & v \ge 0 \\
& \sum_{\omega\in\Omega} f_{n,\omega} v_\omega - c_{n}
\ge
\sum_{\omega\in\Omega} f_{i,\omega} v_\omega - c_i 
& \forall i < n,
\end{aligned}
\end{equation*}

where $f_{i,\omega}$ is given by \Cref{eq:combined_outcome_space_f}, and $v_\omega$ is given by \Cref{eq:combined_space_contract_vector}.
By \citep{dutting2019simple}, the dual of the min-pay LP is:
\begin{equation}
\label{eq:mlrp_isop_dual_lp}
\begin{aligned}
\text{\textbf{maximize}} \quad & 
\sum_{i<n} \lambda_{i}(c_n-c_i)\\
\text{\textbf{subject to}} \quad 
& \lambda \ge 0 \\
& \sum_{i<n} \lambda_i \left( f_{n,\omega} - f_{i,\omega} \right)
\le
f_{n,\omega}
& \forall \omega \in \Omega \\
\end{aligned}
\end{equation}
Plugging in the definitions of $f_{i,\omega}$ and $\Omega$ we get:

\begin{equation*}
\begin{aligned}
\text{\textbf{maximize}} \quad &
\sum_{i<n} \lambda_{i}(c_n - c_i) \\
\text{\textbf{subject to}} \quad &
\lambda \ge 0 \\
&
\sum_{i<n} \lambda_i \left( q^0_{n,k} - q^0_{i,k} \right)
\le
q^0_{n,k}
& \forall k \text{ s.t. } p^*_k = 0 \\
&
\sum_{i<n} \lambda_i \left( q^0_{n,k} q^1_{n,j} - q^0_{i,k} q^1_{i,j} \right)
\le
q^0_{n,k} q^1_{n,j}
& \forall k \text{ s.t. } p^*_k = 1,\ j \in [\ell]
\end{aligned}
\end{equation*}

Constraints for which $q^0_{n,k}=0$ or $q^0_{n,k}q^1_{n,j}=0$ represents signals and outcomes that cannot be reached when the agent takes action $n$. Such constraints are satisfied for any $\lambda$, and are therefore redundant. 
Ignoring redundant constraints, we divide the first set of constraints by $q^0_{n,k}$ and the second set of constraints by $q^0_{n,k}q^1_{n,j}$ to obtain:

\begin{equation*}
\begin{aligned}
\text{\textbf{maximize}} \quad &
\sum_{i<n} \lambda_{i}(c_n - c_i) \\
\text{\textbf{subject to}} \quad &
\lambda \ge 0 \\
&
\sum_{i<n} \lambda_i \left( 1 - \frac{q^0_{i,k}}{q^0_{n,k}} \right)
\le
1
& \forall k \text{ s.t. } p^*_k = 0 \\
&
\sum_{i<n} \lambda_i \left( 1 - \frac{q^0_{i,k}}{q^0_{n,k}} \cdot \frac{q^1_{i,j}}{q^1_{n,j}} \right)
\le
1
& \forall k \text{ s.t. } p^*_k = 1,\ j \in [\ell]
\end{aligned}
\end{equation*}

From the MLRP assumption, the ratio $\frac{q^0_{i,k}}{q^0_{n,k}}$ is decreasing in $k$. 
We denote $k^*_0=\max \Set{k\mid p^*_k=0}$. For any $i<n$, $k<k^*_0$, and $\lambda_i\ge 0$, it holds that:
$$
\lambda_i \left(1-\frac{q^0_{i,k}}{q^0_{n,k}}\right) \le 
\lambda_i \left(1-\frac{q^0_{i,k^*_0}}{q^0_{n,k^*_0}}\right)
$$
and therefore the first set of constraints satisfies:
\begin{equation}
\label{eq:mlrp_signal_constraints}
\sum_{i<n} \lambda_i \left( 1 - \frac{q^0_{i,k^*_0}}{q^0_{n,k^*_0}} \right)
\le
1
\quad
\Rightarrow
\quad
\sum_{i<n} \lambda_i \left( 1 - \frac{q^0_{i,k}}{q^0_{n,k}} \right)
\le
1
\end{equation}
Similarly, we denote $k^*_1=\max\Set{k\mid p^*_k=1}$. For any $i<n$, $k<k^*_1$,$j\in[\ell]$, and $\lambda_i\ge 0$, it holds that:
$$
\lambda_i \left(
1-\frac{q^0_{i,k}}{q^0_{n,k}} \cdot \frac{q^1_{i,j}}{q^1_{n,j}}
\right)
\le
\lambda_i \left(
1-\frac{q^0_{i,k^*_1}}{q^0_{n,k^*_1}} \cdot \frac{q^1_{i,\ell}}{q^1_{n,\ell}}
\right)
$$
and therefore the second set of constraints satisifes:

\begin{equation}
\label{eq:mlrp_outcome_constraints}
\begin{aligned}
\sum_{i<n} \lambda_i \left( 1 - \frac{q^0_{i,k^*_1}}{q^0_{n,k^*_1}} \cdot \frac{q^1_{i,\ell}}{q^1_{n,\ell}} \right)
\le 1
\quad \Rightarrow \quad
\sum_{i<n} \lambda_i \left( 1 - \frac{q^0_{i,k}}{q^0_{n,k}} \cdot \frac{q^1_{i,j}}{q^1_{n,j}} \right)
&\le 1
\end{aligned}
\end{equation}

From \Cref{eq:mlrp_signal_constraints} and \Cref{eq:mlrp_outcome_constraints}, the dual LP has at most two binding constraints. \Cref{eq:mlrp_isop_dual_lp} is thus equivalent to:

\begin{equation*}
\begin{aligned}
\text{\textbf{maximize}} \quad &
\sum_{i<n} \lambda_{i}(c_n - c_i) \\
\text{\textbf{subject to}} \quad &
\lambda \ge 0 \\
&
\sum_{i<n} \lambda_i \left( 1 - \frac{q^0_{i,k^*_0}}{q^0_{n,k^*_0}} \right)
\le
1
\\
&
\sum_{i<n} \lambda_i \left( 1 - \frac{q^0_{i,k^*_1}}{q^0_{n,k^*_1}} \cdot \frac{q^1_{i,\ell}}{q^1_{n,\ell}}  \right)
\le
1
\end{aligned}
\end{equation*}

And therefore from complementary slackness, the optimal solution for the primal LP pays for at most one signal, and at most one outcome.
\end{proof}
\begin{remark}
Unlike the classical result of \citep{dutting2019simple} where the optimal contract pays only for the highest outcome, in our case, the optimal contract does not necessarily follow this structure (restrict payment to the highest outcome of the highest signal). Instead, it pays for at most the highest not inspected signal and at most the highest outcome of the highest inspected signal. This distinction separates our model from the classical setting.
\end{remark}

\begin{proof}[Proof of \Cref{thm:mlrp_isop_tractability}]
By Proposition~\ref{claim:mlrp_inspects_two_entries}, any optimal contract $(\bp^*, \bs^*, \bt^*)$ pays for one signal at most, and one outcome at most. Therefore, by Proposition~\ref{claim:optimal_contract_only_inspects_when_required}
it can be assumed without loss of generality that in the optimal contract there exists at most one $k\in[\ell]$ such that $p_k=1$.
All single-signal inspection policies can be enumerated in linear time, and an optimal contract for each policy can be computed in polynomial time, yielding a polynomial-time algorithm for the whole computation.
\end{proof}

\input{appendix/approx}

\subsection{Proof of Theorem \ref{thmSecondOpinionHardness} (Hardness of Symmetric-ISOP without MLRP)}\label{appSecondOpinionHardnessProof}

We show hardness via a series of two reductions from a variant of Set Cover. The intermediate problem is a promise problem which we call \emph{Row-Sparsest Matrix}, or \emph{RSM} for short. An instance of RSM is a tuple $(G, A, y)$, where $G = (V, E)$ is a simple, undirected graph on $\abs{V} = n$ vertices numbered $1, 2, \dots, n$ with no isolated vertices, with $A \subseteq V$ and $y \in [n]$. The objective is to distinguish the following two cases:

\noindent \textbf{YES instance.} There exists an $n \times n$ matrix $M$ of nonnegative real numbers such that:
\begin{enumerate}
	\item\label{itmRsmYes1} The average value in $M$ is 1.
	\item\label{itmRsmYes2} At most $y$ rows of $M$ contain nonzero values.
	\item\label{itmRsmYes3} For all $\{i, j\} \in E$, we have $M_{i, i} = M_{i, j} = M_{j, i} = M_{j, j} = 0$.
	\item\label{itmRsmYes4} For all $k \in A$, $\sum_{i \in [n]} M_{i, k} + \sum_{i \in [n]} M_{k, i} \geq 10n$
\end{enumerate}
\noindent \textbf{NO instance.} There does \emph{not} exist an $n \times n$ matrix $M$ of nonnegative real numbers such that:
\begin{enumerate}
	\item\label{itmRsmNo1} The average value in $M$ is between 1 and $\frac{12}{11}$.
	\item\label{itmRsmNo2} At most $y$ rows of $M$ contain values greater than or equal to $\frac{5}{11}$.
	\item\label{itmRsmNo3} For all $\{i, j\} \in E$, we have $M_{i, i}, M_{i, j}, M_{j, i}, M_{j, j} \leq \frac{5}{11}$.
	\item\label{itmRsmNo4} For all $k \in A$, $\sum_{i \in [n]} M_{i, k} + \sum_{i \in [n]} M_{k, i} \geq n$.\\
\end{enumerate}

\begin{lemma}\label{lemRsmHard}
	The problem RSM is NP-hard.
\end{lemma}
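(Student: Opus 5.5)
We reduce from a gap version of Set Cover (or, equivalently, Vertex Cover on a suitable graph); the promise structure of RSM is built to absorb a constant-factor slack. The key observation is that the RSM constraints are essentially combinatorial once we know which rows are nonzero. Condition~\ref{itmRsmYes3} forbids any mass on the block $\{i,j\}\times\{i,j\}$ for every edge. Condition~\ref{itmRsmYes4} requires every $k\in A$ to receive large mass in row $k$ or column $k$. Condition~\ref{itmRsmYes1} fixes the total mass at $n^2$. So a set $R$ of nonzero rows is feasible only if each $k\in A$ can be ``served'' by some entry $M_{r,k}$ or $M_{k,r}$ with $r\in R$ (or with $k\in R$ itself), where $\{r,k\}$ is not an edge and, when $r=k$, $k$ is not incident to any edge. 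Since $G$ has no isolated vertices, diagonal entries are always forbidden.

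The plan is to take a Set Cover instance with universe $U$ and sets $S_1,\dots,S_p$, and build a graph on vertices $V=U\cup\{S_1,\dots,S_p\}\cup W$, where $W$ is a padding gadget. We set $A=U$ and add an edge between $u$ and $S_h$ whenever $u\notin S_h$. We also make $U$ a clique, so elements cannot serve each other, and make the set-vertices a clique, so set-vertices cannot serve each other. Then an element $u$ can be served only through a non-edge $\{S_h,u\}$, i.e.\ a set containing $u$, placed either in row $S_h$ or in row $u$.

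To prevent cheating via rows indexed by elements, we want choosing row $u$ to be no better than choosing a set containing $u$. Choosing row $u$ serves only $u$ (columns of other elements are forbidden by the clique), whereas a set row $S_h$ can serve all of $S_h$ at once via entries $M_{S_h,u}$. So a minimum feasible row set of size $y$ corresponds to a set cover of size at most $y$. The converse also holds: replace each element row by any set containing that element. The padding $W$ consists of vertices with few edges, so that the leftover mass needed to bring the average to exactly~$1$ can be dumped on allowed entries within already-chosen rows. For this we ensure every chosen row has many allowed columns, e.g.\ $W$-columns not adjacent to set-vertices. The demand $10n$ per element totals at most $10n\abs{U}$, which must not exceed $n^2$; this determines how large $W$ must be.

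In the YES direction, given a cover of size $\le y$, we put mass $10n$ on $M_{S_h,u}$ for one chosen covering set of each $u$, and spread the remaining mass over allowed entries in the chosen rows.

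The NO direction is the main obstacle. Given a relaxed matrix, we must extract a small cover. Rows with every entry below $5/11$ carry total mass less than $\tfrac{5}{11}n$ each, and entries on edge-blocks are at most $5/11$, so such entries contribute at most $\tfrac{5}{11}\cdot 2n$ to any row-plus-column sum of $k$. That is below the relaxed threshold $n$ only if it is bounded more carefully. Here I expect to need to exploit the slack between $10n$ and $n$, and the average bound $12/11$, so that the total ``small-entry'' mass cannot simulate the demand. After that, each $k\in A$ must receive a large entry ($\ge 5/11$) from some heavy row through a non-edge, which yields a cover of size $\le y$ after replacing element rows by sets. I would try first to set the constants so that small entries contribute at most $\tfrac{10}{11}n<n$ to each line sum, and then verify that this forces a large entry. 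If the bookkeeping fails, the fallback is to tune the gadget sizes instead.
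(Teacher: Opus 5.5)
Your reduction is essentially the paper's. The graph on $U\cup\{S_1,\dots,S_p\}$ (clique on elements, clique on sets, an element--set edge exactly when the element is not in the set) is the complement of the element--set incidence graph. As in the paper, you take $A=U$, the YES matrix puts $10n$ on one entry $M_{S_h,u}$ per element, and the NO direction finds a heavy entry at a non-edge and then swaps element rows for sets.

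\textbf{The genuine gap is the padding gadget $W$.} You describe it as ``vertices with few edges,'' not adjacent to set-vertices. Then any $w\in W$ not adjacent to an element $u$ makes $M_{w,u}$ and $M_{u,w}$ legal entries. A single row $w$ can then put large mass into every column $u$ non-adjacent to $w$. If $w$ misses all of $U$, it serves every element with one row, so the instance is a YES instance for every $y\ge 1$ regardless of Set Cover. Your claim that an element ``can be served only through a set containing it'' fails. Moreover, if some element lies in no set, it can serve itself via $M_{u,w}$ even though no cover exists. For soundness you must make $W$ complete to $U$.

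With that fix the construction works, but the gadget is unnecessary. No legal nonzero entry contributes to two elements' line sums, because of the clique on $U$ and the forbidden diagonal. So all you need is total demand $10n\abs{U}\le n^2$. The paper normalizes the Set Cover instance so that $n=10m$ exactly, by duplicating elements and/or sets; then $m$ entries of value $10n$ total exactly $n^2$. Alternatively, once $n\ge 10m$ (e.g.\ by duplicating sets), you can put the leftover mass on the serving entries themselves, since YES-property (4) is only a lower bound.

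Second, the NO direction is not an obstacle, and you do not need the $\frac{12}{11}$ average bound or any slack between $10n$ and $n$. The paper's argument uses only NO-properties (2)--(4), with the constants exactly as given; nothing needs tuning. Row $k$ together with column $k$ consists of $2n$ entries (counting $M_{k,k}$ twice). If all were at most $\frac{5}{11}$, their sum would be at most $\frac{10}{11}n<n$, contradicting NO-property (4). Hence some $M_{r,k}$ or $M_{k,r}$ is \emph{strictly} greater than $\frac{5}{11}$; the paper uses the threshold $\frac12$ to the same effect. By NO-property (3) that entry is off the diagonal and $\{r,k\}$ is a non-edge, so the other index is a set containing $k$. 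Its row is one of the at most $y$ rows allowed by NO-property (2). The strictness matters, since an entry equal to exactly $\frac{5}{11}$ may lie on an edge block.
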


\begin{proof}
	We reduce from the variant of Set Cover where the number of elements is restricted to be exactly $\frac{1}{9}$ the number of sets. (Set Cover remains NP-hard with this restriction since we can duplicate elements and/or sets until the equality is satisfied.) The input to Set Cover consists of a collection of elements $x_1, x_2, \dots, x_m$, a collection of sets of elements which we number $S_{m + 1}, S_{m + 2}, \dots, S_n$ and a target value $y \in [n - m]$. The objective is to determine whether a collection of at most $y$ of the sets covers all $m$ elements. By our assumption on the number of elements versus the number of sets, we have $n = 10m$.
	
	Given such an instance, let $H$ be the bipartite graph where there is an edge between $i \in [n]$ and $j \in [n]$ if $i \leq m$, $j > m$, and element $x_i$ is contained in set $S_j$. We then output the RSM instance $(G, A, y)$ where $G$ is the complement of $H$ and $A = [m]$. We will show that, if a set cover of size $y$ exists, then $(G, A, y)$ is YES instance; otherwise, it is a NO instance.
	
	First suppose $C$ is a set cover of size $y$. For each element $x_i$, let $M_{j, i} = 10n$ for one arbitrary $j$ such that $x_i \in S_j \in C$ (which must exist since $C$ is a set cover). Let all other entries of $M$ be zero. Observe that $M$ satisfies all four YES instance properties:
	\begin{enumerate}
		\item The sum of all nonzero elements is $10n \cdot m = n^2$, so the average value is 1.
		\item Only the $y$ rows corresponding to the sets in the cover have nonzero entries.
		\item We have a nonzero value at $(i, j)$ only when $\{i, j\}$ is an edge in $H$. This does not include any edges of $G$ (since $G$ is the complement of $H$) or diagonal entries.
		\item For each $i \in [m]$, we know that, for some $j$, $M_{j, i} = 10n$.
	\end{enumerate}
	
	Conversely, suppose $(G, A, y)$ is \emph{not} a NO instance. This means there \emph{does} exists a matrix $M$ satisfying all of the NO instance properties. Let $I$ be the set of indices $i \in [m]$ such that the $i\tth$ row of $M$ contains values greater than or equal to $\frac12$; Let $J$ be the set of such indices from $m + 1$ to $n$. For each $i \in I$, choose an arbitrary column $j$ such that $M_{i,j}$ is at least $\frac12$. Let $f: I \to [n]$ be the mapping of arbitrary choices for each such index $i$. Consider the collection of sets
	$$C := \{S_{f(i)} \suchthat i \in I\} \cup \{S_j \suchthat j \in J\}.$$%
	Note that $\abs{C} \leq \abs{I} + \abs{J} \leq y$ from property (\ref{itmRsmNo2}). We claim that $C$ is a set cover. Consider an arbitrary element $x_i$. We know that there must exist some $j \in [n]$ such that $M_{j, i} \geq \frac12$ or $M_{i, j} \geq \frac12$, for otherwise we would have
	$$\sum_{j \in [n]} M_{j, i} + \sum_{j \in [n]} M_{i, j} < 2n \cdot \frac12 = n,$$%
	violating property (\ref{itmRsmNo4}). Furthermore, from the definition of $G$, we know that $j$ must be an index greater than $m$, and $S_j$ contains $x_i$, for otherwise both cases $M_{j, i} \geq \frac12$ and $M_{i, j} \geq \frac12$ would violate property (\ref{itmRsmNo3}). In the former case ($M_{j, i} \geq \frac12$), we have $j \in J$, so $x_i \in S_j \in C$. In the latter case ($M_{i, j} \geq \frac12$), we have $i \in I$, so $x_i \in S_{f(i)} \in C$.
\end{proof}

Returning to the proof of Theorem~\ref{thmSecondOpinionHardness}, we reduce from RSM, which is NP-hard by Lemma~\ref{lemRsmHard}. Given an instance $(G, A, y)$ where $G = (V, E)$ has $n$ vertices, we define an instance with $n + 1$ signals/outcomes, numbered $0, 1, 2, \dots, n$. Let
    \begin{equation}\label{equDefEpsilon}
		\varepsilon := 1 - \sqrt{\frac{(n - 1)^2(n + 2)}{n^3}} > 0.
	\end{equation}
	The reward for any realization $(k, j)$, where $k, j \in \{0, 1, 2, \dots, n\}$, is $\frac{1}{\varepsilon}$ times the number of vertices realized (i.e. nonzero values among $k$ and $j$). All signals cost $\frac{1}{11}$ to get a second opinion, which will be second independent sample from the same probability distribution.
	
	There are three kinds of actions the agent can take, enumerated as follows.\footnote{To maintain consistency in indices used for vertices throughout the proof, we break with the convention that $i$ is for actions, $j$ is for outcomes, and $k$ is for signals.}
	\begin{itemize}
		\item Action $g$ (the \emph{good action}), which costs 1 and leads to a uniformly random vertex.
		\item For each vertex $k \in A$, an action $a_k$ which costs 1. With probability $\varepsilon$, outcome 0 is realized; and with probability $1 - \varepsilon$, a uniformly random vertex \emph{other than} $k$ is realized.
		\item For each edge of $G$, between vertex $i$ and vertex $j$, an action $e_{i, j}$, which costs 0. With probability $\varepsilon$, outcome 0 is realized; with probability $\frac{1 - \varepsilon}{2}$, $i$ is realized; and with probability $\frac{1 - \varepsilon}{2}$, vertex $j$ is realized.
	\end{itemize}

	We will show that, if $(G, A, y)$ is a YES instance, there exists a contract where the principal can attain expected utility at least
	$$U := \frac{2}{\varepsilon} - 1 - \frac{y}{11n};$$%
	Whereas if it is a NO instance, there does not exist such a contract. We derive an equivalent interpretation of this utility target as follows. If the agent takes any action other than the good action $g$, a vertex is realized each draw with probability at most $1 - \varepsilon$, so the expected reward of the principal is at most
	$$2 \cdot (1 - \varepsilon) \cdot \frac{1}{\varepsilon} = \frac{2}{\varepsilon} - 2 < U.$$
	Hence, it is not possible for the principal to attain expected utility at least $U$ if the agent is taking any action besides $g$. On the other hand, when the agent takes action $g$, the probability of realizing a vertex is 1 each draw, so the principal's expected reward is
	$$2 \cdot \frac{1}{\varepsilon} = U + 1 + \frac{y}{11n}.$$%
	Thus, the principal can obtain expected utility at least $U$ if and only if it is possible to incentivize the agent to take action $g$ by paying at most $1 + \frac{y}{11n}$ in expectation.
	
	For the forward direction, suppose $(G, A, y)$ is a YES instance. Consider the contract $t$ that pays $M_{i, j}$ when vertex $i$ is realized on the first draw and vertex $j$ is realized on the second draw. Any time outcome 0 is observed, the payment is zero. By property (\ref{itmRsmYes2}), all but $y$ rows of $M$ are all-zero, meaning they do not require inspection: Upon observing a vertex $i$ for which the $i\tth$ row of $M$ is all-zero, the principal will simply pay zero and not inspect. Thus the total inspection cost is
	$$(\text{num inspected outcomes}) \cdot (\text{probability of outcome}) \cdot (\text{inspection cost}) = y \cdot \frac{1}{n} \cdot \frac{1}{11} = \frac{y}{11n}.$$%
	Additionally, the total expected transfer from the principal to the agent under the good action $g$ is
	$$T_g = \sum_{i \in [n]} \sum_{j \in [n]} \frac{1}{n^2} M_{i, j} = 1.$$%
	Hence, when the agent takes the good action $g$, the principal's total expected payment is $1 + \frac{y}{11n}$ as desired. It remains to check the incentive constraints, that $g$ is optimal for the agent.
	
	Each of the $e_{i, j}$ actions yields a transfer of 0, as the only possible outcomes are combinations of $i$, $j$, and $0$, which all result in zero payment by property (\ref{itmRsmYes3}). Thus, $e_{i, j}$ costs one less than $g$ but also earns one less, so the agent weakly prefers $g$.
	
	Next consider an action $a_k$. We may lower-bound the difference between the expected transfer $T_g$ if the agent chooses $g$ and $T_{a_k}$ if the agent chooses $a_k$ as follows:
	\begin{align*}
		T_g - T_{a_k} &= \frac{1}{n^2} M_{k, k} + \sum_{i \in [n] \setminus \{k\}} \left(\frac{1}{n^2}\right) \left(M_{i, k} + M_{k, i}\right) + \sum_{i, j \in [n] \setminus \{k\}} \left(\frac{1}{n^2} - \frac{(1 - \varepsilon)^2}{(n - 1)^2}\right) M_{i, j}\\
		&= \frac{1}{n^2} M_{k, k} + \left(\frac{1}{n^2}\right) \sum_{i \in [n] \setminus \{k\}} \left(M_{i, k} + M_{k, i}\right) + \left(-\frac{2}{n^3}\right) \sum_{i, j \in [n] \setminus \{k\}} M_{i, j}\stextn{rearranging Equation (\ref{equDefEpsilon})}\\
		&\geq \frac{1}{n^2} M_{k, k} + \frac{1}{5n^2}\sum_{i \in [n] \setminus \{k\}} \left(M_{i, k} + M_{k, i}\right) - \frac{2}{n^3}\sum_{i, j \in [n]} M_{i, j}\stextn{since $\frac{4}{5n^2} \geq \frac{2}{n^3}$ for $n \geq 3$}\\
		&\geq \frac{2}{5n^2} M_{k, k} + \frac{1}{5n^2}\sum_{i \in [n] \setminus \{k\}} \left(M_{i, k} + M_{k, i}\right) - \frac{2}{n^3}\sum_{i, j \in [n]} M_{i, j}\\
		&= \frac{1}{5n^2}\left(\sum_{i \in [n]} M_{i, k} + \sum_{i \in [n]} M_{k, i}\right) - \frac{2}{n^3}\sum_{i, j \in [n]} M_{i, j}\\
		&\geq \frac{1}{5n^2}\left(10n\right) - \frac{2}{n^3}\sum_{i, j \in [n]} M_{i, j} \stext{from property (\ref{itmRsmYes4})}\\
		&= \frac2n \left(1 - \frac{1}{n^2}\sum_{i, j \in [n]} M_{i, j}\right)= 0,
	\end{align*}
	where in the final equality we have used property (\ref{itmRsmYes1}). Thus, the agent earns weakly more from taking action $g$ than action $a_k$. Since the two actions cost the same, the agent weakly prefers $g$.
	
	For the backward direction, suppose there is a contract in which the principal  incentivizes action $g$ by paying at most $1 + \frac{y}{11n}$. Our objective is to show that $(G, A, y)$ is \emph{not} a NO instance. Let $M$ be the matrix where $M_{i, j}$ is the payment upon realizing vertex $i$ from the first draw and $j$ from the second draw. Note that $M$ is constant on rows corresponding to vertex indices that are not inspected. We will show that $M$ satisfies all four NO instance properties.
	
	First, observe that, since the contract incentivizes the costly action $g$, it must transfer at least $T_g \geq 1$ in expectation from the principal to the agent, otherwise the agent is better off taking one of the actions that costs 0. It follows that the principal can only inspect $y$ actions, for otherwise the total cost exceeds $1 + \frac{y}{11n}$. On the other hand,
	$$T_g \leq 1 + \frac{y}{11n} \leq 1 + \frac{1}{11} = \frac{12}{11}.$$%
	This proves property (\ref{itmRsmNo1}), since $T_g$ is precisely the average value of the matrix $M$.
	
	For any edge $\{i, j\}$ and any entry $z \in \{M_{i, i}, M_{i, j}, M_{j, i}, M_{j, j}\}$, we have
	\begin{align*}
		z &< \frac{5(1 - \varepsilon)^2}{4} z \stext{for large enough $n$}\\
		&\leq \frac{5(1 - \varepsilon)^2}{4} (M_{i, i} + M_{i, j} + M_{j, i} + M_{j, j})\\
		&= 5 T_{e_{i, j}}\\
		&\leq 5(T_g - 1) \stext{since the contract incentivizes $g$}\\
		&\leq 5\left(\frac{12}{11} - 1\right) \stext{from the inequality above}\\
		&= \frac{5}{11}.
	\end{align*}
	This establishes property (\ref{itmRsmNo3}). Furthermore, since each row $i$ is constant if $i$ is not inspected, and $M_{i, i} < \frac{5}{11}$ (each diagonal entry appears as some $z$ in the argument above because we assume $G$ has no isolated vertices), we have that the entire row $i$ must be less than $\frac{5}{11}$ if $i$ is not inspected. As only $y$ rows are inspected, property (\ref{itmRsmNo2}) follows.
	
	Finally, for each $k \in A$, from the incentive constraint that the agent prefers action $g$ to $a_k$, we have
	
	\begin{align*}
		0 &\leq T_g - T_{a_k}\\
		&\leq \frac{1}{n^2} M_{k, k} + \left(\frac{1}{n^2}\right) \sum_{i \in [n] \setminus \{k\}} \left(M_{i, k} + M_{k, i}\right) + \left(-\frac{2}{n^3}\right) \sum_{i, j \in [n] \setminus \{k\}} M_{i, j}\stextn{the inequality is only due to the fact that the contract may pay for outcome 0}\\
		&\leq \frac{4}{n^2} M_{k, k} + \frac{2}{n^2}\sum_{i \in [n] \setminus \{k\}} \left(M_{i, k} + M_{k, i}\right) - \frac{2}{n^3}\sum_{i, j \in [n]} M_{i, j}\stext{since $\frac{2}{n^2} \geq \frac{2}{n^3}$}\\
		&= \frac{2}{n^2}\left(\sum_{i \in [n]} M_{i, k} + \sum_{i \in [n]} M_{k, i} - n T_g\right)\\
		&\leq \frac{2}{n^2}\left(\sum_{i \in [n]} M_{i, k} + \sum_{i \in [n]} M_{k, i} - n\right).
	\end{align*}
	This implies property (\ref{itmRsmNo4}):
	$$\sum_{i \in [n]} \left(M_{i, k} + M_{k, i}\right) \geq n.$$
    
    \vspace{-1cm}\qed

\vspace{1cm}

\section{Beyond Deterministic Inspection}
\subsection{Computing Optimal Randomized Contracts in UMI}\label{appQCLP}

In view of the QCQP in Equation~(\ref{eq:contract_qcqp}) which can be used to solve CoMI, we may handle UMI by imposing the following additional subgame-perfection constraints:
\begin{itemize}%
    \item
    For each $k \in [\ell]$ such that $p_k < 1$,
    \begin{equation}\label{equEquilibriumLowP}
    s_k \leq d_k + \sum_{j \in [m_k]}q^k_{i, j} t_{k, j}.
    \end{equation}
    \item
    For each $k \in [\ell]$ such that $p_k > 0$,
    \begin{equation}\label{equEquilibriumHighP}
    s_k \geq d_k + \sum_{j \in [m_k]}q^k_{i, j} t_{k, j}.\end{equation}
\end{itemize}
Observe that it is without loss of generality to enforce the constraint given by~\Eqref{equEquilibriumLowP}, even when $p_k = 1$. This is because the variable $s_k$ is irrelevant to both the objective function and the other constraints, so we may satisfy this constraint by setting $s_k := d_k + \sum_{j \in [m_k]}q^k_{i, j} t_{k, j}$. We would like to similarly expand the constraint given by \Eqref{equEquilibriumHighP}. However, when $p_k = 0$ it is \emph{not} without loss of generality to assume constraint \Eqref{equEquilibriumHighP} holds, because if we were to apply the same trick it could require setting some $t_{k, j}$ to be negative.
	To circumvent this issue, suppose we fix a set of signals $S_0$ with the additional rule that $p_k = 0$ for all $k \in S_0$, compatible with a computational approach of enumerating all subsets of signals. Then we may enforce constraint \Eqref{equEquilibriumHighP} for all $k$, as long as we remove the stipulation that $t_{k, j} \geq 0$ for $k \in S_0$. This lets us combine constraints \Eqref{equEquilibriumLowP} and \Eqref{equEquilibriumHighP} into a single equation:
	\begin{equation*}%
    s_k = d_k + \sum_{j \in [m_k]}q^k_{i, j} t_{k, j}.
	\end{equation*}
	This equality must hold for all $k \in [\ell]$, which allows us to simplify both the objective function and the constraints, and remove all occurrences of the $s_k$ variables. Unfortunately, it does not let us remove all of the quadratic terms in the IC constraint. The result is thus a quadratically-constrained linear program (QCLP), parameterized by an action $i$ and a set of signals $S_0$:

\newcommand{\succinctlp}[3]{\begin{tabular}{l l}\textbf{#1} & \begin{tabular}{l}$#2$\end{tabular}\\\textbf{subject to} & \begin{tabular}{l}$#3$\end{tabular}\end{tabular}}
\newcommand{\spmax}[2]{\succinctlp{maximize}{#1}{#2}}
\newcommand{\spmin}[2]{\succinctlp{minimize}{#1}{#2}}
\newcommand{\lp}[3]{\begin{tabular}{l l}\textbf{#1} & \begin{tabular}{l}$#2$\end{tabular}\\\textbf{subject to} & \begin{tabular}{l l}#3\end{tabular}\end{tabular}}
\newcommand{\lpmax}[2]{\lp{maximize}{#1}{#2}}
\newcommand{\lpmin}[2]{\lp{minimize}{#1}{#2}}
    
	$$\lpmin{\sum_{k \in [\ell]} q^0_{i, k} d_k + \sum_{k \in [\ell]} q^0_{i, k} \sum_{j \in [m_k]} q^k_{i, j} t_{k, j}}{
		\\
		$\sum_{j \in [m_k]} q^k_{i, j} t_{k, j} \geq -d_k$ & for all $k \in S_0$\\
		$t_{k, j} \geq 0$ & for all $k \in \overline{S_0}$, $j \in [m_k]$\\
		$p_k = 0$ & for all $k \in S_0$\\
		$0 \leq p_k \leq 1$ & for all $k \in \overline{S_0}$\\
		$\sum_{k \in [\ell]} q^0_{i, k} \left((1 - p_k) d_k + \sum_{j \in [m_k]} q^k_{i, j} t_{k, j}\right) - c_i \geq$\\
		$\sum_{k \in [\ell]} q^0_{i', k} \Big((1 - p_k) \left(d_k + \sum_{j \in [m_k]} q^k_{i, j} t_{k, j}\right)$ & for all $i' \in [n]$\\
		\hspace{2cm}$\text{} + p_k\sum_{j \in [m_k]} q^k_{i', j} t_{k, j}\Big) - c_{i'}$
	}$$
    By enumerating all sets of signals $S_0$, we may effectively optimize small UMI instances (as we do for the proof of Theorem~\ref{thmNondeterminismOptimal}).

\subsection{Proofs of Theorems \ref{thmCoMInoStackelberg} and \ref{thmEquilibriumExistence} (Characterization of Equilibrium Existence)}\label{appEquilibria}

We will require the following lemma for both proofs.

\begin{lemma}\label{lemNoCoMIEquilibrium}
		Fix any feasible solution $(\bp, \bs, \bt)$ to a QCQP from CoMI and a signal $k \in [\ell]$ such that $p_k \in (0,1]$. For any $p'_k \in (0, p_k)$, there is some value of $s'_k$ and vector $t'_k = (t'_{k, 1}, t'_{k, 2}, \dots, t'_{k, m_k})$ giving a feasible instance $(\bp', \bs', \bt')$, where $\bp'$, $\bs'$, and $\bt'$ are the same as $\bp$, $\bs$, and $\bt$ except on indexes involving signal $k$. Furthermore:
		\begin{enumerate}
			\item\label{itmNoCoMIEquilibriumStrictlyBetter} If $q^0_{i, k}d_k > 0$, then $(\bp', \bs', \bt')$ has a strictly better objective value than $(\bp, \bs, \bt)$; specifically, the principal's expected payment changes by $q^0_{i, k}(p'_k - p_k)d_k < 0$.
			\item\label{itmNoCoMIEquilibriumContinuous} The map $p'_k \mapsto (s'_k, t'_k)$ is continuous over the open interval $(0, p_k)$.
		\end{enumerate}
	\end{lemma}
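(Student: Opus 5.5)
The plan is to keep, for \emph{every} action $i'$, the agent's expected transfer conditional on signal $k$ exactly unchanged. Then every IC constraint and the transfer part of the objective stay the same, and only the inspection-cost term moves. Fix $p'_k\in(0,p_k)$ and write $\alpha := p'_k/p_k \in (0,1)$. I would scale the inspected payments up by $1/\alpha$, so that $t'_{k,j} := t_{k,j}/\alpha = p_k t_{k,j}/p'_k$. With this choice, $p'_k t'_{k,j} = p_k t_{k,j}$ for every outcome $j$.

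The uninspected payment then has to absorb the change in the weight $(1-p_k)$. I would define $s'_k$ by
\[
(1-p'_k)\, s'_k = (1-p_k)\, s_k, \qquad\text{that is,}\qquad s'_k := \frac{1-p_k}{1-p'_k}\, s_k .
\]
This is well defined because $p'_k<p_k\le 1$ gives $1-p'_k>0$. Nonnegativity of $s'_k$ and $t'_{k,\cdot}$ is immediate, and $p'_k\in(0,1)$, so all box constraints of \Eqref{eq:contract_qcqp} hold.

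Next I would substitute into the per-signal term $(1-p_k)s_k + p_k\sum_j q^k_{i',j}t_{k,j}$ that appears in both the objective and the IC constraint. For every action $i'$ this term is unchanged. Since all other signals are untouched, $T_{i'}(\bp',\bs',\bt') = T_{i'}(\bp,\bs,\bt)$ for all $i'$, so every IC constraint still holds and feasibility follows.

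In the objective, the only remaining change is in the inspection-cost term $q^0_{i,k}p_k d_k$. It becomes $q^0_{i,k}p'_k d_k$, so the expected payment changes by $q^0_{i,k}(p'_k-p_k)d_k$. This is strictly negative when $q^0_{i,k}d_k>0$, which proves item~\ref{itmNoCoMIEquilibriumStrictlyBetter}.

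For item~\ref{itmNoCoMIEquilibriumContinuous}, the maps $p'_k\mapsto p_k t_{k,j}/p'_k$ and $p'_k\mapsto (1-p_k)s_k/(1-p'_k)$ are rational functions whose denominators do not vanish on $(0,p_k)$. Hence both are continuous there.

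The only delicate point is the case $p_k=1$, where $s_k$ is arbitrary. The formula then gives $s'_k=0$, which is still fine. No step looks like a genuine obstacle; the main care needed is simply checking that the invariance holds for \emph{all} actions $i'$ simultaneously, not just the target action $i$. The construction handles this because it matches the products $p_k t_{k,j}$ and $(1-p_k)s_k$, which do not depend on the action.
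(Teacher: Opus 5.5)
Your proposal is correct and follows the paper's proof essentially verbatim. The paper uses the same rescaling $s'_k = \frac{1-p_k}{1-p'_k}s_k$, $t'_{k,j} = \frac{p_k}{p'_k}t_{k,j}$, checks that the per-signal expected transfer is unchanged for every action $i'$ (so all IC constraints survive), and reads off the objective change $q^0_{i,k}(p'_k-p_k)d_k$; your remarks on the case $p_k=1$ and on the nonvanishing denominators for continuity are small details the paper leaves implicit.
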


	\begin{proof}[Proof of Lemma~\ref{lemNoCoMIEquilibrium}]
        For each $j \in [m_k]$, we define
        \begin{align*}
            s_k' &:= \frac{1 - p_k}{1 - p'_k} \cdot s_k, &
            t'_{k, j} &:= \frac{p_k}{p'_k} \cdot t_{k, j}.
        \end{align*}
        This is clearly continuous over the open interval $(0, p_k)$. Observe that, for all $i' \in [n]$ (including $i' = i$, where $i$ is the specified action the principal is trying to incentivize), we have
        \begin{equation}
        \label{equNoCoMIEquilibriumKeyEquality}
        \begin{aligned}
            (1 - p'_k) s'_k + p'_k \sum_{j \in [m_k]} q^k_{i', j} t'_{k, j}
            &= (1 - p'_k) \frac{1 - p_k}{1 - p'_k} s_k + p'_k \sum_{j \in [m_k]} q^k_{i', j} \frac{p_k}{p'_k} t_{k, j} 
            \\
            &= (1 - p_k) s_k + p_k \sum_{j \in [m_k]} q^k_{i', j} t_{k, j}
        \end{aligned}
        \end{equation}
        Note that this equality holds for all signals (by definition), not just the specific $k$ for which the variables changed. It follows that the IC constraint still holds. All of the other constraints obviously continue to hold as well.

        In what follows, we use a signal index variable $k'$ to avoid clashing with the specific signal $k$ from the theorem statement. The objective value of $(\bp', \bs', \bt')$ is
        \begin{align*}
            &\hspace{.43cm} \sum_{{k'} \in [\ell]} q^0_{i, {k'}} \left((1 - p'_{k'}) s'_{k'} + p'_{k'}\left(d_{k'} + \sum_{j \in [m_{k'}]}q^{k'}_{i, j} t'_{{k'}, j}\right)\right)\\
            &= \sum_{{k'} \in [\ell]} q^0_{i, {k'}} \left((1 - p'_{k'}) s'_{k'} + p'_{k'}\sum_{j \in [m_{k'}]}q^{k'}_{i, j} t'_{{k'}, j}\right) + \sum_{{k'} \in [\ell]} q^0_{i, {k'}} p'_{k'} d_{k'}\\
            &= \sum_{{k'} \in [\ell]} q^0_{i, {k'}} \left((1 - p_{k'}) s_{k'} + p_{k'}\sum_{j \in [m_{k'}]}q^{k'}_{i, j} t_{{k'}, j}\right) + \sum_{{k'} \in [\ell]} q^0_{i, {k'}} p'_{k'} d_{k'} \stext{by Equation (\ref{equNoCoMIEquilibriumKeyEquality}) above}\\
            &= \sum_{{k'} \in [\ell]} q^0_{i, {k'}} \left((1 - p_{k'}) s_{k'} + p_{k'}\sum_{j \in [m_{k'}]}q^{k'}_{i, j} t_{{k'}, j}\right) + \sum_{{k'} \in [\ell]} q^0_{i, {k'}} p_{k'} d_{k'} + \sum_{{k'} \in [\ell]} q^0_{i, {k'}} (p'_{k'} - p_{k'}) d_{k'}\\
            &= \sum_{{k'} \in [\ell]} q^0_{i, {k'}} \left((1 - p_{k'}) s_{k'} + p_{k'}\left(d_{k'} + \sum_{j \in [m_{k'}]}q^{k'}_{i, j} t_{{k'}, j}\right)\right) + \sum_{{k'} \in [\ell]} q^0_{i, {k'}} (p'_{k'} - p_{k'}) d_{k'}.
        \end{align*}

        Since the first sum in the final line above is the objective value of $(\bp, \bs, \bt)$, we see that the difference is
        $$\sum_{k' \in [\ell]}q^0_{i, k'} (p'_{k'} - p_{k'}) d_{k'}.$$
        Each of the terms in this sum is zero by definition, except for $k' = k$. Thus, the difference in objective values is precisely $q^0_{i, k}(p'_k - p_k)d_k$ as claimed.
	\end{proof}

        \begin{proof}[Proof of Theorem~\ref{thmCoMInoStackelberg}]
        Fix a CoMI instance $X$. First note that the instance $\widehat{X}$ with zero inspection costs has an optimal solution because it is without harm to inspect all signals, so we may set $p_k=1$ for all signals, obtaining a linear program. 
        
        We first prove the second part of the claim, that there is a sequence of solutions to $X$ whose value converges to the optimal value of $\widehat{X}$. Let $(\bp, \bs, \bt)$ be an optimal solution with value $y$. Let $S$ be the set of signals $k \in [\ell]$ such that $p_k > 0$, and suppose that the minimum nonzero $p_k$ value is $\delta$. In $X$, which has a different objective function but the same feasible set, the value of $(\bp, \bs, \bt)$ is $y + \sum_{k \in S} q^0_{i, k} p_k d_k$. For any $0 < \varepsilon < \delta$, we repeatedly apply Lemma~\ref{lemNoCoMIEquilibrium} to each signal $k \in S$ such that, obtaining a new solution which improves the objective value by an additive $\sum_{k \in S} q^0_{i, k} (\varepsilon - p_k) d_k$. Thus, the new solution, call it $(\bp^\varepsilon, \bs, \bt^\varepsilon)$, has objective value
		$$y + \sum_{k \in S} q^0_{i, k} p_k d_k + \sum_{k \in S} q^0_{i, k} (\varepsilon - p_k) d_k = y + \varepsilon \sum_{k \in S} q^0_{i, k} d_k$$
		Sending $\varepsilon \to 0$, we see that the value of $(\bp^\varepsilon, \bs, \bt^\varepsilon)$ converges to $y$, as desired.
		
		We now prove the characterization of when $X$ has an optimal solution. Clearly, if $\widehat{X}$ has an optimal solution with $p_k = 0$ for all $k$ such that $q^0_{i, k}d_k > 0$, then that same solution yields the same objective value in $\widehat{X}$. This is optimal because the minimum objective value in $X$ is necessarily at least the minimum objective value of $\widehat{X}$. Conversely, suppose that $X$ has an optimal solution $(\bp, \bs, \bt)$ of some value $y'$. Then notice that $y' = y$ by the claim proved in the previous paragraph, since there are certainly solutions to $X$ of value arbitrarily close to $y$. This means that $(\bp, \bs, \bt)$ is an optimal solution to $\widehat{X}$, and it cannot possibly inspect any signal $k$ for which $q^0_{i, k}d_k > 0$ with nonzero probability, for otherwise its objective value would be different in $X$ and $\widehat{X}$.
        \end{proof}

	\begin{proof}[Proof of Theorem~\ref{thmEquilibriumExistence}]
		In all three problem variants, every QCQP/LCQP has a continuous objective function. Furthermore, the feasible set is closed, as all constraints are specified by weak inequalities. Hence, as long as the feasible set is also bounded, an optimal solution exists, as it is the result of minimizing a continuous function over a compact set. Note that all relevant variables are bounded below, and the $p_k$ variables are bounded above (by one). Thus, an optimal solution is guaranteed to exist as long as each $s_k$ and $t_{k, j}$ is bounded above. We will show that, in each of UMI, CoNI, and UNI, we may impose additional constraints upper-bounding these variables without harming the value of the optimal solution.
		
		We begin with UMI and UNI, where we saw that it is possible to rewrite each QCLP so that it only involves variables $t_{k, j}$ and not any $s_k$. Fix an action $i \in [n]$ and a set of non-inspected signals $S_0 \subseteq [\ell]$. For each $k \in [\ell]$ and $j \in [m_k]$, there are two cases to consider. If $q^0_{i, k} \cdot q^k_{i, j} = 0$, then variable $t_{k, j}$ irrelevant to the game, so we may set it to zero without loss of generality. Otherwise, if $q^0_{i, k} \cdot q^k_{i, j} > 0$, then we may upper-bound $t_{k, j}$ by $(\max_{i' \in [n]} R_{i'})/(q^0_{i, k} q^k_{i, j})$, for if $t_{k, j}$ is greater than this quantity, then the expected payment from the principal to the agent is more than $(\max_{i' \in [n]} R_{i'})$. This means the principal's utility is negative, so the principal would have been better off with all-zero payments. Thus, in either case, we may upper bound all variables without harming the optimal objective value.
		
		We next consider CoNI. By similar reasoning, we first claim that we may upper bound each $t_{k, j}$ by 0 (in the case where $q^0_{i, k} \cdot q^k_{i, j} = 0$ for the given action $i$) or $(\max_{i' \in [n]} R_{i'})/(q^0_{i, k} q^k_{i, j})$ (otherwise). To see why the latter bound still holds, observe that, if it is violated, then the objective function contains the term
		$$q^0_{i, k}\left((1 - p_k) s_k + p_k\left(d_k + q^k_{i, j} t_{k, j}\right)\right) \geq q^0_{i, k}\left((1 - p_k) t_{k, j} + p_k q^k_{i, j} t_{k, j} \right) \geq q^0_{i, k} q^k_{i, j} t_{k, j} > \max_{i' \in [n]} R_{i'},$$
		where in the first inequality we have used the negative inspection constraint and dropped the non-negative $d_k$ term, and in the second inequality we have used the fact that $q^k_{i, j} \leq 1$. So as before, the principal would have been better off with all-zero payments.
		
		Having established that the $t_{k, j}$ variables are bounded, we next proceed to bound the $s_k$ variables. Specifically, we claim that it is without harm to the objective value to bound
		$$s_k \leq \max_{j \in [m_k]} U_{k, j},$$
		where $U_{k, j}$ is our previous upper bound on $t_{k, j}$. Suppose this inequality is violated. If $p_k = 0$, then we may easily again derive that the principal's payment is more than $\max_{i' \in [n]} R_{i'}$. Otherwise, Lemma~\ref{lemNoCoMIEquilibrium} implies we can improve our objective value by locally decreasing $p_k$ and increasing several $t_{k, j}$. Since $s_k$ is strictly larger than the largest $t_{k, j}$, a small enough perturbation will still respect the negative inspection constraint.
	\end{proof}

\subsection{Proof of Theorem~\ref{thmPaymentGaps}}\label{appPaymentGaps}

We will show that it is possible to transform each of a deterministic, CoNI, UMI, and CoMI contract into one another while incentivizing the same action $i$, which will prove (\ref{itmPaymentGapsImplementability}). We will also bound the gaps between the minimum expected payments of each transformation, proving the other three statements.

    First suppose there is some deterministic contract incentivizing action $i$. We will show that there is an equivalent contract in UNI (and thus in CoNI and UMI as well). Specifically, for each signal $k$, we will transform the $s_k$ and $t_{k, j}$ variables in a way that satisfies the commitment/negativity constraints and does not change the principal's expected payment. For signals $k$ such that $p_k = 0$, the $t_{k, j}$ variables are payoff-irrelevant. Thus, we may set $t_{k, j} := s_k$ to satisfy both the negativity constraint and the relevant commitment constraint, namely Equation (\ref{equEquilibriumLowP}). Likewise, for all signals $k$ such that $p_k = 1$, the $s_k$  variables are payoff-irrelevant, so we may set $s_k = d_k + \max_j t_{k, j}$ to both the negativity constraint and the relevant commitment constraint, namely Equation (\ref{equEquilibriumHighP}). Thus, we can transform any deterministic contract into an UNI contract with the same expected payment, so the optimal CoNI/UMI/UNI contracts have weakly lower expected payments. This proves (\ref{itmPaymentGapsDetCoNIUMI}).

    Next take an arbitrary contract in CoNI or UMI incentivizing action $i$. Since such a contract is also a valid solution to CoMI, the principal's minimum expected payment in CoMI is weakly smaller. If, additionally, the contract sometimes pays for a second opinion, there must be some signal $k$ such that $q^0_{i, k} > 0$, $p_k > 0$, and $d_k > 0$. By Lemma~\ref{lemNoCoMIEquilibrium} (\ref{itmNoCoMIEquilibriumStrictlyBetter}) (stated and proved in Appendix~\ref{appEquilibria}), we can decrease the inspection probability of signal $k$ and adjust payments accordingly to get a strictly smaller payment in CoMI. This proves (\ref{itmPaymentGapsCoNIUMICoMI}).

    Finally, to complete the cycle, take an arbitrary CoMI contract $(\bp, \bs, \bt)$ incentivizing action $i$. Consider the alternative contract $(\bp', \bs', \bt')$ where each $p'_k = 1$, $s'_k$ is defined arbitrarily, and
    $$t'_{k, j} = (1 - p_k)s_k + p_k t_{k, j}.$$
    In other words, $(\bp', \bs', \bt')$ simulates $(\bp, \bs, \bt)$ by always inspecting every signal and sometimes simply ignoring the outcome and using the old $s_k$ payments. The agent's expected utilities for each action are clearly the same in $(\bp', \bs', \bt')$ as in $(\bp, \bs, \bt)$, so $(\bp', \bs', \bt')$ still incentivizes action $i$. The additional expected payment is
    $$\sum_{k\in[\ell]} q^0_{i,k} (1-p_k) d_k \le \sum_{k\in[\ell]} q^0_{i,k} d_k = \expect{k\sim q^0_{i}}{ d_k},$$
    which proves (\ref{itmPaymentGapsFreeDet}). \qed

\subsection{Proof of Theorem~\ref{thmUmiHardness}}\label{appUmiHardness}
We observe that the reduction from Vertex Cover in Theorem~\ref{thmHardnessOfApproximation} gives us the same bounds on the principal's utility in UMI and UNI:
    \begin{itemize}
        \item Suppose that $G$ contains a large independent set. Then we can apply the reduction to obtain a deterministic contract with high utility. By Theorem~\ref{thmPaymentGaps} (\ref{itmPaymentGapsDetCoNIUMI}), the principal's utility can only be greater in UMI or UNI than in the deterministic contract.
        \item Suppose there is a UMI/UNI contract yielding high utility for the principal. Define $S$ to be the set of vertices inspected with nonzero probability. The exact same argument as before shows that $S$ is a vertex cover. To prove that $S$ is small, the key observation is that, since the principal is unable to commit to probabilities, the principal must weakly prefer paying the inspection cost for the signal of each vertex in $C$. Hence, the cost to the principal per vertex in $S$ is just as high, so the size of $S$ must be small. We thus obtain the same lower bound on the size of the complement of $S$, which is an independent set. \qed
    \end{itemize}

\subsection{Proof of Theorem~\ref{thmNondeterminismOptimal}}\label{appNondeterminismOptimal}

All numerical claims in this proof have been verified computationally, using Gurobi~\cite{Gurobi} to solve the various non-convex programs discussed previously. Consider an instance with three actions, two signals, and two outcomes per signal, with
    $$\bq^0 =
    \begin{pmatrix}
    0.5 & 0.5\\
    0.6 & 0.4\\
    0.6 & 0.4
    \end{pmatrix};\quad
    \bq^1 = \bq^2 =
    \begin{pmatrix}
    0.6 & 0.4\\
    0.4 & 0.6\\
    0.6 & 0.4
    \end{pmatrix}
    $$
    $$
    \bc =
    \begin{pmatrix}
    0 & 0 & 1
    \end{pmatrix};
    \quad
    \bd =
    \begin{pmatrix}
    1 & 1
    \end{pmatrix}.
    $$
    Suppose the rewards are such that the principal wishes to incentivize the costly action 3. Since action 3 is positively correlated with signal 1 and outcome 1, the optimal deterministic contract inspects signal 1 and only pays for outcome 1. The necessary payment is $16 + \frac23$, and the total expected cost (including inspection) is $6.6$. However, with a nondeterministic contract, it is preferable to save on inspection cost by inspecting randomly. In CoNI, the optimal inspection probability is $0.625$, for an expected cost of $6.375$; in UMI, the optimal probability is $0.525$, for an expected cost of $6.315$. \qed

\section{Experiment Details}
\label{app:experiment}

\subsection{Implementation}

\paragraph{Code.} We implement our analysis in Python. Our experiments use \texttt{cvxpy} for solving contract design linear programs \citep{diamond2016cvxpy}, and \texttt{matplotlib} for plotting \citep{matplotlib}.

Code is 
available at \url{https://github.com/edensaig/adaptive-contracts}.

\paragraph{Hardware and Runtime.} Analyses were run on a single Macbook Pro laptop, with 16GB of RAM, M2 processor, and no GPU. A single run of the data analysis pipeline takes roughly ten minutes to complete. Most of the computation time is devoted towards bootstrap repetitions of the sensitivity analysis (see below).

\subsection{AlpacaEval}
\subsubsection{Contract Design Parameters}
\paragraph{Action set.} For the set of actions, we use three recent LLMs provided by OpenAI, designated in the AlpacaEval dataset as: $\Set{
\texttt{gpt-3.5-turbo-1106},\ 
\texttt{gpt-4o-mini-2024-07-18},\ 
\texttt{gpt-4o-2024-05-13}
}$.

\paragraph{Signal and outcome distributions.} Signal and outcome distributions are estimated using the empirical proportions in the dataset (length comparison for the coarse signal, and comparison to a GPT-4 reference as the refined evaluation). The corresponding distribution matrices are given below, and additionally illustrated in \Cref{fig:alpaca_experiment} (Top). Right column of each matrix represents the probability of the positive signal or outcome:
\begin{equation*}
\bq^0 =
\bq^{\text{len $\ge \AlpacaLengthThreshold$?}}
= \begin{pmatrix}
0.21 & 0.79\\ 
0.10 & 0.90\\ 
0.11 & 0.89
\end{pmatrix};
\end{equation*}
\begin{equation*}
\bq^1
=\bq^{\text{$\succ$ GPT-4? $\mid$ len $< \AlpacaLengthThreshold$ }}
= \begin{pmatrix}
0.78 & 0.22\\ 
0.61 & 0.39\\ 
0.54 & 0.46
\end{pmatrix};
\quad
\bq^2 
=\bq^{\text{$\succ$ GPT-4? $\mid$ len $\ge \AlpacaLengthThreshold$}}
= \begin{pmatrix}
0.95 & 0.05\\ 
0.56 & 0.44\\ 
0.45 & 0.55
\end{pmatrix}
\end{equation*}

\paragraph{Costs and rewards.} 
We estimate the agent's internal cost of response by multiplying the mean response length by the current OpenAI API prices:\footnote{OpenAI API pricing: \url{https://platform.openai.com/docs/pricing}; archived version in supplementary materials.} 
\$1.5/1M tokens for GPT-3.5,
\$0.6/1M tokens for GPT-4o Mini,
and
\$10/1M tokens for GPT-4o.
We estimate the number of tokens using the ``4 characters per token'' rule of thumb given in the OpenAI tokenizer guidelines.\footnote{OpenAI tokenizer: \url{https://platform.openai.com/tokenizer}} 
For the cost of second-tier evaluation, we use the the human evaluation costs reported by AlpacaEval as reference (\$300/1K examples). The \$2 reward was chosen as an order-of-magnitude estimate of online retail profit margin; see sensitivity analysis in \Cref{subsec:appx_sensitivity}. Overall, the resulting problem parameters are:
$$
\bc =  \begin{pmatrix}
0.00030 & 0.00028 & 0.00468
\end{pmatrix}; \quad
\bd =  \begin{pmatrix}
0.3 & 0.3
\end{pmatrix};
\quad
R_i =  \begin{pmatrix}
0.17 & 0.88 & 1.08
\end{pmatrix}
$$

\paragraph{Baselines.} In \Cref{fig:alpaca_experiment}, we compare performance to the following non-adaptive baselines:
\begin{itemize}
    \item \textbf{Naive.} 
    The \texttt{naive} baseline models the interaction between a strategic provider, and a user which doesn't consider moral hazard risks. The contract pays the constant cost of the most expensive model, but receives the reward associated with the least expensive model, as the provider has no incentive to invest additional effort. 
    \item \textbf{Only coarse evaluation.} The \texttt{len} baseline performs only coarse evalutation by comparing the output length to a fixed threshold at no cost.
    In the adaptive contracts framework, this is equivalent to a contract that never inspects.
    \item \textbf{Only costly evaluation.}
    The \texttt{judge} baseline represents a setting in which the user only performs the costly evaluation. 
    \item \textbf{Always inspect.}
    The \texttt{len+judge} baseline represents a non-adaptive setting in which the user always inspects.
\end{itemize}

\begin{figure*}
    \centering
    \includegraphics[scale=\plotscale]{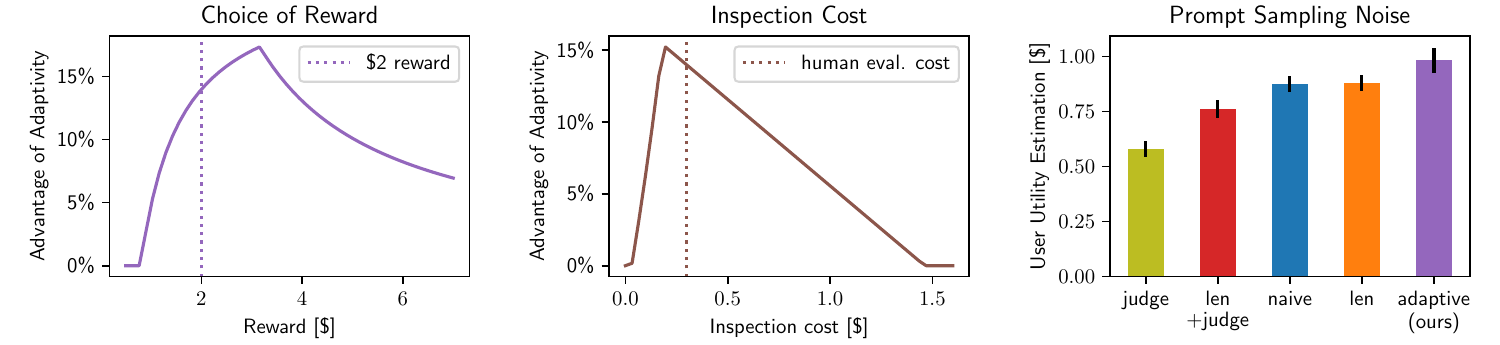}
    \caption{
    Sensitivity analysis for AlpacaEval experiment, as described in \Cref{app:experiment}. 
    \textbf{(Left)}
    Sensitivity to choice of reward, measured as the relative difference between the optimal adaptive contract and the optimal non-adaptive contract for each parameter choice.
    \textbf{(Center)}
    Sensitivity to inspection cost.
    \textbf{(Right)}
    Sensitivity to prompt sampling noise via bootstrapping. Error bars represent the standard deviation.
    }
    \label{fig:alpaca_sensitivity}
\end{figure*}

\subsubsection{Sensitivity Analysis}
\label{subsec:appx_sensitivity}
We provide additional numerical experiments to assess the sensitivity of our results to the choice of economic parameters and the distribution of prompts:

\paragraph{Size of reward.} \Cref{fig:alpaca_sensitivity} (Left) presents the expected user utility $U_P$ of adaptive contracts in comparison to the best non-adaptive contract, as a function of the reward parameter. We observe a unimodal curve with strictly positive support for all rewards above some threshold, suggesting that adaptivity is beneficial in many regions of the parameter space, and most beneficial for certain parameter values: When the reward are too low compared to inspection costs, inspection costs are higher than rewards, and therefore no inspection is beneficial, and the optimal contract converges with the non-adaptive contract which never inspects. Conversely, when rewards become very large, the cost of inspection become small in comparison, and the advantage over non-adaptive contracts that always inspects is also vanishing.
    This suggests the existence of a ``sweet spot'', and we can indeed see that the advantage of adaptivity in our experiment peaks when the reward is around \$3. We also note that the optimization of adaptive contracts includes all non-adaptive contracts in its optimization space, so the user always weakly benefits from treating the design problem as adaptive.

\paragraph{Cost of inspection.} \Cref{fig:alpaca_sensitivity} (Center) presents the expected advantage of adaptive contracts as a function of the inspection cost $\bd$. Similar to the reward sensitivity curve, a unimodal behavior is observed here as well: When inspection costs approach zero, it is always beneficial to inspect all signals, and there is no advantage over the non-adaptive contract which always inspects. Conversely, when the inspection costs are high, there is no advantage over the contract that never inspects. In our experiments, the advantage of adaptivity remains positive even when the inspection costs are increased or decreased by up to five-fold.

\paragraph{Prompt sampling noise.}
    \Cref{fig:alpaca_sensitivity} (Right) presents an estimation of performance uncertainty due to prompt sampling noise, using the bootstrapping method. At each repetition, we sample prompts with replacement from the dataset, extract problem parameters from the corresponding LLM responses in the dataset, and compute optimal contracts and user utilities for each baseline. The base step was repeated \AlpacaNBootstrapRepetitions{} times. \Cref{fig:alpaca_sensitivity} (Right) presents mean values and standard deviations, suggesting statistical significance of adaptivity performance gains in our setting with respect to prompt sampling noise. Quantitatively, adaptive contracts exhibit strictly favorable performance in $90\pm5.9\%$ of cases at 95\% confidence level (binomial proportion confidence interval), and also note that the advantage of adaptive contracts is always non-negative, as non-adaptive contracts are a special case of adaptive contracts. Conditioning on repetitions where performance is strictly favorable, the average relative performance gain is in the interval $[1.5\%, 21\%]$ at \%95 confidence level, computed by taking the advantage statistic quantiles over bootstrap repetitions.

\begin{figure*}
    \centering
    \includegraphics[scale=\plotscale]{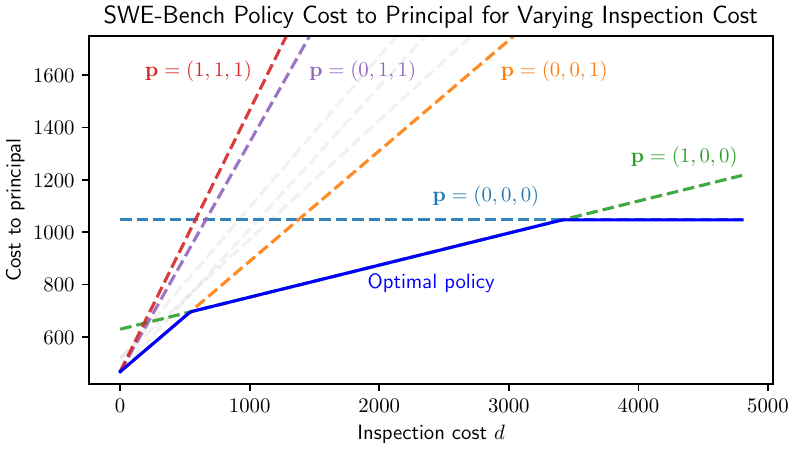}
    \caption{
    Adaptive contracts on SWE-Bench data for different policies $\bp$ varying inspection cost $d$, presenting all possible inspection policies. Policies which are optimal for some value of $d$ are colored and labeled, and the optimal inspection policy is designated with a solid line, and policies that suboptimal for any $d$ are represente by dashed gray lines. Compare to \Cref{fig:swebench_experiment} (Left), and see Appendix \ref{appendix:swebench_additional}.
    }
    \label{fig:swebench_all_policies}
\end{figure*}

\newpage
\subsection{SWE-Bench}
\subsubsection{Contract Design Parameters}
For the set of actions, we use six recent LLMs provided by OpenAI: 

\begin{table}[!ht]
  \begin{center}
    \begin{small}
        \begin{tabular}{llcc}
          \toprule
          SWE-Bench designation  & Model name         & Success rate $\mu_i$      & Cost $c_i$  \\
          \midrule

\begin{sc}20250807\_mini-v1.7.0\_gpt-oss-120b\end{sc}    & gpt-oss-120b    & $0.26$    & $28.56$    \\
\begin{sc}20250807\_mini-v1.7.0\_gpt-5-nano\end{sc}    & GPT-5 nano (2025-08-07) (medium reasoning)    & $0.348$    & $19.038$    \\
\begin{sc}20250726\_mini-v1.0.0\_o4-mini-2025-04-16\end{sc}    & o4-mini (2025-04-16)    & $0.45$    & $104.99$    \\
\begin{sc}20250726\_mini-v1.0.0\_o3-2025-04-16\end{sc}    & o3 (2025-04-16)    & $0.584$    & $166.83$    \\
\begin{sc}20250807\_mini-v1.7.0\_gpt-5-mini\end{sc}    & GPT-5 mini (2025-08-07) (medium reasoning)    & $0.598$    & $17.739$    \\
\begin{sc}20250807\_mini-v1.7.0\_gpt-5\end{sc}    & GPT-5 (2025-08-07) (medium reasoning)    & $0.65$    & $140.19$    \\
          \bottomrule
        \end{tabular}
    \end{small}
  \end{center}
\end{table}

\newcommand{\binomial}{\mathrm{Binomial}}

We extract data from the \texttt{metadata.yaml} files in the \texttt{bash-only} directory of the \swebench{} experiments repository\footnote{\url{https://github.com/SWE-bench/experiments/tree/main/evaluation/bash-only}}. 
For an adaptive setup with $\ell$ initial signals and $m$ inspected outcomes, we set:
\begin{equation}
\label{eq:swebench_binomial}
\bq^0_i=\binomial(\ell-1,\mu_i)
\quad\quad
\bq^1_i=\binomial(m-1,\mu_i)
\end{equation}
where $\mu_i$ is the success rate specified above. Note that a $\binomial(k,p)$ distribution has $k+1$ possible outcomes $0,\dots,k$.
For inspection costs, denote by $\delta$ the expected cost of running a single test. For an adaptive contracts with $\ell$ signals and $m$ outcomes, the inspection cost is $d_k=\delta(m-1)$ for all $k\in[\ell]$, and we add $\delta(\ell-1)$ to the total expected pay to account for the cost of running the initial tests. In \Cref{fig:swebench_experiment} (Left) we let $\delta$ vary, and in \Cref{fig:swebench_experiment} (Right) we set $\delta=125$.

\subsubsection{Additional Empirical Evaluation}
\label{appendix:swebench_additional}
In this section, we extend the empirical analysis in \Cref{sec:swebench}. We first provide further interpretation of the role of inspection cost, and then present two additional experiments showing that simple inspection policies, which inspect at most one signal, remain optimal even after introducing parameterized correlation or random perturbations of the distribution matrices. Finally, we empirically study how the agent’s utility varies with the degree of information asymmetry.

\paragraph{Multiple Optimal Policies at Zero Inspection Cost.}
\Cref{fig:swebench_experiment} (Left) presents the optimal inspection policy $\bp^*$ as a function of inspection cost $d$, restricted to optimal policies that inspect at most one signal to simplify graphical presentation. For completeness, we note that multiple policies are optimal when inspection cost is zero ($d=0$). This is consistent with our theoretical findings:  the contract design setting in \Cref{sec:swebench} satisfies MLRP+ISOP, and therefore by \Cref{thm:mlrp_isop_tractability} the optimal adaptive contract  pays for at most one signal and one outcome. Then, \Cref{claim:optimal_contract_only_inspects_when_required} guarantees the existence of an optimal adaptive contract which inspects one signal at most. This implies that there exists an optimal inspection policy $p$ with support of size at most $1$ for all $d \ge 0$. And indeed, we obtain that $\bp=(0,0,1)$, $\bp=(0,1,1)$ and $\bp=(1,1,1)$ are all optimal for $d=0$, whereas $\bp=(0,0,1)$ is the unique optimal inspection policy for any strictly positive and sufficiently small inspection cost.

\paragraph{Beta-Binomial correlation.}
To model interdependence between tests, we utilize a Beta-Binomial compound distribution.
Beta-Binomial is an extension of the Binomial distribution, in which the binomial success probability has a prior beta distribution.
In the context of adaptive contracts, we use the beta prior to introduce correlation between tests: As initial observations (observed signal $k$) provide information about the value of $\mu$ (and thus about the conditional outcome distributions) via Bayes' rule. Formally, following the formulation of \citet{hisakado2006correlated} we introduce a correlation coefficient $\rho\in(0,1)$. Given the original expected value $\mu_i$, we define the modified signal distribution as:
\begin{equation}
\label{eq:betabinom_signal}
\tilde\bq_i^0 = \mathrm{BetaBinomial}\left(
\ell-1,
\left(\frac{1-\rho}{\rho}\right)\mu_i,
\left(\frac{1-\rho}{\rho}\right)(1-\mu_i)
\right)
\end{equation}
Which gives an expected success rate of $\mu_i$, consistent with \Cref{eq:swebench_binomial}. Since the beta distribution is a conjugate prior, and the modified distribution of each $\tilde\bq^k_i$ is given by the Beta-Binomial posterior:
\begin{equation}
\label{eq:betabinom_outcome}
\tilde\bq^k_i = \mathrm{BetaBinomial}\left(
m-1,
k+\left(\frac{1-\rho}{\rho}\right)\mu_i,
\ell-1-k\left(\frac{1-\rho}{\rho}\right)(1-\mu_i)
\right)
\end{equation}
To evaluate the robustness of simplicity against parametric correlation, we vary the correlation parameter $\rho$, construct the corresponding design problems based on the original costs and \Cref{eq:betabinom_signal,eq:betabinom_outcome}, and compute the optimal inspection policy. Results are illustrated in \Cref{fig:simplicity_robustness} (Left), and show that optimal inspection policies remain simple even when beta-binomial correlation is added. We also note that this property seems to persist across all parameter configurations we sampled empirically, however proving this formally remains an intriguing question for future work.
Moreover, results also show that the optimal contract cost increases correlation $\rho$ grows, converging towards the free-inspection baseline ($d_k=0$). This suggests that the economic value of additional observations diminishes as their information value decreases.

\begin{figure*}
    \centering
    \includegraphics[scale=\plotscale]{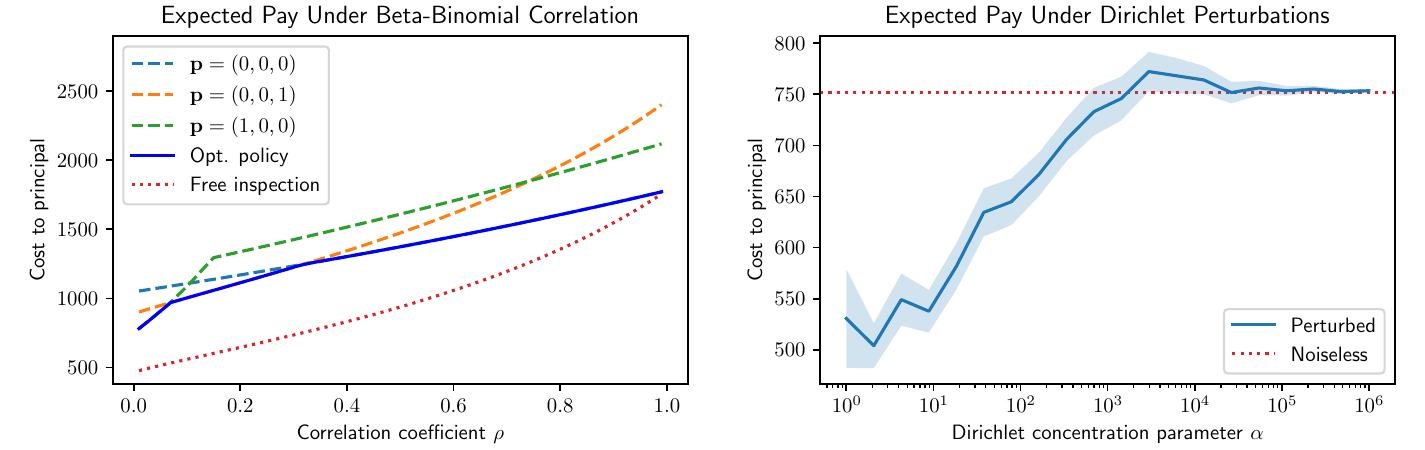}
    \caption{
    Robustness of simple contracts against structured correlation and random perturbations in the \swebench{} setting (Appendix~\ref{appendix:swebench_additional}). \textbf{(Left)} Optimal contract pay for different levels of Beta-Binomial correlation. Results show that optimal inspection policies remain simple even when parametric correlation is added to a setting satisfying MLRP and ISOP.
    \textbf{(Right)} Optimal pay of Dirichlet-perturbed contracts, showing convergence towards the noiseless setting with increasing concentration.
    }
    \label{fig:simplicity_robustness}
\end{figure*}

\paragraph{Dirichlet perturbations.} To further test the robustness of our findings, we evaluate the model under random Dirichlet perturbations. Given the original distributions $\bq^0, \bq^1$, and concentration parameter $\alpha>0$, let:
\begin{equation}
\label{eq:dirichlet_perturbations}
\tilde\bq^0_i \sim \mathrm{Dirichlet}\left(\alpha \bq^0_i\right)
\quad\quad
\tilde\bq^k_i \sim \mathrm{Dirichlet}\left(\alpha \bq^1_i\right)
\end{equation}
In this setup, $\alpha$ controls the inverse of the perturbation strength: as $\alpha$ increases, the sampled distributions concentrate more tightly around the original distributions.
To measure robustness, we vary $\alpha$, sample distributions as described above, and compute optimal adaptive contracts. Results are illustrated in \Cref{fig:simplicity_robustness} (Right), and confidence bands represent 95\% t-test mean confidence intervals. For all repetitions and values of $\alpha$, we observe that the optimal policies are simple. Additionally, we observe that contract cost generally decreases as Dirichlet noise increases. We hypothesize that noise increases the effective distance between the distributions, making them easier to distinguish, and thus decreasing information rent.

\paragraph{Agent utility and contract cost breakdown.} In \Cref{sec:swebench}, we quantify the trade-off between informativeness and inspection cost, and in Appendix \ref{appendix:agent_utility} we describe the possible implications of adaptive inspection on the agent's utility. In this additional experiment, we provide further empirical interpretation. In this experiment, we fix the target action as the top-performing model in our dataset (action $n$), and vary the size $m$ of the refined test suite. We measure the different cost components of the contract: Target action internal cost $c_n$, agent's utility $U_A(n)=T_n-c_n$ (information rent), expected inspection cost $D_n$, and the total cost to the principal $D_n+T_n$ (see \Cref{sec:setting} for notation definitions). In \Cref{fig:agent_utility} (Left), we observe that the agent's utility generally decreases as the size of the outcome space increases, with a discrete jump in utility as the optimal inspection policy shifts from $\bp^*=(0,0,1)$ to $\bp^*=(1,0,0)$. We interpret this is an expected decrease in information rent as the signal becomes more informative (see Appendix \ref{appendix:agent_utility} for additional discussion). Finally, in \Cref{fig:agent_utility} (Right) we visualize the cost breakdown of different contract components as $m$ varies, demonstrating the trade-off between information rent and inspection cost as the size of outcome space increases.

\begin{figure*}
    \centering
    \includegraphics[scale=\plotscale]{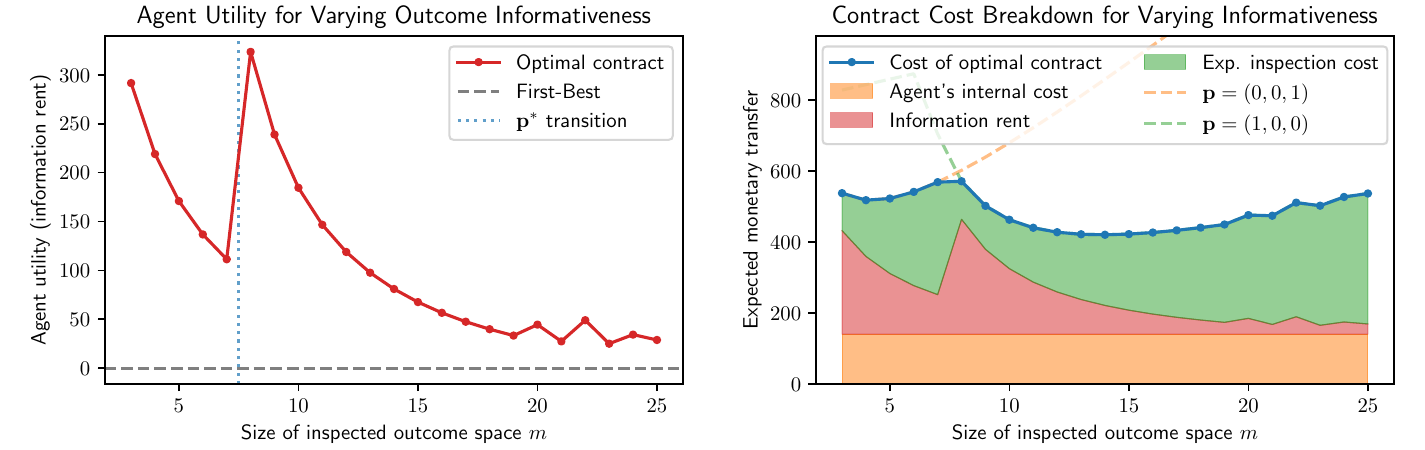}
    \caption{
    Agent's utility and contract cost breakdown as a function of outcome space size (Appendix \ref{appendix:swebench_additional}).
    \textbf{(Left)} Agent's utility for the optimal contract. Utility generally decreases as the inspected outcome space becomes more informative, converging towards zero as the information gap narrows. The discrete jump in utility corresponds to a shift in optimal inspection policy.
    \textbf{(Right)} Cost breakdown of the optimal contract, illustrating the trade-off between information rent and inspection cost as the informativeness of the outcome space varies.
    }
    \label{fig:agent_utility}
\end{figure*}

%% file: appendix/approx.tex
\section{Hardness}
\subsection{Proof of Theorem \ref{thmHardnessOfApproximation} (Hardness of Approximation Without ISOP)}\label{appHardnessOfApproximation}

\newcommand{\TargetAction}{{\mathrm{target}}}
\newcommand{\DummySignal}{{\mathrm{dummy}}}
\newcommand{\VC}{{\mathrm{vc}}}
\newcommand{\adj}{{\mathrm{adj}}}

We reduce from Independent Set. Consider an input graph $G = (V, E)$, and arbitrarily pick a small constant $\varepsilon\in(0,1)$. We assume that vertices and edges are indexed and uniquely labeled, and denote by $\adj(e)=\Set{u,v}$ the nodes adjacent to a given edge $e\in E$.
We define a contract design instance %
as follows:
\begin{itemize}
\item \textbf{Actions, signals, and outcomes.} 
The set of actions is $E \cup \Set{\TargetAction}$, and the set of signals is $V \cup \Set{\DummySignal}$, where $\Set{\TargetAction,\DummySignal}$ are special symbols. Thus, the number of actions is $n=\Size{E}+1$, and the number of signals is $\ell=\Size{V}+1$.
Each signal is associated with a binary inspected outcome space, such that $m_k=2$ for all signals $k\in[\ell]$. To simplify presentation, we denote here $[m_k]=[2]=\Set{A,B}$.

\item \textbf{Action and inspection costs.}
Actions corresponding to edges $e\in E$ cost $c_e=0$.
The cost of the target action is $ c_\TargetAction={\varepsilon}\left(\Size V +1 \right)^{-1}$.
The cost of inspection for any vertex signal $v\in V$ is $d_v=\Size V +1$.
The cost of inspection for the $\DummySignal$ signal is $d_\DummySignal=\left(\Size V +1\right)^2$.

\item \textbf{Signal and outcome distributions.}
For all actions, the signal distributions $\bq^0_i$ are uniform, formally $q^0_{i,k}=(\Size{V}+1)^{-1}$ for all action-signal pairs $(i,k)\in[n]\times[\ell]$.
When vertex signals $v\in V$ are inspected under edge actions $e\in E$, the outcome distribution is deterministic, and given by:
        $$
        \bq^v_{e}=\begin{cases}
            (1,0) & v\in\adj(e) \\
            (0,1) & \text{otherwise}
        \end{cases}
        $$
When vertex signals $v\in V$ are inspected under the $\TargetAction$ action, the outcome distribution is:
        $$
        \bq^v_{\TargetAction}=(0,1)
        $$
For the $\DummySignal$ signal, the inspected outcome distribution is $(0,1)$ for the $\TargetAction$ action, and $(1,0)$ otherwise. Note that this clearly satisfies Inspection-MLRP, as the distribution over signals is the same regardless of the chosen action, while for any signal, the more costly action makes the $B$ outcome weakly more likely.

\item \textbf{Principal rewards.} 
The reward for the signal-outcome pair $(\DummySignal, B)$ is $\left({\Size V+\varepsilon}\right)\left(\Size{V}+1\right)$.
The reward for any other signal-outcome pair is zero.
\end{itemize}

\begin{lemma}
\label{prop:vertex_cover_expected_reward}
The expected reward of the target action is $R_\text{target} = |V| + \varepsilon$, and the expected reward for any other action $e\in E$ is $R_e = 0$.
\end{lemma}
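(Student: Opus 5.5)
The plan is to compute $R_i$ directly from its definition in \Cref{sec:setting}, $R_i=\E_{k\sim\bq^0_i}[\E_{j\sim\bq^k_i}[r_{k,j}]]$. The key observation is that the reward matrix $\br$ has a single nonzero entry, namely $r_{\DummySignal,B}=(\Size V+\varepsilon)(\Size V+1)$. Hence for every action $i$ the double expectation collapses to one term:
\[
R_i = q^0_{i,\DummySignal}\, q^{\DummySignal}_{i,B}\, r_{\DummySignal,B}.
\]
Since the signal distribution is uniform for every action, we have $q^0_{i,\DummySignal}=(\Size V+1)^{-1}$ for all $i$. Substituting gives
\[
R_i=q^{\DummySignal}_{i,B}\,(\Size V+\varepsilon).
\]

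It then remains to read off the outcome distribution of the dummy signal from the construction.

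\begin{itemize}
\item For the target action, $\bq^{\DummySignal}_{\TargetAction}=(0,1)$, so $q^{\DummySignal}_{\TargetAction,B}=1$. This yields $R_\TargetAction=\Size V+\varepsilon$.
\item For every edge action $e\in E$, $\bq^{\DummySignal}_e=(1,0)$, so $q^{\DummySignal}_{e,B}=0$. This yields $R_e=0$.
\end{itemize}

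The vertex signals contribute nothing in either case, because all their rewards $r_{v,A}$ and $r_{v,B}$ are zero. So the outcome distributions $\bq^v_e$ never need to be examined.

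There is no genuine obstacle. The only point to watch is that $R_i$ is defined as the reward of the full signal–outcome pair, whether or not the principal inspects. Inspection affects only payments and the cost $D_i$, not the reward, so the computation is independent of the contract $\bp$. Once that is noted, the lemma is a one-line substitution.
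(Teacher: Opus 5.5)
Your proposal is correct and matches the paper's proof. Both arguments observe that $(\mathrm{dummy},B)$ is the only rewarding signal--outcome pair. Under the target action it occurs with probability $(|V|+1)^{-1}$, giving $R_{\mathrm{target}}=|V|+\varepsilon$; under every edge action it occurs with probability $0$, giving $R_e=0$. Your formula $R_i = q^0_{i,\mathrm{dummy}}\, q^{\mathrm{dummy}}_{i,B}\, r_{\mathrm{dummy},B}$ simply writes out that same computation.
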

\begin{proof}
By definition, the only signal-outcome pair which yields reward is $(\DummySignal,B)$. When the target action is taken, the $\DummySignal$ signal has probability $(\Size{V}+1)^{-1}$, yielding a total expected reward of $R_\TargetAction=\Size V + \varepsilon$. For any other action, the probability of obtaining this signal-outcome pair is zero, and therefore the total reward for any other action is zero as well.
\end{proof}

\begin{lemma}
\label{prop:vertex_cover_expected_pay}
Let $G = (V, E)$ be a graph, and let $\varepsilon \in (0, 1)$. For the corresponding contract design instance. Denote the set of inspected signals by $S$, and by $f(S)$ the optimal expected pay of the principal when signals $S$ are inspected (expected transfer and inspection costs). Then it holds that $f(S) \in \left[|S|, |S| + \varepsilon\right)$ if $S$ is a vertex cover of $G$, and $f(S) \ge |V| + 1$ otherwise.
\end{lemma}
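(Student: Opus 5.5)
The plan is to treat the two parts of the lemma separately. First I fix the inspection set $S$ and look at which contracts $(\bs,\bt)$ incentivize the target action. By Lemma~\ref{lemma:union_contract_is_minpay_for_fixed_p}, this fixed-$S$ problem is a MIN-PAY problem. Two features of the instance drive everything.

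\emph{Inspection cost is exactly $1$ per vertex.} Every signal has probability $(\Size V+1)^{-1}$ under every action. So inspecting a vertex signal $v$ contributes exactly $(\Size V+1)^{-1}\cdot(\Size V+1)=1$ to $D_\TargetAction$. Inspecting the dummy signal contributes $\Size V+1$.

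\emph{What inspection can tell apart.} Non-inspected signals are uniform under every action, so they cannot separate actions. The target action and an edge action $e$ differ only on inspected signals $v\in\adj(e)$, and on the dummy signal if it is inspected.

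Two immediate consequences follow. If $\DummySignal\in S$, then $f(S)\ge D_\TargetAction\ge \Size V+1$. So for the vertex-cover case I take $S\subseteq V$, and any $S$ containing the dummy falls under the ``$\ge\Size V+1$'' bound. For $S\subseteq V$ we get the lower bound $f(S)\ge D_\TargetAction=\Size S$, because transfers are nonnegative.

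\textbf{Upper bound when $S\subseteq V$ is a vertex cover.} I would exhibit an explicit contract. Set $t_{v,B}=\varepsilon$ for every $v\in S$, and make all other payments zero (including $\bs\equiv 0$).
\begin{itemize}
\item Under the target action, each inspected $v$ yields outcome $B$. Hence $T_\TargetAction=\Size S\,\varepsilon/(\Size V+1)$.
\item Under an edge action $e$, at least one endpoint of $e$ lies in $S$ and yields $A$. Hence $T_e\le(\Size S-1)\varepsilon/(\Size V+1)$.
\item The IC constraint then becomes $T_\TargetAction-T_e\ge\varepsilon/(\Size V+1)=c_\TargetAction$, which holds against every edge action, since edge actions cost $0$.
\end{itemize}
The total expected pay is $\Size S+\Size S\varepsilon/(\Size V+1)$. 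Since $\Size S\le\Size V<\Size V+1$, this is $<\Size S+\varepsilon$, so $f(S)\in[\Size S,\Size S+\varepsilon)$.

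A degenerate case needs separate handling: $E=\emptyset$, where $S=\emptyset$ is a cover. Then the target is the only action, and paying constant $s_k=c_\TargetAction$ gives pay $c_\TargetAction<\varepsilon$.

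\textbf{Lower bound when $S$ is not a vertex cover.} Take $S\subseteq V$ with an uncovered edge $e$, so both endpoints of $e$ lie outside $S$. By the second observation above, every signal and inspected outcome has the same distribution under $e$ as under the target action. Hence $T_e=T_\TargetAction$ for every contract with this inspection set. The IC constraint would then require $-c_\TargetAction\ge 0$, which is impossible because $c_\TargetAction>0$. So the target is not implementable, and $f(S)=\infty\ge\Size V+1$.

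\textbf{Main obstacle.} The arithmetic is routine; the delicate point is bookkeeping. I must check that the equal-distribution argument covers all signals: the uninspected ones, the inspected vertices not adjacent to $e$, and the dummy. I also need to fix the convention for $S$ containing the dummy, which is settled by the cost bound, and for infeasible $S$ (value $+\infty$). The inequality $\Size S/(\Size V+1)<1$, which makes the upper bound strict, should also be stated explicitly.
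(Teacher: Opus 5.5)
Your proposal is correct and follows essentially the same route as the paper. The upper bound uses the same explicit contract: pay $\varepsilon$ for outcome $B$ on each inspected vertex, so the inspection cost is exactly $|S|$, the transfer is $|S|\varepsilon/(|V|+1)<\varepsilon$, and IC holds because every edge action loses at least one $\varepsilon/(|V|+1)$ term. The converse uses the same observation: an uncovered edge action has the same combined outcome distribution as the target unless the dummy signal is inspected, and the dummy alone costs $|V|+1$. Your extra bookkeeping is harmless and makes explicit what the paper leaves implicit: the $E=\emptyset$ case (which your general contract already covers vacuously) and the $+\infty$ convention for infeasible $S$.
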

\begin{proof}
Consider the contract that pays $s_k=0$ for all signals, and $t_v=(0,\varepsilon)$ for all inspected signals $v\in S$.
For any edge action $e \in E$, the agent's utility satisfies:
\begin{align*}
U_A(\TargetAction)-U_A(e) 
&=
\sum_{v\in S} q^0_{\TargetAction,v} t_{v,B} - \underbrace{c_{\TargetAction}}_{=\frac{\varepsilon}{\Size V +1}} - \sum_{v\in S\setminus \adj(e)} q^0_{e,v} t_{v,B}  + \underbrace{c_e}_{=0}
\\
&=
\frac{\varepsilon}{\Size V +1} \underbrace{\left(\Size S - \Size{S\setminus \adj(e)}\right)}_{\text{$\ge 1$ as $S$ covers $e$}}  - \frac{\varepsilon}{\Size V + 1}
\\&\ge 0
\end{align*}
and therefore the contract implements the target action. The expected cost of inspection is:
\begin{align*}
D_\TargetAction 
&= 
\sum_k q^0_{\TargetAction,k} p_k d_k
\\&=
\frac{\Size S}{\Size V + 1} (\Size V + 1)
\\&=
\Size S
\end{align*}
The expected monetary transfer from the principal to the agent is non-negative by definition, and it also holds that:
\begin{align*}   
T_\TargetAction
&\le \sum_k q^0_{\TargetAction,k}\left(  (1-p_k)s_k + p_k \sum_j q^k_{\TargetAction, j} t_{k,j} \right)
\\
&=
\frac{\Size S}{\Size V+1}\varepsilon
\\
&< \varepsilon
\end{align*}
and hence $f(S)=T_\TargetAction +D_\TargetAction = \left[\Size S, \Size S +\varepsilon \right)$ as required.

Conversely, if the set of inspected signals is not a vertex cover of $G$, we first note that 
the dummy signal must be inspected, because otherwise there exist some edge $e$ which is not covered by the set of inspected signals, and the combined outcome distribution of action $e$ is identical to the combined outcome distribution of the target action, and is the action $e$ therefore an infeasibility witness.
Then, if the set of inspected signals contains the dummy signal (i.e., $p_\DummySignal=1$), then the expected cost of inspection satisfies $\expect{}{p_k d_k} \ge q^0_{i,\DummySignal} d_\DummySignal = \Size V+1$ as required.
\end{proof}

Returning to the proof of Theorem~\ref{thmHardnessOfApproximation}, it follows from \cite{IndSetHard} that it is NP-hard, for any constant $\alpha > 1$, to distinguish between the following two cases for any given graph $G$ and integer $k$:

\noindent \textbf{YES instance}: $G$ contains an independent set of size at least $k$.

\noindent \textbf{NO instance}: Every independent set of $G$ has size strictly less than $k/\alpha$.

Suppose toward a contradiction that we can approximate the optimal principal utility to within a factor of $\alpha$ in polynomial time. Then we could distinguish between the YES and NO instances as follows:
\begin{itemize}
    \item In the YES case, there exists an independent set $I \subseteq V$ with $|I| \geq k$. Hence, the complement set $V \setminus I$ is a vertex cover with size $|V|-|I| \leq |V|-k$. By Lemma \ref{prop:vertex_cover_expected_pay}, choosing to inspect signals corresponding to this vertex cover yields a principal's optimal expected payment satisfying
\[
f(V \setminus I) \leq |V|-k + \varepsilon.
\]
Thus, by Lemma~\ref{prop:vertex_cover_expected_reward}, the optimal principal's utility (reward minus pay) is at least
\[
R_{\text{target}} - f(V \setminus I) \geq (|V| + \varepsilon) - (|V|-k + \varepsilon) = k.
\]
Since our approximation algorithm achieves an $\alpha$-approximation, it must return a solution with expected principal utility at least $k/\alpha$.

\item In the NO case, every independent set has size less than $k/\alpha$, hence the smallest vertex cover has size at least $|V|-k/\alpha + 1$. By Lemma \ref{prop:vertex_cover_expected_pay}, any set that is not a vertex cover incurs an expected pay of at least $|V|+1$, making it clearly non-optimal. Therefore, the optimal solution corresponds to inspecting a vertex cover of size at least $|V|-k/\alpha + 1$, and thus the principal's optimal utility is at most
\[
R_{\text{target}} - (|V|-k/\alpha + 1) = (|V| + \varepsilon) - (|V|-k/\alpha + 1) = k/\alpha - 1 + \varepsilon < k/\alpha.
\]
Any solution returned by the algorithm in this case must therefore have utility strictly less than $k/\alpha$.
\end{itemize}

Thus, we can distinguish the YES and NO cases, contradicting the hardness of approximating Independent Set. \qed